\theoremstyle{plain}
\newtheorem{thm}{\protect\theoremname}
\theoremstyle{plain}
\theoremstyle{remark}
\providecommand{\lemmaname}{Lemma}
\providecommand{\theoremname}{Theorem}
\providecommand{\remarkname}{Remark}
\setlist[description]{style=multiline,topsep=4pt,align=parright}%,font=\normalfont
\let\reftagform@=\tagform@
\def\tagform@#1{\maketag@@@{(\ignorespaces\textcolor{black}{#1}\unskip\@@italiccorr)}}
\newcommand{\iref}[1]{\textup{\reftagform@{\tcr{\ref{#1}}}}}
\begin{document}
	
%% TITLE %%%%%%%%%%%%%%%%%%%%%%%%%%%%%%%%%%
\title{Quantitative relations among causality measures with
	applications to nonlinear pulse-output network reconstruction}

\author[a,b,c]{Zhong-qi K. Tian}
\author[a,b,c]{Kai Chen} 
\author[a,b,c,1]{Songting Li} 
\author[d,e,f,g,h,1]{David W. McLaughlin}
\author[a,b,c,1]{Douglas Zhou}

\affil[a]{School of Mathematical Sciences, Shanghai Jiao Tong University, Shanghai 200240, China}
\affil[b]{Institute of Natural Sciences, Shanghai Jiao Tong University, Shanghai 200240, China}
\affil[c]{Ministry of Education Key Laboratory of Scientific and Engineering Computing, Shanghai Jiao Tong University, Shanghai 200240,
	China}

\affil[d]{Courant Institute of Mathematical Sciences, New York University, New York, NY 10012}
\affil[e]{Center for Neural Science, New York University, New York, NY 10012}
\affil[f]{Institute of Mathematical Sciences, New York University Shanghai, Shanghai 200122,
	China}
\affil[g]{New York University Tandon School of Engineering, New York University, Brooklyn, NY 11201}
\affil[h]{Neuroscience Institute of New York University
	Langone Health, New York University, New York, NY 10016}

\date{}

\maketitle
%\begin{affiliations}
%	\item School of Mathematical Sciences, MOE-LSC, and Institute of Natural Sciences, Shanghai Jiao Tong University, Shanghai, China	
%\end{affiliations}

%%%%%%%%%%%%%%%%%%%%%%%%%%%%%%%%%%%
\begin{abstract}
The causal connectivity of a network is often inferred to understand the network function. It is arguably acknowledged that the inferred causal connectivity relies on causality measure one applies, and it may differ from the network's underlying structural connectivity. However, the interpretation of causal connectivity remains to be fully clarified, in particular, how causal connectivity depends on causality measures and how causal connectivity relates to structural connectivity.
Here, we focus on nonlinear networks with pulse signals as
measured output, $e.g.$, neural networks with spike output, and
address the above issues based on four intensively utilized causality measures, $i.e.$, time-delayed correlation, time-delayed
mutual information, Granger causality, and transfer entropy. We theoretically
show how these causality measures are related to one another when applied to pulse signals.  
Taking the simulated Hodgkin-Huxley neural network and the real mouse brain network as
two illustrative examples, we further verify the quantitative relations among the four causality measures and demonstrate that the causal connectivity inferred by any of the four well coincides with the underlying network structural connectivity,
therefore establishing a direct
link between the causal and structural connectivity. 
We stress that the structural connectivity of networks can be reconstructed pairwisely without conditioning on the global information
of all other nodes in a network, thus circumventing the curse of dimensionality. Our framework provides a practical and effective approach for pulse-output network reconstruction. 
\end{abstract}

\begin{keywords}
	causality,  correlation, mutual information, Granger causality, transfer entropy, neural networks
\end{keywords}

\section*{Introduction}
The structural connectivity of a network is of great importance to
understand the cooperation and competition among nodes in the network.\cite{misic2016from}
However, it is often difficult to directly measure the structural
connectivity of a network, for instance, the brain network. 
On the other hand, with the development of experimental techniques, it becomes feasible to record high-temporal-resolution activities of nodes in a network simultaneously \cite{stringer2019spontaneous, wheeler2011in, stosiek2003vivo}.
This provides a possibility to reveal the underlying structural connectivity
by analyzing the nodes' activity data and identifying the causal interactions (connectivity) among nodes \cite{honey2009predicting, honey2010can, suarez2020linking}.
One of the most widely used statistical indicator for interaction identification is the correlation coefficient
\cite{benesty2009pearson,eisen1998cluster,ito2011extending} that
characterizes linear dependence between two nodes. Correlation coefficient is
symmetric thus cannot distinguish the driver-recipient relation to recover the
causal connectivity. To solve this, time-delayed correlation coefficient
(TDCC) \cite{bedenbaugh1997multiunit,ito2011extending} was introduced
being able to detect the direction of causal connectivity. However, TDCC, as a
linear measure, may fail to capture causal interactions in nonlinear
networks.  As a nonlinear generalization of TDCC, time-delayed
mutual information (TDMI) \cite{vastano1988information,schreiber2000measuring,frenzel2007partial}
as a model-free measure was proposed to measure the flow of information in nonlinear networks. 
%TDMI values are non-negative
%and equal zero if and only if the considered two variables are independent
%of each other. 
Despite the mathematical simplicity and computational efficiency of TDCC and TDMI, as pointed out in previous works, both measures cannot exclude the historical effect of signals
and may encounter the issue of overestimation \cite{frenzel2007partial,schreiber2000measuring,mclaughlin2004use}.
Granger causality (GC) \cite{granger1969investigating,bressler2011wiener,barnett2018solved}
and transfer entropy (TE) \cite{schreiber2000measuring,bossomaier2016introduction,borge2016dynamics} were another
two measures to detect the causal connectivity  with the exclusion of
the historical effect of signal's own. GC is based on linear regression models that assumes
the causal relation can be revealed by analyzing low-order statistics (up
to the variance) of signals. Consequently, the validity of GC for nonlinear networks is in general questionable
\cite{li2018causal}. In contrast, TE is a nonparametric information-theoretic
measure that quantifies the causal interactions with no assumption of interaction models. However, it requires the estimation of the probability distribution of dynamical variables conditioning on the historical information in networks, which makes TE suffer 
from the curse of dimensionality in practical applications to network systems with many nodes \cite{runge2012escaping,newell2016mass,bach2017breaking}.

Despite the broad application of the above causality measures, to interpret the results,
two major theoretical issues remain to be clarified.
First,  what are the mathematical relations among these causality measures, $i.e.$, TDCC, TDMI, GC, and TE? 
It has been reported that the causal connectivity inferred by different causality measures in general can be inconsistent with one another \cite{marbach2010revealing,de2010advantages,zou2009granger}. 
Therefore, it is vital to understand the relation between different causal connectivity from those measures given certain conditions.
We note that TE has been proven to be equivalent to GC for Gaussian variables \cite{Barnett2009}, yet the mathematical relation among the four measures remains to be elucidated
for general variables. Second, what is the relation between the causal connectivity and the structural connectivity? 
Note that the causal connectivity inferred by these measures
is statistical rather than structural \cite{koch2002investigation,seth2005causal,schiele2013specific},
$i.e.$, the causal connectivity quantifies directed statistical correlation or dependence
among network nodes, whereas the structural connectivity corresponds
to physical connections among network nodes.  Therefore, it remains unclear whether the structural connectivity can be reconstructed from the causal connectivity in general. 

In this work, we address these questions by investigating a general class of  nonlinear networks with pulse
signals as measured output, which we term as pulse-output networks. We first reveal the relations
among TDCC, TDMI, GC, and TE with rigorous mathematical proofs. By taking the simulated Hodgkin-Huxley (HH) neural network and the real mouse brain network as two illustrative examples, we then verify the mathematical relations among the four causality measures numerically. We further demonstrate that the underlying
structural connectivity of these networks can be recovered from the causal connectivity
inferred using any of the above four causality measures.
We emphasize that the structural connectivity can be reconstructed pairwisely without conditioning on the 
global information of all other nodes, and thus circumvents the curse of dimensionality. 
Therefore, the reconstruction method based on these four causality measures can be 
applied to the reconstruction of structural connectivity in large-scale pulse-output nonlinear systems or subsystems. 

\section*{Results}

\subsection*{Concepts of generalized pairwise TDCC, TDMI, GC, and TE}

Consider a nonlinear network of $N$ nodes with dynamics given
by

\begin{equation}
\frac{d\mathbf{Z}}{dt}=\mathbf{F}(\mathbf{Z},t),\label{eq:general network}
\end{equation}
where $\mathbf{Z}=(Z_{1},Z_{2},...,Z_{N})$. We focus on the application of TDCC, TDMI, GC, and TE to each pair of nodes without conditioning on the rest of nodes in the network, accounting for the practical constraint that conditional causality measure in general requires the information of the whole network that is often difficult to observe. For the ease of illustration, we denote a pair of nodes as $X=Z_{i}$ and $Y=Z_{j}$, and their measured time series as $\{x_{n}\}$ and $\{y_{n}\}$, respectively.

TDCC \cite{bedenbaugh1997multiunit,ito2011extending},  as a function
of time delay $m$, is defined by 
\begin{equation*}
C(X,Y;m)=\frac{\text{cov}(x_{n},y_{n-m})}{\sigma_{x}\sigma_{y}},\label{eq:TDCC}
\end{equation*}
where ``cov'' represents the covariance, $\sigma_{x}$ and $\sigma_{y}$
are the standard deviations of $\{x_{n}\}$ and $\{y_{n}\}$, respectively.
A positive (negative) value of $m$ indicates the calculation of causal value
from $Y$ to $X$ (from $X$ to $Y$), and nonzero $C(X,Y;m)$ indicates the existence of causal interaction between $X$ and $Y$.
Without loss of generality, we consider the case of positive $m$ in the following discussions, that is, the causality measure from $Y$ to $X$.

In contrast to the linear measure TDCC, TDMI is a model-free method being able
to characterize nonlinear causal interactions \cite{vastano1988information,schreiber2000measuring,frenzel2007partial}.
TDMI from $Y$ to $X$ is defined by 
\begin{equation}
I(X,Y;m)=\sum_{x_{n},y_{n-m}}p(x_{n},y_{n-m})\log\frac{p(x_{n},y_{n-m})}{p(x_{n})p(y_{n-m})},\label{eq:MI}
\end{equation}
where $p(x_{n},y_{n-m})$ is the joint probability distribution of
$x_{n}$ and $y_{n-m}$, $p(x_{n})$ and $p(y_{n-m})$ are the corresponding
marginal probability distributions. $I(X,Y;m)$ is non-negative and
vanishes if and only if $x_{n}$ and $y_{n-m}$ are independent \cite{frenzel2007partial}.
Nonzero $I(X,Y;m)$ implies the existence of causal interaction
from $Y$ to $X$ for a positive $m$.

It has been noted that TDCC and TDMI could overestimate the causal interactions when a signal has a long memory  \cite{frenzel2007partial,schreiber2000measuring,mclaughlin2004use}.
As an alternative, GC was proposed to overcome the issue of overestimation based on linear regression
\cite{granger1969investigating,guo2008uncovering,bressler2011wiener}.
The auto-regression for $X$ is
represented by $x_{n+1}=a_{0}+\sum_{i=1}^{k}a_{i}x_{n+1-i}+\epsilon_{n+1},$
where $\{a_{i}\}$ are the auto-regression coefficients and $\epsilon_{n+1}$
is the residual. By including the historical information of $Y$ with a message
length $l$ and a time-delay $m$, the joint regression for $X$ is
represented by $x_{n+1}=\tilde{a}_{0}+\sum_{i=1}^{k}\tilde{a}_{i}x_{n+1-i}+\sum_{j=1}^{l}b_{j}y_{n+2-m-j}+\eta_{n+1},$
where $\{\tilde{a}_{i}\}$ and $\{b_{j}\}$ are the joint regression
coefficients, and $\eta_{n+1}$ is the corresponding residual. 
If there exists a causal interaction from $Y$ to $X$, then the prediction of $X$ using the linear regression models shall be improved by additionally incorporating the historical information of $Y$. Accordingly, the variance of residual $\eta_{n+1}$ is smaller than that of $\epsilon_{n+1}$. Based on this concept, the
GC value from $Y$ to $X$ is defined by 

\begin{equation*}
G_{Y\rightarrow X}(k,l;m)=\log\frac{\textrm{Var}(\epsilon_{n+1})}{\textrm{Var}(\eta_{n+1})}.\label{eq:GC}
\end{equation*}
The GC value is also non-negative and vanishes if and only
if $\{b_{j}\}=0$, $i.e.$, the variance of residual $\epsilon_{n+1}$ for
$X$ cannot be reduced by including the historical information of $Y$. 
Note that, by introducing the time-delay parameter $m$, the GC analysis defined above generalizes the conventional GC analysis, as the latter corresponds to the special case of $m=1$. 

GC assumes that the causal interaction can be fully captured by the variance reduction in the linear regression models, which is valid for Gaussian signals but not for more general signals.
As a nonlinear extension of GC, TE was proposed to describe the causal interaction
from the information theoretic perspective \cite{schreiber2000measuring}.
The TE value from $Y$ to $X$ is defined by

\begin{equation}
\begin{aligned}
T_{Y\rightarrow X}(k,l;m) =&\sum_{x_{n+1},x_{n}^{(k)},y_{n+1-m}^{(l)}}p(x_{n+1},x_{n}^{(k)},y_{n+1-m}^{(l)})\\
& \cdot\log\frac{p(x_{n+1}|x_{n}^{(k)},y_{n+1-m}^{(l)})}{p(x_{n+1}|x_{n}^{(k)})},
\end{aligned}
\label{eq:TE}
\end{equation}
where the shorthand notation $x_{n}^{(k)}=(x_{n},x_{n-1},...,x_{n-k+1})$
and $y_{n+1-m}^{(l)}=(y_{n+1-m},y_{n-m},...,y_{n+2-m-l})$, $k,l$
indicate the length (order) of historical information of $X$ and $Y$,
respectively. Similar to GC, the time-delay parameter $m$ is introduced that generalizes the conventional TE, the latter of which corresponds to the case of $m=1$.
TE is non-negative
and vanishes if and only if $p(x_{n+1}|x_{n}^{(k)},y_{n+1-m}^{(l)})=p(x_{n+1}|x_{n}^{(k)})$,
$i.e.,$ the uncertainty of $x_{n+1}$ is not affected regardless
of whether the historical information of $Y$ is taken into account.

%\subsection*{Pulse-output signals in nonlinear networks}
In this work, we investigate the mathematical relations among TDCC, TDMI, GC, and TE by focusing on nonlinear networks described by Eq. \ref{eq:general network} with pulse signals as measured output, $e.g.$, the spike trains measured in neural networks. 
Consider a pair of nodes $X$ and $Y$ in the network of
$N$ nodes, and denote their
pulse-output signals by 
\begin{equation}
w_{x}(t)=\sum_{l}\delta(t-\tau_{xl}) \,\,\,\, \text{ and } \,\,\,\, w_{y}(t)=\sum_{l}\delta(t-\tau_{yl}),
\label{eq:setup}
\end{equation}
%$w_{x}(t)=\sum_{l}\delta(t-\tau_{xl})$ and $w_{y}(t)=\sum_{l}\delta(t-\tau_{yl})$,
respectively, where $\delta(\cdot)$ is the Dirac delta function, and $\{\tau_{xl}\}$ and $\{\tau_{yl}\}$ are the output time sequences of nodes $X$ and $Y$ determined by Eq. \ref{eq:general network}, respectively. With the sampling resolution of $\Delta t$, 
the pulse-output signals are measured as binary time series $\{x_{n}\}$
and $\{y_{n}\}$, where $x_n=1$ ($y_n=1$) if there is a pulse signal, $e.g.$, a spike generated by a neuron, 
of $X$ ($Y$) occurred in the time window $[t_n,t_{n}+\Delta t)$, and $x_n=0$ ($y_n=0$) otherwise, $i.e.$,

\begin{equation}\label{eq:setup1}
x_n=\int_{t_n}^{t_{n}+\Delta t}w_x(t)dt \quad \text{and} \quad  y_n=\int_{t_n}^{t_{n}+\Delta t}w_y(t)dt,
\end{equation}
and $t_n=n\Delta t$. 
Note that the value of $\Delta t$ is often chosen to make sure that there is at most one pulse signal in a single time window.   
In the stationary resting state, the responses $x_n$ and $y_n$ can be viewed as stochastic processes when the network is driven by stochastic external inputs. 
In such a case, for the sake of simplicity, we denote $p_x=p(x_n=1)$, $p_y=p(y_n=1)$, and $\Delta p_{m}=\frac{p(x_{n} = 1,y_{n-m} = 1)}{p(x_{n} = 1 )p(y_{n-m} = 1)}-1$
that measures the dependence between $x_{n}$ and $y_{n-m}$. 
%We will discuss the extension of our results for nonlinear networks with continuous output signals in the section of Discussion.

\subsection*{Mathematical relation between TDMI and TDCC}

For the relation between TDCC and TDMI when applied to the above nonlinear networks with pulse-output signals, we prove the following theorem:
\begin{thm}
	For nodes $X$ and $Y$ with pulse-output signals given in Eqs. \ref{eq:setup} and \ref{eq:setup1}, we have 
	
	\begin{equation}
	I(X,Y;m)=\frac{C^{2}(X,Y;m)}{2}+O(\Delta t^{2}\Delta p_{m}^{3}),\label{eq:MI vs CC}
	\end{equation}
	where the symbol $``O"$ stands for the order. \label{thm: MI vs CC}
\end{thm}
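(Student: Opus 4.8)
The plan is to exploit the binary nature of the samples $x_n,y_n\in\{0,1\}$ and then expand in the two natural small parameters: the bin width $\Delta t$, which forces $p_x=O(\Delta t)$ and $p_y=O(\Delta t)$ because there is at most one pulse per window, and the dependence strength $\Delta p_m$. First I would evaluate the TDCC in closed form. For binary variables $\text{cov}(x_n,y_{n-m})=p(x_n=1,y_{n-m}=1)-p_xp_y=p_xp_y\Delta p_m$, while $\sigma_x=\sqrt{p_x(1-p_x)}$ and $\sigma_y=\sqrt{p_y(1-p_y)}$, so that
\[
C(X,Y;m)=\Delta p_m\sqrt{\frac{p_xp_y}{(1-p_x)(1-p_y)}},
\]
and hence $\tfrac12 C^{2}(X,Y;m)=\tfrac12 p_xp_y\Delta p_m^{2}\,(1+O(\Delta t))$. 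Thus the target leading term is of order $\Delta t^{2}\Delta p_m^{2}$, and any discrepancy must be shown to sit at the next order in $\Delta p_m$.

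Next I would write $I(X,Y;m)$ as the explicit four-term sum over $(x_n,y_{n-m})\in\{0,1\}^{2}$, parametrizing the joint law by $p_x,p_y,\Delta p_m$: $p_{11}=p_xp_y(1+\Delta p_m)$, $p_{10}=p_x\bigl(1-p_y(1+\Delta p_m)\bigr)$, $p_{01}=p_y\bigl(1-p_x(1+\Delta p_m)\bigr)$, and $p_{00}=(1-p_x)(1-p_y)+p_xp_y\Delta p_m$. The corresponding logarithmic factors, which I denote $L_{11},L_{10},L_{01},L_{00}$, simplify to $\log(1+\Delta p_m)$, $\log\!\bigl(1-\tfrac{p_y\Delta p_m}{1-p_y}\bigr)$, $\log\!\bigl(1-\tfrac{p_x\Delta p_m}{1-p_x}\bigr)$, and $\log\!\bigl(1+\tfrac{p_xp_y\Delta p_m}{(1-p_x)(1-p_y)}\bigr)$, respectively.

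The crux is a coupled expansion retaining only the leading $O(p_xp_y)=O(\Delta t^{2})$ content. The key observation is that in the off-diagonal terms $p_{10}L_{10}$ and $p_{01}L_{01}$ every power of $\Delta p_m$ beyond the first is accompanied by an extra factor of $p_y$ or $p_x$, and is therefore of higher order in $\Delta t$; at order $\Delta t^{2}$ these two terms contribute only $-p_xp_y\Delta p_m$ each, while $p_{00}L_{00}$ contributes $+p_xp_y\Delta p_m$. Together with the linear part of the diagonal term $p_{11}L_{11}$, the contributions linear in $\Delta p_m$ cancel exactly, so the entire $O(\Delta t^{2})$ content collapses to $p_xp_y\bigl[(1+\Delta p_m)\log(1+\Delta p_m)-\Delta p_m\bigr]$, whose Taylor expansion is $\tfrac12 p_xp_y\Delta p_m^{2}-\tfrac16 p_xp_y\Delta p_m^{3}+\cdots$.

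Comparing with $\tfrac12 C^{2}$ computed above, the $\Delta p_m^{2}$ terms match identically and the first uncontrolled term is $-\tfrac16 p_xp_y\Delta p_m^{3}=O(\Delta t^{2}\Delta p_m^{3})$, which is exactly the claimed remainder. The main obstacle is the bookkeeping of the coupled expansion: one must argue cleanly that (i) the sub-leading powers of $\Delta p_m$ in the off-diagonal entries are genuinely suppressed by $\Delta t$ and hence absorbed into the remainder, and (ii) the linear-in-$\Delta p_m$ pieces cancel, which is what guarantees that mutual information is quadratic rather than linear near independence and thereby yields the clean identification with $\tfrac12 C^{2}(X,Y;m)$.
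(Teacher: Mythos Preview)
Your approach---writing out the four binary cells explicitly and Taylor-expanding near independence---is exactly the paper's strategy, which it phrases as expanding $p(x_n,y_{n-m})\log\bigl(1+\tfrac{p(x_n,y_{n-m})}{p(x_n)p(y_{n-m})}-1\bigr)$ in the small quantity $u:=p/(p(x_n)p(y_{n-m}))-1$. The cancellation of the linear-in-$\Delta p_m$ pieces that you highlight is precisely the identity $\sum p(x_n)p(y_{n-m})u=0$.

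There is, however, a bookkeeping issue in your remainder control. By truncating simultaneously in $\Delta t$ and $\Delta p_m$, you discard the off-diagonal quadratic contributions (for instance $\tfrac12\,p_xp_y^{2}\Delta p_m^{2}/(1-p_y)$ from the $(1,0)$ cell) as $O(\Delta t^{3}\Delta p_m^{2})$, and you likewise replace $\tfrac12 C^{2}=\tfrac12\,p_xp_y\Delta p_m^{2}/[(1-p_x)(1-p_y)]$ by its leading part $\tfrac12\,p_xp_y\Delta p_m^{2}$. Each step leaves an error of order $\Delta t^{3}\Delta p_m^{2}$, which is \emph{not} dominated by $\Delta t^{2}\Delta p_m^{3}$ without an extra assumption relating $\Delta t$ and $\Delta p_m$; so as written your argument proves only $I-\tfrac12 C^{2}=O(\Delta t^{2}\Delta p_m^{3})+O(\Delta t^{3}\Delta p_m^{2})$.

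The paper sidesteps this by expanding solely in $u\propto\Delta p_m$ while keeping $p_x,p_y$ exact. The full quadratic term is then $\tfrac12\sum p(x_n)p(y_{n-m})u^{2}$, and summing the four cells collapses it \emph{identically} to
\[
\frac{[p(x_n=1,y_{n-m}=1)-p_xp_y]^{2}}{2\,p_x(1-p_x)\,p_y(1-p_y)}=\frac{C^{2}(X,Y;m)}{2},
\]
with no $\Delta t$-truncation at this order. The entire remainder then sits at cubic and higher order in $u$, where the $(1,1)$ cell dominates and gives $O(p_xp_y\Delta p_m^{3})=O(\Delta t^{2}\Delta p_m^{3})$. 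Your argument is easily repaired by this reorganisation: compute the $\Delta p_m^{2}$ coefficient of $I$ exactly rather than only at leading order in $\Delta t$, observe that it equals $\tfrac12 C^{2}$ on the nose, and the stated remainder follows.
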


\begin{proof}
	The basic idea is to Taylor expand TDMI in Eq. \ref{eq:MI} with respect to the term $\frac{p(x_{n},y_{n-m})}{p(x_{n})p(y_{n-m})}-1$ (the detailed derivation can be found in \textit{\textcolor{blue}{SI Appendix, Supplementary Information Text 1B}}), then we arrive at the following expression:	
	
	%	\begin{equation*}
	%	\begin{aligned} & I(X,Y;m)=\sum_{x_{n},y_{n-m}}p(x_{n})p(y_{n-m})\left[1+\left(\frac{p(x_{n},y_{n-m})}{p(x_{n})p(y_{n-m})}-1\right)\right]\\
	%	& \cdot\log\left[1+\left(\frac{p(x_{n},y_{n-m})}{p(x_{n})p(y_{n-m})}-1\right)\right]\\
	%	& =\frac{\left[p(x_{n}=1,y_{n-m}=1)-p_{x}p_{y}\right]{}^{2}}{2(p_{x}-p_{x}^{2})(p_{y}-p_{y}^{2})}+O(\Delta t^{2}\Delta p_{m}^{3}).
	%	\end{aligned}
	%	\label{eq:derivation MI CC 1}
	%	\end{equation*}
	
	\begin{equation*}
	\begin{aligned} 
	I(X,Y;m)&=\sum_{x_{n},y_{n-m}}p(x_{n},y_{n-m})\log\!\left(\!1+\frac{p(x_{n},y_{n-m})}{p(x_{n})p(y_{n-m})}-1\!\right)\\
	& =\frac{\left[p(x_{n}=1,y_{n-m}=1)-p_{x}p_{y}\right]{}^{2}}{2(p_{x}-p_{x}^{2})(p_{y}-p_{y}^{2})}+O(\Delta t^{2}\Delta p_{m}^{3}).
	\end{aligned}
	\label{eq:derivation MI CC 1}
	\end{equation*}
	By definition,
	\begin{equation}
	C(X,Y;m)=\frac{p(x_{n}=1,y_{n-m}=1)-p_{x}p_{y}}{\sqrt{(p_{x}-p_{x}^{2})(p_{y}-p_{y}^{2})}},\label{eq:derivaiton MI CC 2}
	\end{equation}
	we have
	
	\begin{equation*}
	I(X,Y;m)=\frac{C^{2}(X,Y;m)}{2}+O(\Delta t^{2}\Delta p_{m}^{3}).
	\end{equation*}
\end{proof}

\subsection*{Mathematical relation between GC and TDCC}

%To derive the relation between GC and TDCC, we first consider the auto-correlation
%function (ACF) of the binary time series $\{x_{n}\}$ and $\{y_n\}$. The ACF of  $\{x_{n}\}$ is defined as
%
%\begin{equation}
%\textrm{ACF}(X;m)=\frac{\text{cov}(x_{n},x_{n-m})}{\sigma_{x}^{2}},
%\end{equation}
%where $m$ is a non-zero time delay. Let $g_{x}(t)$ be the probability
%density function that node $X$ will generate a pulse at time $t$ given
%that it produced a pulse at time $t=0$. Then we have $p(x_{n}=1|x_{n-m}=1)=O(g_{z}(m\Delta t)\Delta t)$. In general, the function $g_{x}(t)$ is continuous and bounded, thus
%one obtains 
%\begin{equation}
%\textrm{ACF}(X;m)=\frac{p(x_{n}=1|x_{n-m}=1)-p_{x}}{1-p_{x}}=O(\Delta t).\label{eq:ACF vs dt}
%\end{equation}
%The complete deduction of Eq. \ref{eq:ACF vs dt} is in \textit{\textcolor{blue}{SI Appendix}}
%and is numerically verified (See \textit{\textcolor{blue}{SI Appendix}}, \textcolor{blue}{Fig.
%S1}) for the HH network  described in \textit{Materials
%and Methods}. 
%Based on this, we derive the relation between GC and TDCC as follows:

We next derive the relation between GC and TDCC as follows:
\begin{thm} \label{thm: GC vs CC}
	For nodes $X$ and $Y$ with pulse-output signals given in Eqs. \ref{eq:setup} and \ref{eq:setup1}, we have 
	\begin{equation}
	G_{Y\rightarrow X}(k,l;m)=\sum_{i=m}^{m+l-1}C^{2}(X,Y;i)+O(\Delta t^{3}\Delta p_{m}^{2}).\label{eq:GC vs CC}
	\end{equation}
\end{thm}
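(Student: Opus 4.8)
The plan is to turn the Granger-causality value into the two residual variances and then to express the variance reduction produced by the history of $Y$ as a partial-covariance quadratic form whose entries are exactly the time-delayed cross-covariances defining TDCC. Write the centered regressors as $\mathbf{u}=(x_{n},\dots,x_{n-k+1})$ for the history of $X$ and $\mathbf{v}=(y_{n+1-m},\dots,y_{n+2-m-l})$ for the history of $Y$, with target $z=x_{n+1}$, and let $\Sigma_{uu},\Sigma_{vv},\Sigma_{uv},\Sigma_{zu},\Sigma_{zv}$ be the corresponding covariance blocks. Least-squares minimization gives $\textrm{Var}(\epsilon_{n+1})=\textrm{Var}(z)-\Sigma_{zu}\Sigma_{uu}^{-1}\Sigma_{uz}$, and the partitioned-inverse (Frisch--Waugh--Lovell) identity yields
\begin{equation*}
\textrm{Var}(\epsilon_{n+1})-\textrm{Var}(\eta_{n+1})=\Sigma_{zv\cdot u}\,\Sigma_{vv\cdot u}^{-1}\,\Sigma_{vz\cdot u},
\end{equation*}
with partial covariances $\Sigma_{zv\cdot u}=\Sigma_{zv}-\Sigma_{zu}\Sigma_{uu}^{-1}\Sigma_{uv}$ and $\Sigma_{vv\cdot u}=\Sigma_{vv}-\Sigma_{vu}\Sigma_{uu}^{-1}\Sigma_{uv}$. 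The $j$-th entry of $\Sigma_{zv}$ is $\textrm{cov}(x_{n+1},y_{n+2-m-j})$, which by stationarity equals $C(X,Y;m+j-1)\,\sigma_{x}\sigma_{y}$, so these raw cross-covariances are precisely the TDCC values at delays $i=m,\dots,m+l-1$.

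Next I would control every block using the rarity of pulses at small $\Delta t$. Choosing $\Delta t$ so that each window holds at most one pulse forces the single-window probabilities $p_{x},p_{y}=O(\Delta t)$, hence $\sigma_{x}^{2}=p_{x}-p_{x}^{2}=O(\Delta t)$ and $\sigma_{y}^{2}=O(\Delta t)$, while any two-window joint probability is $O(\Delta t^{2})$. Therefore the autocovariance blocks are diagonally dominant, $\Sigma_{uu}=\sigma_{x}^{2}I+O(\Delta t^{2})$ and $\Sigma_{vv}=\sigma_{y}^{2}I+O(\Delta t^{2})$, so $\Sigma_{uu}^{-1}=\sigma_{x}^{-2}I+O(1)=O(\Delta t^{-1})$; the target--history covariances obey $\Sigma_{zu}=O(\Delta t^{2})$; and the cross-covariances obey $\Sigma_{uv},\Sigma_{zv}=O(\Delta t^{2}\Delta p_{m})$. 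It follows that $\Sigma_{zu}\Sigma_{uu}^{-1}\Sigma_{uv}=O(\Delta t^{3}\Delta p_{m})$ is a factor $O(\Delta t)$ below $\Sigma_{zv}$, so $\Sigma_{zv\cdot u}=\Sigma_{zv}\,(1+O(\Delta t))$; likewise $\Sigma_{vu}\Sigma_{uu}^{-1}\Sigma_{uv}=O(\Delta t^{3}\Delta p_{m}^{2})$ is negligible against the $O(\Delta t)$ diagonal of $\Sigma_{vv}$, so $\Sigma_{vv\cdot u}^{-1}=\sigma_{y}^{-2}I$ up to relative error $O(\Delta t)$.

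Assembling these estimates collapses the quadratic form to its diagonal,
\begin{equation*}
\begin{aligned}
\textrm{Var}(\epsilon_{n+1})-\textrm{Var}(\eta_{n+1})&=\frac{1}{\sigma_{y}^{2}}\sum_{j=1}^{l}\textrm{cov}(x_{n+1},y_{n+2-m-j})^{2}+O(\Delta t^{4}\Delta p_{m}^{2})\\
&=\sigma_{x}^{2}\sum_{i=m}^{m+l-1}C^{2}(X,Y;i)+O(\Delta t^{4}\Delta p_{m}^{2}),
\end{aligned}
\end{equation*}
since the relative $O(\Delta t)$ errors from $\Sigma_{zv\cdot u}\approx\Sigma_{zv}$ and from the off-diagonal part of $\Sigma_{vv\cdot u}$ act on a leading term of size $\sigma_{x}^{2}\sum_{i}C^{2}=O(\Delta t^{3}\Delta p_{m}^{2})$. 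The same estimates make the reduced-model reduction $\Sigma_{zu}\Sigma_{uu}^{-1}\Sigma_{uz}=O(\Delta t^{3})$, so $\textrm{Var}(\eta_{n+1})=\sigma_{x}^{2}(1+O(\Delta t^{2}))$. Finally I would write $G_{Y\rightarrow X}=\log\!\big(1+[\textrm{Var}(\epsilon_{n+1})-\textrm{Var}(\eta_{n+1})]/\textrm{Var}(\eta_{n+1})\big)$, divide the two displays to get $u:=[\textrm{Var}(\epsilon_{n+1})-\textrm{Var}(\eta_{n+1})]/\textrm{Var}(\eta_{n+1})=\sum_{i}C^{2}(X,Y;i)+O(\Delta t^{3}\Delta p_{m}^{2})=O(\Delta t^{2}\Delta p_{m}^{2})$, and apply $\log(1+u)=u+O(u^{2})$; the remainder $O(u^{2})=O(\Delta t^{4}\Delta p_{m}^{4})$ is absorbed, giving $G_{Y\rightarrow X}(k,l;m)=\sum_{i=m}^{m+l-1}C^{2}(X,Y;i)+O(\Delta t^{3}\Delta p_{m}^{2})$.

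The main obstacle is the uniform bookkeeping: one must check that every partitioned-inverse correction, every off-diagonal autocovariance, the reduced-model reduction, and the logarithmic remainder are all of order $O(\Delta t^{3}\Delta p_{m}^{2})$ or smaller after the final division by $\sigma_{x}^{2}$, which rests entirely on the two structural facts that single-window pulse probabilities are $O(\Delta t)$ whereas two-window joint probabilities are $O(\Delta t^{2})$ (equivalently, the underlying point process has a bounded conditional intensity). A secondary subtlety is that the dependence strengths $\Delta p_{i}$ at the various delays $i=m,\dots,m+l-1$ need not coincide; treating them as a common scale $\Delta p_{m}$ is the implicit regime under which the stated remainder is sharp.
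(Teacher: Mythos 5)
Your proof is correct and follows essentially the same route as the paper: GC is written as a log-ratio of residual variances (i.e., partial covariances), the covariance blocks are shown to be nearly diagonal because pulse-output autocovariances are $O(\Delta t^{2})$ against the $O(\Delta t)$ variances, the surviving diagonal quadratic form is identified with $\sigma_x^2\sum_{i}C^{2}(X,Y;i)$, and the logarithm is Taylor expanded at the end. Your invocation of the Frisch--Waugh--Lovell partitioned-inverse identity is just a repackaging of the paper's direct expansion of the two Schur complements $\Gamma(x_{n+1}|x_{n}^{(k)})$ and $\Gamma(x_{n+1}|x_{n}^{(k)}\oplus y_{n+1-m}^{(l)})$, and your ``bounded conditional intensity'' observation is exactly the paper's lemma that the ACF of the binary series is $O(\Delta t)$, so the two arguments coincide in substance.
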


\begin{proof}
	From the definition, GC can be represented by the covariances of the signals \cite{Barnett2009} as
	\begin{equation}
	G_{Y\rightarrow X}(k,l;m) =\log\frac{\Gamma(x_{n+1}|x_{n}^{(k)})}{\Gamma(x_{n+1}|x_{n}^{(k)}\oplus y_{n+1-m}^{(l)})},
	\label{eq:GC comp}
	\end{equation}
	where $\Gamma(\mathbf{x}|\mathbf{y})=\textrm{cov}(\mathbf{x})-\textrm{cov}(\mathbf{x},\mathbf{y})\textrm{cov}(\mathbf{y})^{-1}\textrm{cov}(\mathbf{x},\mathbf{y})^{T}$
	for random vectors $\mathbf{x}$ and $\mathbf{y}$, $\textrm{cov}(\mathbf{x})$ and $\textrm{cov}(\mathbf{y})$
	denote the covariance matrix of $\mathbf{x}$ and $\mathbf{y}$, respectively, and $\textrm{cov}(\mathbf{x},\mathbf{y})$
	denote the cross-covariance matrix between $\mathbf{x}$ and $\mathbf{y}$. The symbol $T$ is the transpose
	operator and $\oplus$ denotes the concatenation of vectors. 
	
	We first prove that the auto-correlation function (ACF) of binary time series $\{x_n\}$ as a function of time delay takes the order of $\Delta t$ 
	(see \textit{\textcolor{blue}{SI Appendix, Supplementary Information Text 1C}} for the proof, \textcolor{blue}{Fig. S1}). Accordingly, we have
	
	\begin{equation*}
	\textrm{cov}(x_{n}^{(k)})  =\sigma_{x}^{2}(\mathbf{I}+\hat{\mathbf{A}}),
	\end{equation*}
	where $\hat{\mathbf{A}}=(\hat{a}_{ij}),\hat{a}_{ij}=O(\Delta t)$, and $\mathbf{I}$ is the identity matrix. 
	Hence,
	\begin{equation} \label{eq:derivation GC CC 1}
	\begin{aligned} 
	\Gamma(x_{n+1}|x_{n}^{(k)})=&\sigma_{x}^{2}-\frac{1}{\sigma_{x}^{2}}\textrm{cov}(x_{n+1},x_{n}^{(k)})(\mathbf{I}-\hat{\mathbf{A}})\\
	& \cdot\textrm{cov}(x_{n+1},x_{n}^{(k)})^{T}+O(\Delta t^{5}).
	\end{aligned}
	\end{equation}
	In the same way, we have
	\begin{equation}
	\begin{aligned} 
	\Gamma(&x_{n+1}|x_{n}^{(k)}\oplus y_{n+1-m}^{(l)})\\
	= &\sigma_{x}^{2}-\frac{1}{\sigma_{x}^{2}}\textrm{cov}(x_{n+1},x_{n}^{(k)})(\mathbf{I}-\hat{\mathbf{A}})\textrm{cov}(x_{n+1},x_{n}^{(k)})^{T}\\
	& -\frac{1}{\sigma_{y}^{2}}\textrm{cov}(x_{n+1},y_{n+1-m}^{(l)})(\mathbf{I}-\hat{\mathbf{B}})\textrm{cov}(x_{n+1},y_{n+1-m}^{(l)})^{T}\\
	& +O(\Delta t^{4}\Delta p_{m}^{2}).
	\end{aligned}
	\label{eq:derivation GC CC 2}
	\end{equation}
	where $\hat{\mathbf{B}}=(\hat{b}_{ij}),\hat{b}_{ij}=O(\Delta t)$.  
	Substituting Eqs. \ref{eq:derivation GC CC 1} and \ref{eq:derivation GC CC 2} into Eq. \ref{eq:GC comp} and Taylor expanding Eq. \ref{eq:GC comp} with respect to $\Delta t$, we can obtain
	\begin{equation}
	G_{Y\rightarrow X}(k,l;m)=\sum_{i=m}^{m+l-1}C^{2}(X,Y;i)+O(\Delta t^{3}\Delta p_{m}^{2}).\label{eq:derivation GC CC 3}
	\end{equation}
	The detailed derivation of Eqs. \ref{eq:derivation GC CC 1}, \ref{eq:derivation GC CC 2},
	and \ref{eq:derivation GC CC 3} can be found in \textit{\textcolor{blue}{SI Appendix, Supplementary Information Text 1C.}} 
\end{proof}

\subsection*{Mathematical relation between TE and TDMI}
From the generalized definitions of TE and TDMI, TE can be regarded as a generalization of TDMI conditioning on the signals' historical information additionally. 
To rigorously establish their relationship,
we require that $\Bigl\Vert x_{n+1}^{(k+1)}\Bigr\Vert_{0}\leq1$ and $\Bigl\Vert y_{n+1-m}^{(l)}\Bigr\Vert_{0}\leq1$ in the definition of TE given in Eq. \ref{eq:TE},
where $\Bigl\Vert \cdot \Bigr\Vert_{0}$ denotes the $l_{0}$ norm of a vector, $i.e.$, the number of nonzero elements in a vector. 
This assumption indicates that the length of historical information used in the TE framework is shorter than the minimal time interval between two consecutive pulse-output signals. 
This condition is often satisfied for nonlinear network with output signals. 
Consequently, we have the following theorem:
\begin{thm}  \label{thm: TE vs TDMI}
	For nodes $X$ and $Y$ with pulse-output signals given in Eqs. \ref{eq:setup} and \ref{eq:setup1}, we have
	\begin{equation}
	T_{Y\rightarrow X}(k,l;m)=\sum_{i=m}^{m+l-1}I(X,Y;i)+O(\Delta t^{3}\Delta p_{m}^{2}),\label{eq:TE vs MI}
	\end{equation}
	where $T_{Y\rightarrow X}$ is defined in Eq. \ref{eq:TE} with the assumption that $\Bigl\Vert x_{n+1}^{(k+1)}\Bigr\Vert_{0}\leq1$ and $\Bigl\Vert y_{n+1-m}^{(l)}\Bigr\Vert_{0}\leq1$.
\end{thm}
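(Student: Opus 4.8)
The plan is to recognize the transfer entropy as a conditional mutual information and then peel off the conditioning and the vector structure using the two sparsity hypotheses. Writing $\mathbf{Y}=y_{n+1-m}^{(l)}$ and $\mathbf{H}=x_{n}^{(k)}$, the definition in Eq.~\ref{eq:TE} is exactly $T_{Y\rightarrow X}(k,l;m)=I(x_{n+1};\mathbf{Y}\mid\mathbf{H})$, the mutual information between $x_{n+1}$ and $\mathbf{Y}$ conditioned on the $X$-history $\mathbf{H}$. By Theorem~\ref{thm: MI vs CC} each scalar TDMI satisfies $I(X,Y;i)=O(\Delta t^{2}\Delta p_{m}^{2})$, since for rare pulses with $p_{x},p_{y}=O(\Delta t)$ one has $C(X,Y;i)=O(\Delta t\,\Delta p_{m})$; hence the target identity asserts agreement up to a \emph{relative} error of order $\Delta t$. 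Accordingly I would track every contribution to the leading order $\Delta t^{2}\Delta p_{m}^{2}$ and show that all remaining pieces are $O(\Delta t^{3}\Delta p_{m}^{2})$.

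First I would remove the conditioning on $\mathbf{H}$ using $\|x_{n+1}^{(k+1)}\|_{0}\le1$. Expanding $I(x_{n+1};\mathbf{Y}\mid\mathbf{H})=\sum_{\mathbf{h}}p(\mathbf{H}=\mathbf{h})\,I(x_{n+1};\mathbf{Y}\mid\mathbf{H}=\mathbf{h})$, the sparsity restricts $\mathbf{H}$ to the all-zero state (probability $1-kp_{x}$) and the single-spike states $e_{r}$ (probability $p_{x}$ each). The crucial simplification is that whenever $\mathbf{H}=e_{r}$, the constraint $\|x_{n+1}^{(k+1)}\|_{0}\le1$ forces $x_{n+1}=0$ deterministically, so $I(x_{n+1};\mathbf{Y}\mid\mathbf{H}=e_{r})=0$ and only the all-zero term survives. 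It then remains to compare $I(x_{n+1};\mathbf{Y}\mid\mathbf{H}=\mathbf{0})$ with the unconditioned $I(x_{n+1};\mathbf{Y})$; since $\{\mathbf{H}=\mathbf{0}\}$ has probability $1-O(\Delta t)$, conditioning on it perturbs the joint law of $(x_{n+1},\mathbf{Y})$ only by $O(\Delta t)$ relative factors, and because the mutual information is quadratic in the (already $O(\Delta t\,\Delta p_{m})$) correlations this yields $I(x_{n+1};\mathbf{Y}\mid\mathbf{H})=I(x_{n+1};\mathbf{Y})+O(\Delta t^{3}\Delta p_{m}^{2})$, where the leftover weight $kp_{x}\,I(x_{n+1};\mathbf{Y})$ is itself $O(\Delta t^{3}\Delta p_{m}^{2})$.

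Next I would decompose the vector mutual information $I(x_{n+1};\mathbf{Y})$ into the scalar TDMIs using $\|y_{n+1-m}^{(l)}\|_{0}\le1$. The admissible states of $\mathbf{Y}$ are the all-zero vector and the single-spike vectors with $y_{n+1-m-j}=1$ for $j=0,\dots,l-1$; the delay from $x_{n+1}$ to that spike is exactly $i=m+j$, which is precisely what produces the summation range $\sum_{i=m}^{m+l-1}$. Splitting the mutual-information sum over these states, the single-spike contributions reproduce exactly the parts of the individual $I(X,Y;m+j)$ carrying $y_{n+1-m-j}=1$, while the remainder — the single all-zero term of $I(x_{n+1};\mathbf{Y})$ together with the $l$ separate zero-spike parts of the $I(X,Y;m+j)$ — are each, by the Taylor estimate behind Theorem~\ref{thm: MI vs CC}, of order $\Delta t^{3}\Delta p_{m}^{2}$ and hence absorbable into the error. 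This gives $I(x_{n+1};\mathbf{Y})=\sum_{i=m}^{m+l-1}I(X,Y;i)+O(\Delta t^{3}\Delta p_{m}^{2})$, and combining with the previous paragraph establishes Eq.~\ref{eq:TE vs MI}.

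The main obstacle I anticipate is the conditioning-removal step: one must show rigorously that replacing the true conditional law $p(\cdot\mid\mathbf{H}=\mathbf{0})$ by the unconditional law $p(\cdot)$ shifts the mutual information only at the higher order $\Delta t^{3}\Delta p_{m}^{2}$ rather than at the leading order $\Delta t^{2}\Delta p_{m}^{2}$. Making this precise amounts to bounding the three-way interaction information among $x_{n+1}$, $\mathbf{Y}$, and $\mathbf{H}$, and the delicate point is to verify that the $O(\Delta t)$ relative perturbation of the probabilities acts \emph{multiplicatively} on the quadratic correlation term — so that one genuinely gains a factor $\Delta t$ — rather than contributing additively at the leading order. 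A second, more bookkeeping-heavy point is confirming that the all-zero and zero-spike remainder terms in the vector decomposition really are $O(\Delta t^{3}\Delta p_{m}^{2})$; this follows the same expansion as in Theorem~\ref{thm: MI vs CC}, but requires careful tracking of the near-unit probabilities $p(\mathbf{Y}=\mathbf{0})$ and $p(y_{n+1-m-j}=0)$ sitting in the denominators so that they do not lower the order.
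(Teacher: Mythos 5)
Your proposal is correct in substance, but it organizes the argument differently from the paper. The paper's proof starts from an exact algebraic identity, $T_{Y\rightarrow X}(k,l;m)=\sum_{i=m}^{m+l-1}I(X,Y;i)+\mathcal{A}+\mathcal{B}$, where $\mathcal{A}$ measures the failure of the $y$-history to factorize into its marginals (conditioned on $x_{n+1}$ versus not) and $\mathcal{B}$ is the genuine three-way term involving $x_{n+1}$, $x_n^{(k)}$, and $y_{n+1-m}^{(l)}$; it then Taylor-expands every term of $\mathcal{A}$ and $\mathcal{B}$ under the sparsity assumptions and shows that the individual leading-order contributions (each of size $O(\Delta t^{3}\Delta p_{m})$) cancel pairwise, leaving $O(\Delta t^{3}\Delta p_{m}^{2})$. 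Your two steps are exactly the probabilistic counterparts of these two corrections: your conditioning-removal step is the paper's $\mathcal{B}$, and your vector-to-scalar decomposition is the paper's $\mathcal{A}$. What your route buys is transparency about why the sparsity hypotheses matter: $\bigl\Vert x_{n+1}^{(k+1)}\bigr\Vert_{0}\le 1$ makes single-spike histories force $x_{n+1}=0$ (so those conditional MIs vanish identically), and $\bigl\Vert y_{n+1-m}^{(l)}\bigr\Vert_{0}\le 1$ makes the events $\{\mathbf{Y}=e_{j}\}$ and $\{y_{n+1-m-j}=1\}$ coincide, so the spike-state terms of the vector MI match the scalar TDMIs exactly rather than approximately. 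What the paper's route buys is that all residuals are explicit from the outset and the verification is mechanical. Two remarks on the points you flagged. First, your ``main obstacle'' does resolve in the way you anticipate: conditioning on $\{x_n^{(k)}=\mathbf{0}\}$ multiplies the dependence ratio $p(x_{n+1}=1,\mathbf{Y}=e_j)/[p(x_{n+1}=1)p(\mathbf{Y}=e_j)]$ by $\bigl(1+\Delta_{YH}\bigr)^{-1}$ where $\Delta_{YH}$ is the $\mathbf{Y}$--history dependence, which is itself $O(\Delta t\,\Delta p_{m})$; hence the perturbation of the correlation is relative, and the quadratic MI shifts only by $O(\Delta t^{3}\Delta p_{m}^{2})$ — this is precisely the cancellation the paper carries out inside $\mathcal{B}$. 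Second, your claim that the zero-state remainders are ``each'' $O(\Delta t^{3}\Delta p_{m}^{2})$ is only true if each such term is summed over both values of $x_{n+1}$: the $(x_{n+1}=1,\,y=0)$ and $(x_{n+1}=0,\,y=0)$ pieces individually carry linear-in-$\Delta p_{m}$ contributions of order $\Delta t^{2}\Delta p_{m}$ that cancel only within the group, after which the surviving quadratic piece is $O(\Delta t^{3}\Delta p_{m}^{2})$; this is the same intra-group cancellation the paper invokes for $\mathcal{A}$, so your bookkeeping caveat is warranted but fillable.
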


\begin{proof}
	To simplify the notations, we denote $x_{n}^{(k)}=(x_{n},x_{n-1},...,x_{n-k+1})$
	and $y_{n+1-m}^{(l)}=(y_{n+1-m},y_{n-m},...,y_{n+2-m-l})$ as $x^{-}$ and $y^{-}$, respectively.
	From Eq. \ref{eq:TE}, we have
	\begin{equation}
	\begin{aligned} 
	T_{Y\rightarrow X}(k,l;m)\!&=\!\!\sum_{x_{n+1},x^{-},y^{-}}\!p(x_{n+1},x^{-},y^{-})\log\frac{p(x_{n+1}|x^{-},y^{-})}{p(x_{n+1}|x^{-})}\\
	& =\sum_{i=m}^{m+l-1}I(X,Y;i)+\mathcal{A}+\mathcal{B},
	\end{aligned}
	\label{eq:derivation TE MI}
	\end{equation}
	where \[\mathcal{A}=\sum_{x_{n+1},y^{-}}p(x_{n+1},y^{-})\log\frac{p(y^{-}|x_{n+1})}{\prod_{j}p(y_{j}|x_{n+1})}\frac{\prod_{j}p(y_{j})}{p(y^{-})}\]
	and \[\mathcal{B}=\!\!\sum_{x_{n+1},x^{-},y^{-}}p(x_{n+1},x^{-},y^{-})\log\frac{p(x_{n+1}|x^{-},y^{-})}{p(x_{n+1}|y^{-})}\frac{p(x_{n+1})}{p(x_{n+1}|x^{-})}\]		
	and $\prod_{j}$ in $\mathcal{A}$ represents $\prod_{j=n+2-m-l}^{n+1-m}$. The detailed derivation of Eq. \ref{eq:derivation TE MI} can be found in \textit{\textcolor{blue}{SI Appendix, Supplementary Information Text 1D.}} 
	$\mathcal{A}$ and $\mathcal{B}$ in Eq. \ref{eq:derivation TE MI}
	consist of multiple terms and the leading order of each term can be analytically calculated. 
	For the sake of illustration, below we derive the leading order of one of these terms and the leading order of the rest terms can be estimated in a similar way.
	Under the assumption that $\Bigl\Vert x_{n+1}^{(k+1)}\Bigr\Vert_{0}\leq1$ and $\Bigl\Vert y_{n+1-m}^{(l)}\Bigr\Vert_{0}\leq1$, the number of nonzero components is at most one in $x_{n+1}^{(k+1)}$ and $y_{n+1-m}^{(l)}$. 
	Without loss of generality, we assumed $x_{n+1}=1$ and $y_{n+1-m}=1$. In such a case, we can obtain the following expression in $\mathcal{A}$,
	
	\begin{equation*}
	\begin{aligned}
	p(x_{n+1},y^{-})&\log\frac{p(y^{-}|x_{n+1})}{\prod_{j}p(y_{j}|x_{n+1})}\frac{\prod_{j}p(y_{j})}{p(y^{-})}\biggl|_{x_{n+1}=1,y_{n+1-m}=1}\\
	& =p_{x}p_{y}^{2}\sum_{j\neq n+1-m}\Delta p_{n+1-j}+O(\Delta t^{3}\Delta p_{m}^{2}).
	\end{aligned}
	\end{equation*}
	We can further show that the leading order of all the terms in $\mathcal{A}$  happen to cancel each other out (\textit{\textcolor{blue}{SI Appendix, Supplementary Information Text 1D}}), thus we have $\mathcal{A}=O(\Delta t^{3}\Delta p_{m}^{2})$.
	Similarly, we can also show $\mathcal{B}=O(\Delta t^{3}\Delta p_{m}^{2})$
	(\textit{\textcolor{blue}{SI Appendix, Supplementary Information Text 1D}}), and thus 
	
	\begin{equation*}
	T_{Y\rightarrow X}(k,l;m)=\sum_{i=m}^{m+l-1}I(X,Y;i)+O(\Delta t^{3}\Delta p_{m}^{2}).
	\end{equation*}
\end{proof}

\subsection*{Mathematical relation between GC and TE}
From Theorems \ref{thm: MI vs CC}-\ref{thm: TE vs TDMI}, we can straightforwardly prove the following theorem:

\begin{thm}  \label{thm: TE vs GC}
	For nodes $X$ and $Y$ with pulse-output  signals given in Eqs. \ref{eq:setup} and \ref{eq:setup1}, we have
	\begin{equation}
	G_{Y\rightarrow X}(k,l;m)=2T_{Y\rightarrow X}(k,l;m)+O(\Delta t^{2}\Delta p_{m}^{3}),\label{eq:TE vs GC}
	\end{equation}
	where $T_{Y\rightarrow X}$ is defined in Eq. \ref{eq:TE} with the assumption that $\Bigl\Vert x_{n+1}^{(k+1)}\Bigr\Vert_{0}\leq1$ and $\Bigl\Vert y_{n+1-m}^{(l)}\Bigr\Vert_{0}\leq1$.
\end{thm}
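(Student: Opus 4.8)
The plan is to obtain Theorem \ref{thm: TE vs GC} purely by chaining the three relations already established, since all of them hold at an arbitrary time delay and over the same block of delays $i=m,\dots,m+l-1$. No fresh Taylor expansion is required; the whole task is to substitute one leading-order identity into another and then reconcile the remainder orders. The conceptual reason the factor $2$ appears is Theorem \ref{thm: MI vs CC}: since $I(X,Y;i)=\tfrac{1}{2}C^2(X,Y;i)+O(\Delta t^2\Delta p_i^3)$, a summed squared correlation is, to leading order, exactly twice a summed mutual information.

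First I would start from Theorem \ref{thm: GC vs CC}, writing $G_{Y\rightarrow X}(k,l;m)=\sum_{i=m}^{m+l-1}C^2(X,Y;i)+O(\Delta t^3\Delta p_m^2)$. Next I would apply Theorem \ref{thm: MI vs CC} termwise to replace each $C^2(X,Y;i)$ by $2I(X,Y;i)+O(\Delta t^2\Delta p_i^3)$, and then collapse $\sum_{i=m}^{m+l-1}I(X,Y;i)$ back into $T_{Y\rightarrow X}(k,l;m)$ using Theorem \ref{thm: TE vs TDMI}, which introduces a further $O(\Delta t^3\Delta p_m^2)$. Collecting the three substitutions yields

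\begin{equation*}
G_{Y\rightarrow X}(k,l;m)=2T_{Y\rightarrow X}(k,l;m)+O(\Delta t^3\Delta p_m^2)+\sum_{i=m}^{m+l-1}O(\Delta t^2\Delta p_i^3).
\end{equation*}

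The hard part will be squeezing the leftover remainders into the single claimed order $O(\Delta t^2\Delta p_m^3)$, and this is where I expect the real work to lie. Two reductions are needed. First, the sum over the $l$ delays of $O(\Delta t^2\Delta p_i^3)$ must be folded into one $O(\Delta t^2\Delta p_m^3)$; this is legitimate because $l$ is a fixed finite length and each $\Delta p_i$ is of the same scale as the dependence parameter $\Delta p_m$ controlling the expansion. Second, and more delicately, the inherited $O(\Delta t^3\Delta p_m^2)$ from Theorems \ref{thm: GC vs CC} and \ref{thm: TE vs TDMI} must be absorbed by $O(\Delta t^2\Delta p_m^3)$; since I cannot appeal to cancellation between two independent big-$O$ remainders, I would instead argue directly that $\Delta t^3\Delta p_m^2=O(\Delta t^2\Delta p_m^3)$, i.e. that $\Delta t=O(\Delta p_m)$, in the operating regime.

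I would justify this from the pulse-output scaling: because $p_x,p_y=O(\Delta t)$ one gets $C(X,Y;m)=O(\Delta t\,\Delta p_m)$, so $\Delta p_m$ rather than $\Delta t$ is the natural parameter quantifying resolvable dependence. In particular, in the fine-sampling limit $\Delta t\to0$ at fixed coupling one has $\Delta t\ll\Delta p_m$, whence $\Delta t^2\Delta p_m^3$ dominates $\Delta t^3\Delta p_m^2$ and the latter is absorbed, leaving exactly Eq. \ref{eq:TE vs GC}. Confirming that this is indeed the intended regime, and hence that the two remainder terms may legitimately be reported as a single order, is the one genuinely non-mechanical point in an otherwise routine composition of the earlier theorems.
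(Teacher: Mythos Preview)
Your approach is exactly what the paper does: it states only that Theorem~\ref{thm: TE vs GC} follows ``straightforwardly'' from Theorems~\ref{thm: MI vs CC}--\ref{thm: TE vs TDMI} and gives no further argument, so your chaining of the three identities is precisely the intended proof. You in fact go further than the paper by explicitly tracking and reconciling the remainder orders; your resolution via $\Delta t=O(\Delta p_m)$ is consistent with the paper's own observation that $\Delta p_m$ is insensitive to the sampling resolution $\Delta t$, so the $O(\Delta t^3\Delta p_m^2)$ term is indeed subdominant in the intended regime.
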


Note that in Theorems \ref{thm: TE vs TDMI} and \ref{thm: TE vs GC} we have the extra assumption $\Bigl\Vert x_{n+1}^{(k+1)}\Bigr\Vert_{0}\leq1$ and $\Bigl\Vert y_{n+1-m}^{(l)}\Bigr\Vert_{0}\leq1$. As numerically verified below, the extra assumption is often easily satisfied to achieve the successful application of these causality measures. For example, the memory time of the neuron is about 20 ms, while the inter-spike interval is around 100 ms. The extra assumption is to rigorously establish the relations in Theorems \ref{thm: TE vs TDMI} and \ref{thm: TE vs GC}, but is not a necessary condition (\textit{\textcolor{blue}{SI Appendix}}, \textcolor{blue}{Fig. S2}). 
We summarize the relations among the four causality measures in Fig. \ref{fig:TGIC relation}.

\begin{figure}[!htp]
	\begin{centering}
		\includegraphics[width=0.5\linewidth]{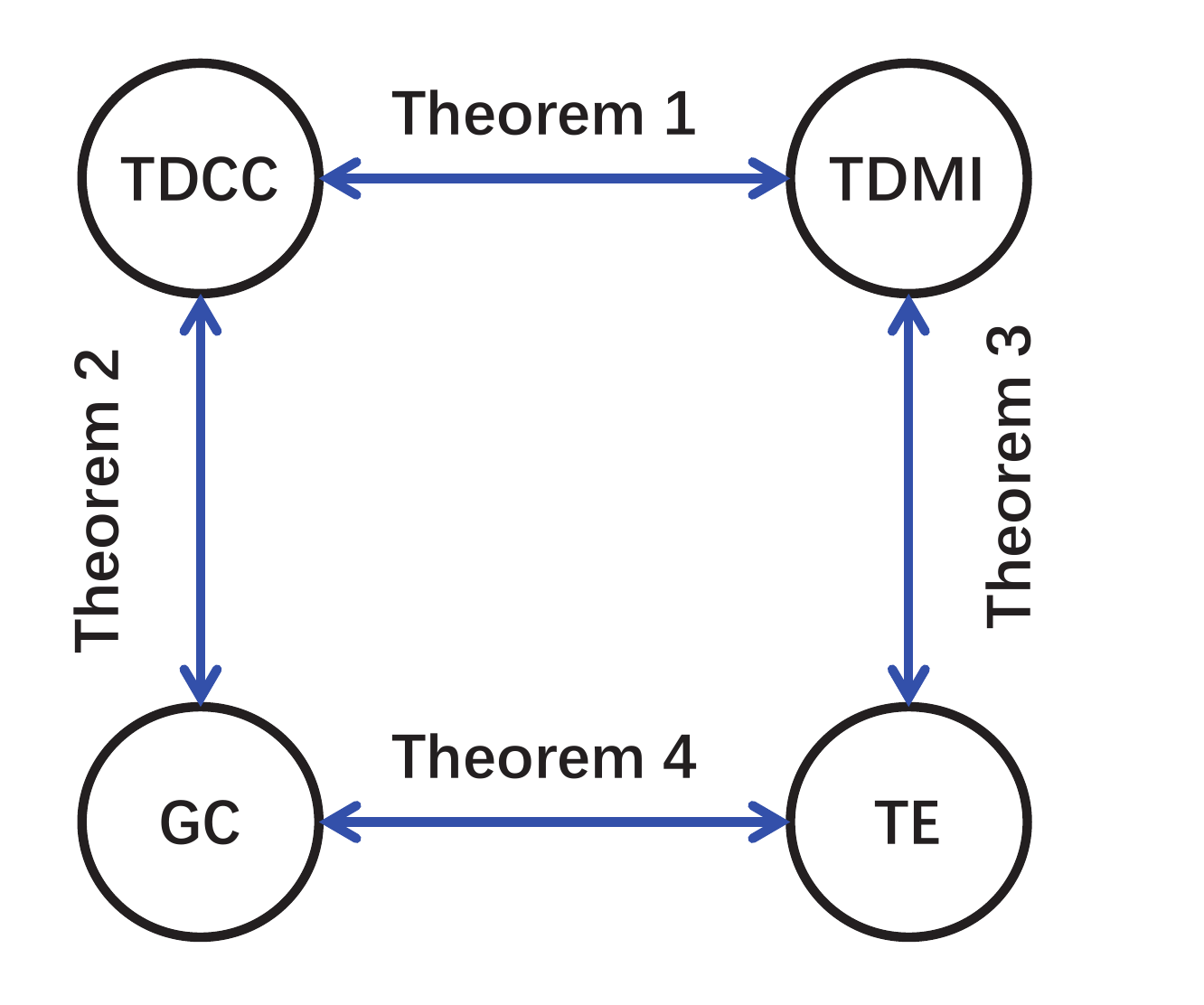}
		\par\end{centering}
	\caption{Mathematical relations among TDCC, TDMI, GC, and TE.
		\label{fig:TGIC relation} }
\end{figure}

\subsection*{Mathematical relations of causality measures verified in HH neural networks}
To verify the relations among the causality measures derived above, as an illustrative example,
we apply generalized pairwise TDCC, TDMI, GC, and TE to the HH neural network described in \hyperref[sec:Materials-and-Methods]{\textit{Materials and Methods}}. 
We first consider a pair of neurons denoted by $X$ and $Y$ with unidirectional connection from $Y$ to $X$ in an HH network containing 10 excitatory neurons driven by homogeneous Poisson inputs.
Let $\{\tau_{xl}\}$ and $\{\tau_{yl}\}$ be the ordered spike times of neuron $X$ and $Y$ in the HH network respectively and denote their spike trains as $w_{x}(t)=\sum_{l}\delta(t-\tau_{xl})$ and $w_{y}(t)=\sum_{l}\delta(t-\tau_{yl})$, respectively.
With a sampling resolution of $\Delta t$, the spike train is measured as a binary time series as described above.
To numerically verify the above theorems, we then check the order of the remainders in Eqs. \ref{eq:MI vs CC},
\ref{eq:GC vs CC}, \ref{eq:TE vs MI}, and \ref{eq:TE vs GC} in terms of $\Delta t$ and $\Delta p_{m}$.
Note that $\Delta p_{m}$, the measure of the dependence between $X$ and $Y$, is insensitive to sampling resolution $\Delta t$ (\textit{\textcolor{blue}{SI Appendix}}, \textcolor{blue}{Fig. S3}).
Therefore, by varying sampling interval $\Delta t$ and coupling strength $S$ (linearly related to $\Delta p_m$) respectively, the orders of the remainders are consistent with those derived in Eqs. \ref{eq:MI vs CC}, \ref{eq:GC vs CC}, \ref{eq:TE vs MI}, and \ref{eq:TE vs GC} (Fig. \ref{fig:conv test}).
In addition, Fig. \ref{fig:parameters} verifies the relations among the causality measures by changing other parameters. For example, in Eqs. \ref{eq:GC vs CC}, \ref{eq:TE vs MI}, and \ref{eq:TE vs GC}, the four causality measures are proved to be independent of the historical length $k$, which is numerically verified in Fig. \ref{fig:parameters}\textit{A}. 
And although the values of GC and TE rely on the historical length $l$ (Fig. \ref{fig:parameters}\textit{B}), the mathematical relations among the four causality measures revealed by Theorems \ref{thm: MI vs CC}-\ref{thm: TE vs GC} hold for a wide range of $l$. 
We next verify the mathematical relations among the causality measures for the parameter of time delay $m$ by fixing parameters $k$ and $l$. In principle, the value of $k$ and $l$ in GC and TE shall be determined by the historical memory of the system. To reduce the computational cost
\cite{gourevitch2007evaluating,vicente2011transfer,ito2011extending}, we take $k=l=1$ for all the results below. It turns out that this parameter choice works well for pulse-output networks because of the short memory effect in general, as will be further discussed later. 
Fig. \ref{fig:parameters}\textit{C} shows the mathematical relations hold for a wide range of time delay parameter used in computing the four causality measures.

\begin{figure}[!htp]
	\begin{centering}
		\includegraphics[width=1\linewidth]{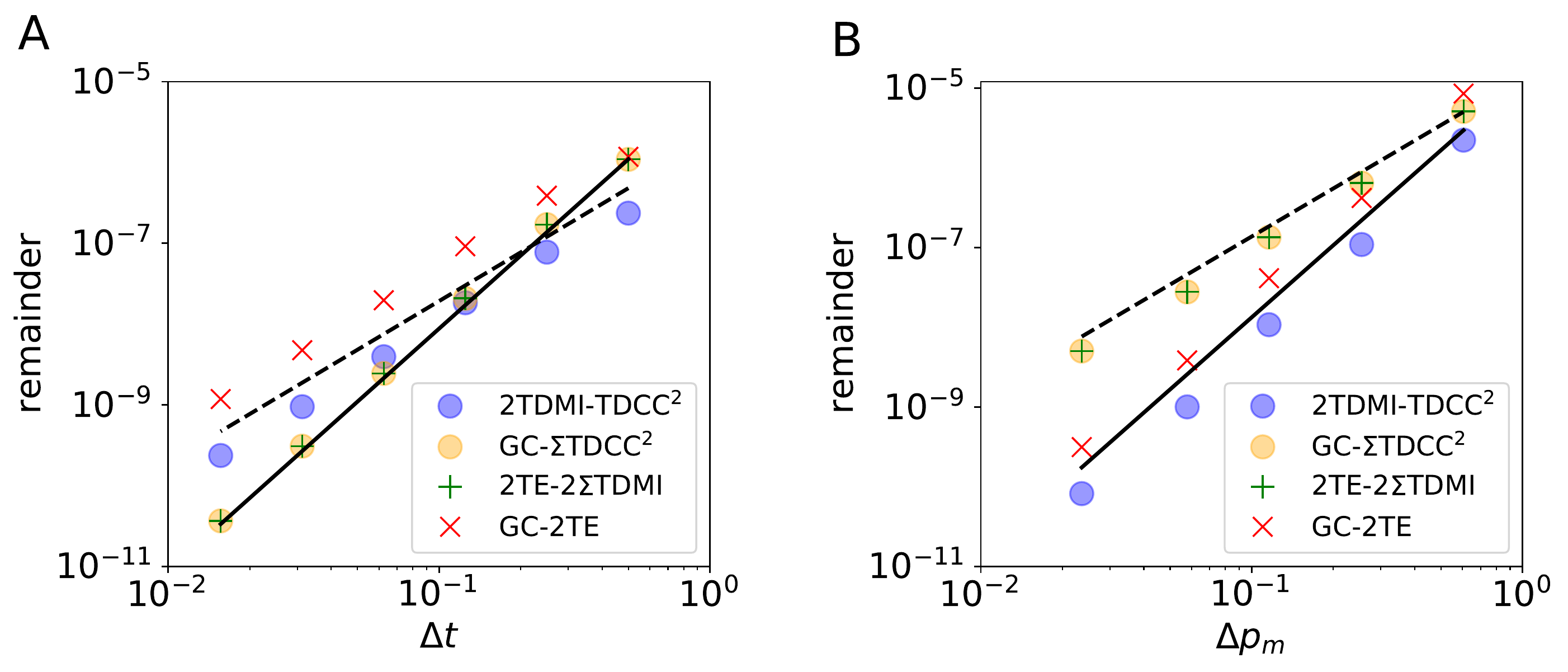}
		\par\end{centering}
	\caption{Convergence order of the remainders in terms of (\textit{A}) $\Delta t$ and (\textit{B}) $\Delta p_{m}$
		obtained from causality measures from neuron $Y$ to neuron $X$ with unidirectional connection from $Y$ to $X$ in an HH network of 10 excitatory neurons randomly connected with probability 0.25. The convergence orders for $\Delta t$ in (\textit{A}) fit with which shows in Theorems \ref{thm: MI vs CC}-\ref{thm: TE vs GC} ($R^2>0.998$), while those for $\Delta p_m$ in (\textit{B}) fit the values derived in Theorems \ref{thm: MI vs CC}-\ref{thm: TE vs GC} ($R^2>0.998$). The gray dashed and solid lines indicate the $2^\mathrm{nd}$-order and $3^\mathrm{rd}$-order convergence, respectively.
		The parameters are set as $k=l=5$ and $m=6$ (time delay is 3 ms), $S=0.02$ $\textrm{mS\ensuremath{\cdot}cm}^{-2}$ in (\textit{A}), and $\Delta t=0.5$ ms in (\textit{B}). 
		\label{fig:conv test} }
\end{figure}

We further examin the robustness of the mathematical relations among
TDCC, TDMI, GC, and TE by scanning the parameters of the coupling strength $S$ between the HH neurons and external Poisson input strength and rates. As shown in Fig. \ref{fig:parameters}\textit{D}, the values of the four causality measures with different coupling strength are very close to one another. Their relations also hold for a wide range of external Poisson input parameters (\textit{\textcolor{blue}{SI Appendix}},
\textcolor{blue}{Fig. S4}).
From the above, the mathematical relations among TDCC, TDMI, GC, and TE described in Theorems \ref{thm: MI vs CC}-\ref{thm: TE vs GC} are verified in the HH network.

\begin{figure}[!htp]
	\begin{centering}
		\includegraphics[width=1\linewidth]{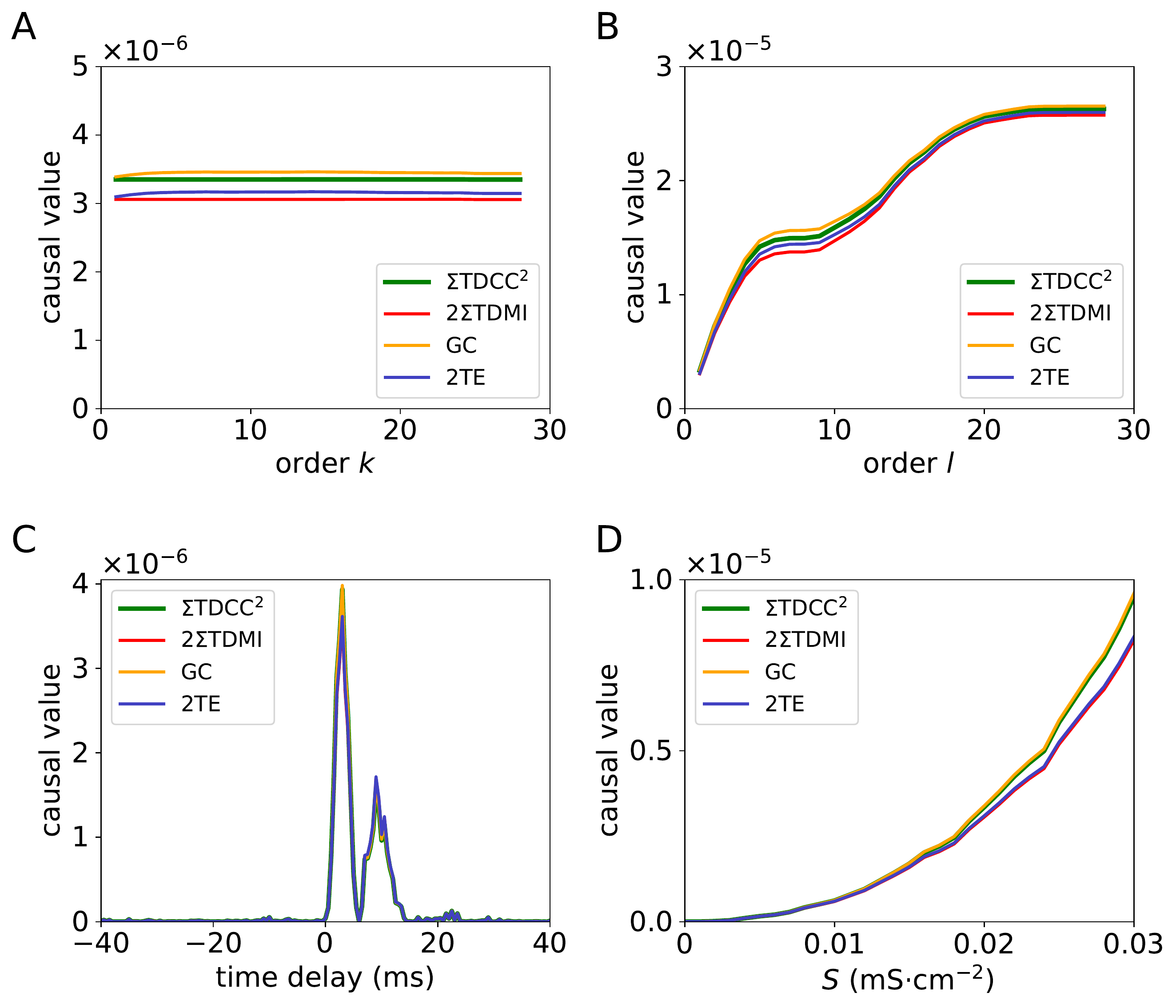}
		\par\end{centering}
	\caption{Dependence of causal values on parameters of (\textit{A}) order $k$, (\textit{B}) order $l$, (\textit{C}) time delay, and (\textit{D}) coupling strength $S$ obtained from neuron $Y$ to
		neuron $X$ in the HH network in Fig. \ref{fig:conv test}. 
		In (\textit{C}), a positive (negative) time delay indicates the calculation of causal values from $Y$ to $X$ (from $X$ to $Y$). 
		The curves of causal values reach the peak at the time delay 3 ms ($m=6$) in (\textit{C}).
		The green curve represents the summation of square of TDCC $C(X,Y;m)$, the red curve represents two times the summation of TDMI $I(X,Y;m)$, the orange curve stands for GC $G_{Y\rightarrow X}(k,l;m)$, and the blue curve stands for twice of TE $T_{Y\rightarrow X}(k,l;m)$. The curves are close to each other in (A)-(D) and virtually overlap (all significantly greater than those of randomly surrogate time series, $p<0.05$).
		%The gray dashed curve is the threshold of causal values for two independent neurons in the HH network with the same dynamical regime.
		The parameters are set as 
		(\textit{A}): $l=1$, $S=0.02$ $\textrm{mS\ensuremath{\cdot}cm}^{-2}$, $\Delta t=0.5$ ms, $m=6$; 
		(\textit{B}): $k=1$, $S=0.02$ $\textrm{mS\ensuremath{\cdot}cm}^{-2}$, $\Delta t=0.5$ ms, $m=6$;
		(\textit{C}): $k=l=1$, $S=0.02$ $\textrm{mS\ensuremath{\cdot}cm}^{-2}$, $\Delta t=0.5$ ms; 
		(\textit{D}): $k=l=1$, $\Delta t=0.5$ ms, $m=6$.		 
		\label{fig:parameters}}
\end{figure} 

\subsection*{Relation between structural connectivity and causal connectivity in HH neural networks}
We next discuss about the relation between the inferred causal connectivity and the structural connectivity. Note that the causal connectivity inferred by these measures are statistical rather than structural \cite{koch2002investigation,seth2005causal,schiele2013specific}, $i.e.$, the causal connectivity quantifies the directed statistical correlation or dependence among network nodes, whereas the structural connectivity corresponds to physical connections among network nodes. 
Therefore, it remains unclear about the relationship between the causal connectivity and the structural connectivity.
In Fig. \ref{fig:parameters}\textit{C}, the peak causal value from $Y$ to $X$ (at time delay around 3 ms, m=6) is significantly greater than threshold, while the causal value from $X$ to $Y$ is not. Based on this, the inferred causal connections between $X$ and $Y$ are consistent with the underlying structural connections. From now on, we adopt peak causal values, m=6, to represent the causal connectivity unless noted explicitly.
To investigate the validity of this consistency in larger networks, we further investigate a larger HH network (100 excitatory neurons) with random connectivity (\textit{\textcolor{blue}{SI Appendix}}, \textcolor{blue}{Fig. S5A}). As shown in Fig. \ref{fig:distribution}\textit{A},
the distributions of all four causal values across all pairs of neurons virtually overlapped, which again verify their mathematical relations given by Theorems \ref{thm: MI vs CC}-\ref{thm: TE vs GC}.
In addition, as the network size increases, the distributions of the causality measures in Fig. \ref{fig:distribution}\textit{A} exhibits a bimodal structure with a clear separation of orders. By mapping the causal values with the structural connectivity, we find that the right bump of the distributions with larger causal values corresponded to connected pairs of neurons, while the left bump with smaller causal values corresponded to unconnected pairs. 
The well separation of the two modals indicates that the underlying structural connectivity in the HH network can be accurately reconstructed by the causal connectivity. The performance of this reconstruction approach can be quantitatively characterized by the receiver operating characteristic (ROC) curve and the area under the ROC curve (AUC) \cite{fawcett2006introduction,marbach2010revealing,carter2016roc} (\hyperref[sec:Materials-and-Methods]{\textit{Materials and Methods}}). It is found that the AUC value became 1 when applying any of these four causality measures (\textit{\textcolor{blue}{SI Appendix}}, \textcolor{blue}{Fig. S5B}), which indicates that the structural connectivity of the HH network could be reconstructed with 100 \% accuracy. 
We point out that the reconstruction of network connectivity based on causality measures is achieved by calculating the causal values between each pair of neurons that required no access towards the activity data of the rest of neurons. Therefore, this inference approach can be applied to a subnetwork when the activity of neurons outside the subnetwork is not observable. 
For example, when a subnetwork of 20 excitatory HH neurons is observed, the structural connectivity
of the subnetwork can still be accurately reconstructed without knowing the information
of the rest 80 neurons in the full network (Fig. \ref{fig:distribution}\textit{B}). In such a case, 
the AUC values corresponding to the four causality measures are 1 (\textit{\textcolor{blue}{SI Appendix}}, \textcolor{blue}{Fig. S5C}). 

\begin{figure}[!htp]
	\begin{centering}
		\includegraphics[width=1\linewidth]{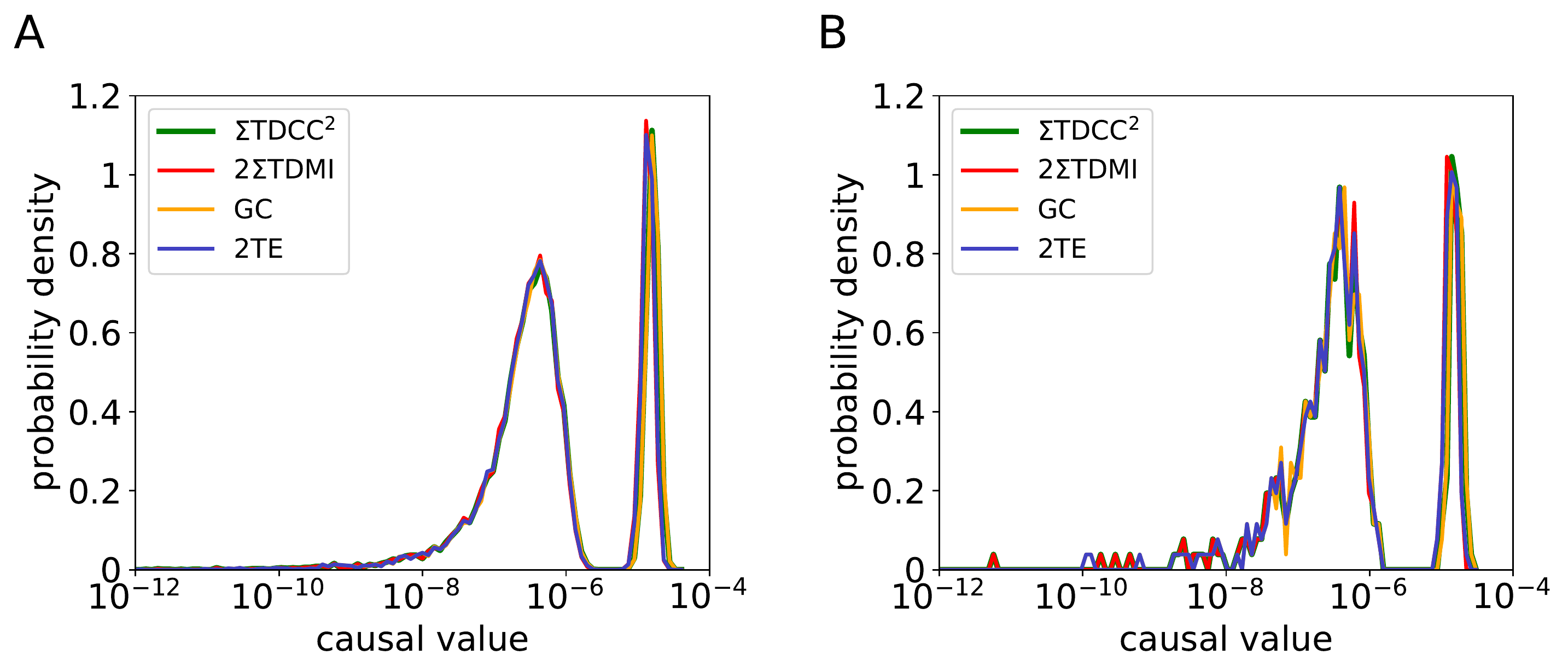}
		\par\end{centering}
	\caption{Distributions of causal values in an HH network of 100 excitatory neurons randomly connected with probability 0.25.
		(\textit{A}) The distribution  of causal values
		of each pair of neurons in the whole network. (\textit{B}) The distribution of causal values
		of each pair of neurons in an HH subnetwork of 20 excitatory neurons.
		The parameters are set as  $k=l=1$, and $S=0.02$
		$\textrm{mS\ensuremath{\cdot}cm}^{-2}$, $\Delta t=0.5$ ms, and $m=6$ (3 ms). The colors are the same as those in Fig. \ref{fig:parameters} and the curves nearly overlap. \label{fig:distribution}}
\end{figure}

\subsection*{Mechanism underlying network connectivity reconstruction by causality measures}
We next demonstrate the mechanism of pairwise inference on pulse-output signals in the reconstruction of network structural connectivity. 
It has been noticed that pairwise causal inference may potentially fail to distinguish the direct interactions from the
indirect ones in a network. For example, in the case that  $Y\rightarrow W\rightarrow X$ where ``$\rightarrow$'' denotes a directed connection,
the indirect interaction from $Y$ to $X$ may possibly be mis-inferred as a direct interaction via pairwise causality measures. However, such spurious inference can be circumvented in our generalized pairwise causality measured based on pulse-output signals as explained below. 
Here we take TDCC as an example to elucidate the underlying reason of successful reconstruction. If we denote $\delta{p_{Y\rightarrow X}}=p(x_{n}=1|y_{n-m}=1)-p(x_{n}=1|y_{n-m}=0)$ as the increment of probability of generating a pulse output for neuron $X$ induced by a pulse-output signal of neuron $Y$ at $m$ time step earlier, we have TDCC $C(X,Y;m)=\delta p_{Y\rightarrow X}\sqrt{\frac{p_{y}-p_{y}^{2}}{p_{x}-p_{x}^{2}}}$ through Eq. \ref{eq:derivaiton MI CC 2}.
For the case of $Y\rightarrow W\rightarrow X$, we can derive $\delta{p}_{Y\rightarrow X}=O(\delta{p}_{Y\rightarrow W}\cdot\delta{p}_{W\rightarrow X})$ (details in \textit{\textcolor{blue}{SI Appendix, Supplementary Information Text 2}}). 
Because the influence of a single pulse output signal is often small ($e.g.$, in the HH neural network with physiologically realistic coupling strengths, we obtain $|\delta{p}|<0.01$ from simulation data), the causal value $C(X,Y;m)$ due to the indirect interaction is significantly smaller than $C(W,Y;m)$ or $C(X,W;m)$ by direct interactions(Fig. \ref{fig:distribution}\textit{A}). 

We also note that the increment $\delta{p}$ is linearly dependent on the coupling strength $S$ (\textit{\textcolor{blue}{SI Appendix}}, \textcolor{blue}{Fig. S6}),
%as shown in the inset of Fig. \ref{fig:parameters}\textit{C}, 
thus we establish a mapping between the causal and structural connectivity, in which the causal value of TDCC is proportional to the coupling strength $S$ between two neurons. The mapping between causal and structural connectivity for TDMI, GC, and TE are also established in a similar way, in which the corresponding causal values are proportional to $S^2$. 
Therefore, the application of pairwise causality measures to pulse-output signals is able to successfully reveal the underlying structural connectivity of a network. Importantly, this approach overcomed the computational issue of high dimensionality, thus is potentially applicable to data of large-scale networks or subnetworks measured in experiments.

%-----------------------------
%We next address the question of why the causality measures, $e.g.$, GC and TE with the smallest orders $k=l=1$, can successfully recover the structural connectivity of the HH neural network.
%In principle, the orders $k$ and $l$ chosen in GC and TE should be determined by the statistical properties of time series and the form of causal interactions between each pair of neurons. In addition, the causality measures of GC and TE shall be calculated conditioning on the global information of all the other neurons, although that may cause the issue of curse of dimensionality. 
%The mechanism underlying the effectiveness of our causal inference approach can be intuitively understood below without giving a rigorous proof. As we have shown in \textit{\textcolor{blue}{SI Appendix}}, the ACF of a binary time series takes the order of sampling resolution  $\Delta t$. Therefore, the self-correlation of a neuronal response will become sufficiently small when $\Delta t$ approaches zero. This indicates that the neuronal response is weakly dependent on its history for small $\Delta t$, which allows one to choose a small value of order $k$, $e.g.$, $k=1$ in our case. In addition, the historical information of the driver neuron at the message length $l=1$ together
%with a proper time delay is sufficient to reveal the causal interaction between signals
%as shown in Fig. \ref{fig:parameters}\textit{D}. 
%-----------------------------

\subsection*{Network connectivity reconstruction with physiological experimental data}
Next, we apply all four causality measures to experimental data to address the issue of validity of their mathematical relations and reconstruction of the network structural connectivity. Here, we analyze the in vivo spike data recorded in the mouse brain from Allen Brain Observatory \cite{allen2016allen} (details in \hyperref[sec:Materials-and-Methods]{\textit{Materials and Methods}}). 
%Hundreds of neurons' spike data were simultaneously recorded using Neuropixels under visual stimuli including gratings and natural scenes for more than 40 minutes, among which the activity of more than 150 neurons are commonly observed in different stimuli with a high signal-to-noise ratio (greater than 4).   
By applying the four causality measures, we infer the causal connectivity of those brain networks. As the underlying structural connectivity of the recorded neurons in experiments was unknown, we first detect putative connected links from the distribution of causality measures as introduced below, and then follow the same procedures as previously described in the HH model case using ROC and AUC in signal detection theory \cite{fawcett2006introduction,marbach2010revealing,carter2016roc} (\hyperref[sec:Materials-and-Methods]{\textit{Materials and Methods}}) to quantify the reconstruction performance. 

Since demonstrated the equivalence of four causality measures, we choose TE as an example for the discussion below.
As we have proved above, the TE values are proportional to $S^2$. Besides, previous experimental works observed that the structural connectivity $S$ follows the log-normal distribution both for cortico-cortical and local networks in mouse and monkey brains \cite{bm2014log}. Thus, the distribution of TE should also follow the log-normal distribution. Therefore, we fit the distribution of TE values for experimental data with the summation of two log-normal likelihood functions (details in \hyperref[sec:Materials-and-Methods]{\textit{Materials and Methods}}).

%\begin{figure}[H]
\begin{figure}[!htp]
	\begin{centering}
		\includegraphics[width=1\linewidth]{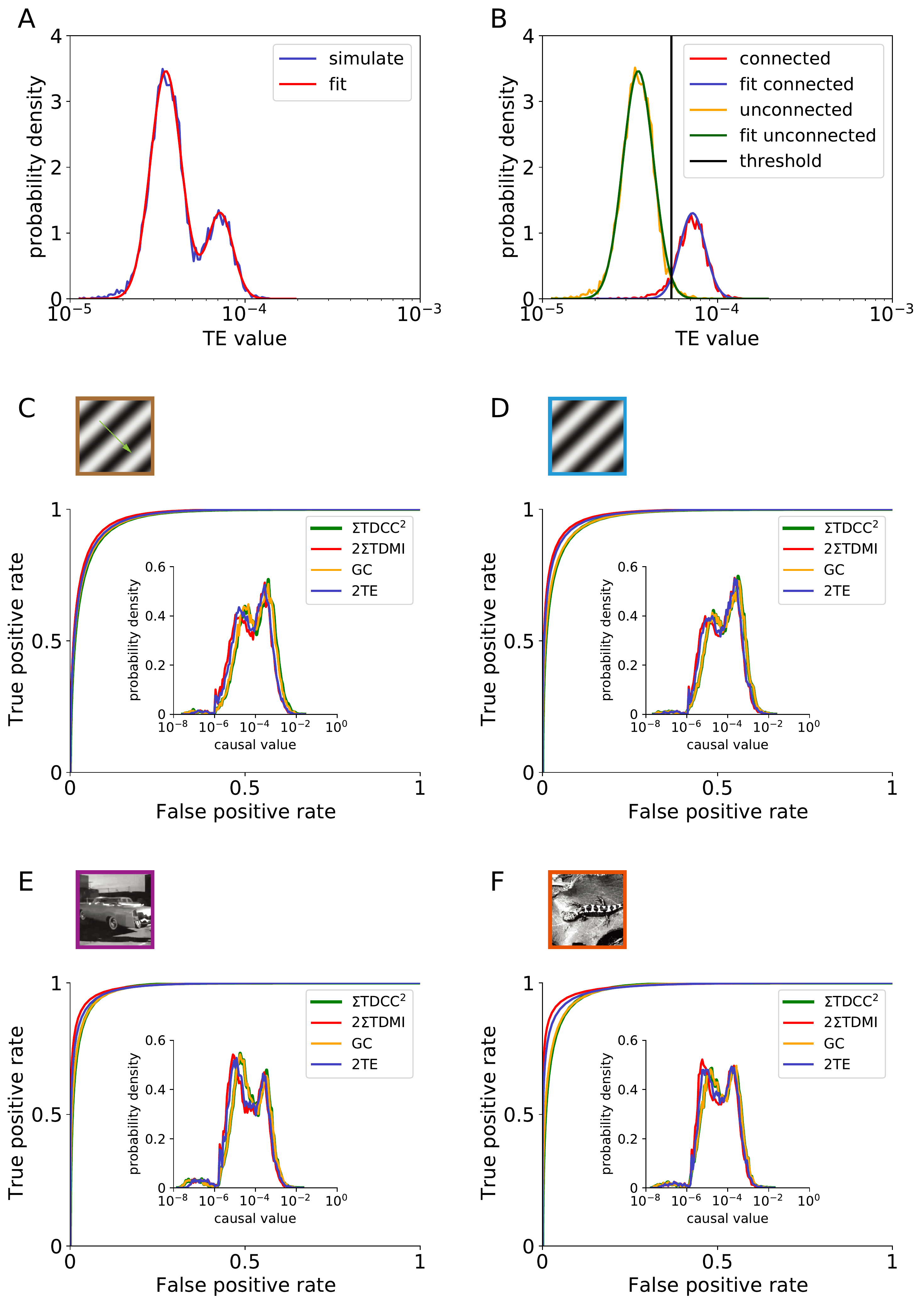}
		\par\end{centering}
	\caption{
		Reconstruction of structural connectivity by the assumption of  mixed log-normal distribution of causal values. (\textit{A}) Distribution of TE values in an HH network of 100 excitatory     neurons receiving correlated external Poisson input. The correlation coefficient of the Poisson input to each neuron is 0.3. The blue and red curves are the simulated and fitted     distributions, respectively. (\textit{B}) Distributions of TE values from connected and unconnected pairs of HH neurons in (\textit{A}). The red and orange curves are the simulated TE values from connected and unconnected pairs respectively while the blue and green curves are the fitted TE values from connected and unconnected pairs respectively which are obtained from the fitting in (\textit{A}). The parameters are the same as those in Fig. \ref{fig:distribution}. (\textit{C-F}): ROC curves of the network composed by the recorded neurons in experiment under visual stimuli of (\textit{C}) drifting gratings, (\textit{D}) static gratings, (\textit{E}) natural movie, and (\textit{F}) natural scenes. Insets:  the distribution of causal values of each pair of neurons in the whole network. We use the experimental spike data (sections id 715093703 at \url{https://allensdk.readthedocs.io/}) with signal-to-noise ratio greater than 4 and firing rate greater than 0.05 Hz. The parameters are set as  $k=1,l=10$, $\Delta t=1$ ms, and $m=1$. The colors are the same as those in Fig. \ref{fig:parameters}. 
		\label{fig:Allen}}
\end{figure}

As an illustrative example, we consider an HH network where neurons receive correlated background Poisson input. In this case, although the distribution of TE values $\log_{10}T_{Y\rightarrow X}$ (Fig. \ref{fig:Allen}\textit{A}) for synaptically connected and unconnected pairs of neurons overlap with each other, it is still well fitted by the summation of two log-normal distributions of TE values. 
More importantly, the fitted distribution of TE from connected and unconnected pairs agreed well with those from the true connectivity used in simulation (Fig. \ref{fig:Allen}\textit{B}).
Thus, we take the fitted causality measure distributions as the ground truth of structural connectivity to evaluate the performance of network connectivity reconstruction, $e.g.$, the AUC value is 0.997 in Fig. \ref{fig:Allen}\textit{B}.
In addition, we also provide the optimal inference threshold of TE values as the intersection of the fitted curves \cite{dayan2001theoretical} indicated by the vertical solid line in Fig. \ref{fig:Allen}\textit{B}.

We then apply the same ROC analysis to the causality measures from experimental data with different visual stimuli conditions. As shown in the insets of Figs. \ref{fig:Allen}\textit{C-F}, the distribution of TDCC, TDMI, GC, and TE values are very close, which again verifies their mathematical relations given by Theorems \ref{thm: MI vs CC}-\ref{thm: TE vs GC}. 
Under the assumption of mixed log-normal distribution (\textit{\textcolor{blue}{SI Appendix}}, \textcolor{blue}{Figs. S7A-D}), we first infer the ground truth of the network structural connectivity, and then evaluate the reconstruction performance of TDCC, TDMI, GC, and TE by the ROC curves which nearly overlap as shown in Figs. \ref{fig:Allen}\textit{C-F} with AUC values greater than 0.97. 
We also infer the binary adjacency matrix using the optimal inference threshold (\hyperref[sec:Materials-and-Methods]{\textit{Materials and Methods}}) from ROC analysis. The reconstructed adjacency matrix from all four different condition of visual stimuli are consistent, in which more than 89\% of the total pairs are consistently classified into true positive and true negative population of inferred connections.
Almost all of the causal values of inconsistently inferred connections fall into the overlapping region of the connected and unconnected distributions, $i.e.$, the causal values from connected (unconnected) pairs but less (greater) than the threshold(green curve in \textit{\textcolor{blue}{SI Appendix}}, \textcolor{blue}{Figs. S7E-H}) which are generally error prone. 
Comparing the reconstructed structural connectivity of an HH network driven by different correlated Poisson inputs, the inconsistent causal values also mainly fall on the overlapping region (\textit{\textcolor{blue}{SI Appendix}}, \textcolor{blue}{Fig. S8}).

\section*{Discussion }
%%--
%1) the choice of two-state variables,
%2) the reduced memory length,
%3) the indirect causal inference much smaller than the direct one, et al.
%%--
In this work, we have revealed the quantitative relations among four widely used causality measures (TDCC, TDMI, GC, and TE) for pulse-like signals, and have demonstrated their capability for reconstructing the physical (structure) connectivity using the example of HH neural network. 
Meanwhile, these causal inference methods also can be successfully applied to reconstruct the connectivity of a subnetwork while the neuronal activities beyond the subnetwork are not observable. 
Then, we have applied them to reconstruct the structural connectivity of the real neuronal network in mouse brain using spike train data, which have been massively recorded by calcium imaging \cite{stosiek2003vivo,grewe2010high,dana2019high} or single and multiple electrodes \cite{field2010functional,jun2017fully,stringer2019spontaneous} in experiments, and have achieved promising performances. 
%Although the ground truth of structural connectivity of the real neural network is still difficult to obtain, to directly verify our method, we note that the distribution of causal values for HH neurons from both connected and unconnected pairs can be approximated by a log-normal distribution, based on which we have successfully evaluated the reconstruction performance of the causality measures.
%To make reliable causal inference of each pair of neurons, one should collect a sufficiently large number of spikes \cite{kobayashi2019reconstructing}. 

To emphasize the essence of effective reconstruction for pulse-like signals in our framework, we again address the fact that i) they have small correlation length, and ii) their indirect causalities are much weaker than direct ones. 
On the one hand, the auto-correlation function of pulse-like signals rapidly decay with the shrinkage of time step $\Delta t$ (\textit{\textcolor{blue}{SI Appendix}} \textcolor{blue}{Fig. S1}). It protects the inferred causality from the corruption of the self-memory of time series. Also, it allows us to use a small $k$ and $l$ ($k=l=1$) in application which overcomes the curse of dimensionality in the estimation of probability distribution and provides a practical approach for network reconstruction. 
On the other hand, the value of indirect causality is as several orders of magnitude smaller than direct one, which ensures the feasibility of distinguishing the directed connections from indirect ones according to their magnitudes of causal connectivity.
With those property of pulse-like signals, it is worth pointing out that the application of the simplest statistic, TDCC, is sufficient enough to reconstruct the connectivity of neuronal networks with spike trains as measured signals.
In contrast, if these causality measures are directly applied to continuous-valued signals, $e.g.$, voltage time series, the mathematical relations derived in our theorems become invalid (\textit{\textcolor{blue}{SI Appendix}}, \textcolor{blue}{Fig. S9A}). Moreover, TDCC and TDMI may give incorrect reconstruction of the structural connectivity due to the strong self-correlation of continuous-valued time series (\textit{\textcolor{blue}{SI Appendix}}, \textcolor{blue}{Fig. S9B}).

%To practically implement these causal inference methods, we take pairwise measures without accounting for the global information of all other neurons and take the smallest orders of historical information $k=l=1$. We emphasize that the application of any one of the pairwise TDCC, TDMI, GC, and TE can successfully reconstruct the structural connectivity of nonlinear neural networks with pulse signals as the measured output. 

Although we have illustrated the effectiveness of the four causality measures taking the examples of excitatory HH neural network receiving uncorrelated external Poisson drive (\hyperref[sec:Materials-and-Methods]{\textit{Materials and Methods}}), we find that these measures are applicable to more general situations, including networks in different dynamical states, networks receiving correlated inputs, networks with both excitatory and inhibitory neurons, networks with different neuronal model, and other pulse-output networks beyond neural systems.

First, oscillations and synchronizations are commonly observed in the biological brain network, as shown in \textit{\textcolor{blue}{SI Appendix}}, \textcolor{blue}{Fig. S10A}. Due to the fake causality between neurons introduced by the strong synchronous state, conventional reconstruction frameworks fail to capture the true structural connectivity. However, with our framework, high inference accuracy (AUC $>$ 0.88) can still be achieved (shown in \textit{\textcolor{blue}{SI Appendix}}, \textcolor{blue}{Figs. S10B-C}). Furthermore, by applying a desynchronized sampling method that only samples the pulse-output signals in the asynchronous time interval (\textit{\textcolor{blue}{SI Appendix}}, \textcolor{blue}{Fig. S10D}), we can again perfectly reconstruct the network (AUC $>$ 0.99), shown in \textit{\textcolor{blue}{SI Appendix}}, \textcolor{blue}{Figs. S10E-F}. 
The relation between percentages of desynchronized downsampling and AUC values is shown in \textit{\textcolor{blue}{SI Appendix}}, \textcolor{blue}{Fig. S11}.

Second, external inputs of the network in the brain can be often correlated. In such a case, the synchronized states can be achieved similarly compared with previous cases (shown in \textit{\textcolor{blue}{SI Appendix}}, \textcolor{blue}{Fig. S12A}). Similarly, our framework can still achieve high inference accuracy (AUC $>$ 0.99 with desynchronized downsampling methods, or AUC $>$ 0.88 without downsampling) if the external inputs are moderately correlated ($e.g.$, correlation coefficient less than 0.35 in our simulation case, see \textit{\textcolor{blue}{SI Appendix}}, \textcolor{blue}{Figs. S12 B-F}). 

Third, for the more general networks containing both excitatory and inhibitory neurons, we show that performance of the reconstruction remains the same for an HH network of both excitatory and inhibitory neurons (AUC $>$ 0.99 with desynchronized downsampling methods, or AUC $>$ 0.71 without downsampling) (\textit{\textcolor{blue}{SI Appendix}}, \textcolor{blue}{Fig. S13}). 

Fourth, we also apply our framework of reconstruction onto other types of neuronal networks. Here we take the current-based leaky integrate-and-fire (LIF) neuronal network containing 100 excitatory LIF neurons (\hyperref[sec:Materials-and-Methods]{\textit{Materials and Methods}}) randomly connected with probability 0.25 as an example (\textit{\textcolor{blue}{SI Appendix}}, \textcolor{blue}{Fig. S14A}). 
Our framework still works for LIF network, with well separated two-modal distribution of causality values (\textit{\textcolor{blue}{SI Appendix}}, \textcolor{blue}{Fig. S14B}) and high reconstruction performance (AUC $>$ 0.99) (\textit{\textcolor{blue}{SI Appendix}}, \textcolor{blue}{Fig. S14C}).

Last but not least, we point out that the mathematical relations among four causality measures and their successful application to reconstruct the structural connectivity is general for a wide range of pulse-output nonlinear networks. Here, we apply our framework onto the pulse-output Lorenz networks (\hyperref[sec:Materials-and-Methods]{\textit{Materials and Methods}}) with 100 pulse-coupled Lorentz nodes, which is proposed for atmospheric convection with chaotic property \cite{lorenz1963deterministic}. The raster of pulse-output signals for the Lorentz network is shown in \textit{\textcolor{blue}{SI Appendix}}, \textcolor{blue}{Fig. S14D}. Again, a well separated two-modal distribution of causality values (\textit{\textcolor{blue}{SI Appendix}}, \textcolor{blue}{Fig. S14E}) and a high reconstruction performance (AUC $>$ 0.99)(\textit{\textcolor{blue}{SI Appendix}}, \textcolor{blue}{Fig. S14F}) prove the validity of our inference framework.

%\matmethods{
%	\label{sec:Materials-and-Methods}
	
	\subsection*{Materials-and-Methods} 
	
	\subsection*{Reconstruction performance evaluation}
	For binary inference of structural connectivity, analysis based on receiver operating characteristic (ROC) curves is adopted to evaluate the reconstruction performance in this work. The following two scenarios are considered.
	
	\subsubsection*{With known structural connectivity}
	The conventional procedures for ROC curves analysis can be naturally applied towards data with true labels, i.e. structural connectivity in our binary reconstruction case. The area under the ROC curve (AUC) quantifies how well causality measures can distinguish from structurally connected edges from the other. If AUC is close to 1, the distributions of the causality measures between those of connected edges and disconnected ones are well distinguishable, i.e., the performance of binary reconstruction is good. If the AUC is close to 0.5, the distribution of those two kinds are virtually indistinguishable, meaning the performance of reconstruction is close to random guess.
	
	\subsubsection*{Without known structural connectivity}
	The causality measures for synaptically connected and unconnected pairs of neurons can be fitted by the log-normal distribution with different parameters, $i.e.$,  $\log_{10}T_{Y\rightarrow X}^{(\text{con})}\sim\mathcal{N}(\mu_{1},\sigma_{1}^{2})$ and $\log_{10}T_{Y\rightarrow X}^{(\text{uncon})}\sim\mathcal{N}(\mu_{2},\sigma_{2}^{2})$ where the superscripts ``con'' and ``uncon'' indicates a directed synaptic connection and no synaptic connection from $Y$ to $X$, respectively. Thus, the overall distribution is fitted with the summation of two log-normal distributions
	
	\begin{equation*} \label{density fit}
	\frac{p_{c}}{\sqrt{2\pi\sigma_{1}^{2}}}\exp\left(-\frac{(x-\mu_{1})^{2}}{2\sigma_{1}^{2}}\right)+\frac{1-p_{c}}{\sqrt{2\pi\sigma_{2}^{2}}}\exp\left(-\frac{(x-\mu_{2})^{2}}{2\sigma_{2}^{2}}\right),
	\end{equation*}
	where $p_c$ is the proportional coefficient between two Gaussian distributions $\log_{10}T_{Y\rightarrow X}^{(\text{con})}$ and $\log_{10}T_{Y\rightarrow X}^{(\text{uncon})}$.
	After that, the fitted distribution of two types of connections are regarded as the true labels of structural connectivity. The similar procedures, as previous case, are applied to get ROC curve and AUC value to evaluate the reconstruction performance of causality measures.
	
	\subsection*{The HH model}
	The dynamics of the $i$th neuron of an HH network with $N$ neurons is governed by 
	
	\begin{equation*}\label{eq: V of HH}
	\begin{aligned}
	C\frac{dV_{i}}{dt}&=-G_{L}(V_{i}-V_{L})+I^\textrm{Na}_{i}+I^\textrm{K}_{i}+I_{i}^{\textrm{input}}\\
	I_i^\textrm{Na} &= -G_\textrm{Na}m_{i}^{3}h_{i}(V_{i}-V_\textrm{Na})\\
	I_i^\textrm{K} &= -G_\textrm{K}n_{i}^{4}(V_{i}-V_\textrm{K})\\
	\end{aligned}
	\end{equation*}
	
	\begin{equation*}\label{eq:mhn of HH}
	\frac{dz_{i}}{dt}=(1-z_{i})\alpha_{z}(V_{i})-z_{i}\beta_{z}(V_{i}),\,\,\,\text{ for }z=m,h,n,
	\end{equation*}
	where \textit{$C$} and $V_{i}$ are the neuron's membrane capacitance and membrane potential, respectively; $m_{i}$, $h_{i}$, and $n_{i}$ are gating variables; $V_\textrm{Na},V_\textrm{K}$, and $V_\textrm{L}$ are the reversal potentials for the sodium, potassium, and leak currents, respectively; $G_\textrm{Na},G_\textrm{K}$,
	and $G_\textrm{L}$ are the corresponding maximum conductance; and $\alpha_{z}$ and $\beta_{z}$ are the rate variables. 
	The detailed dynamics of the gating variables $m$, $h$, $n$ can be found in Ref. \cite{dayan2001theoretical} and \textit{\textcolor{blue}{SI Appendix, Supplementary Information Text 3}}. 
	The input current $I_{i}^{\textrm{input}}=-G_{i}(t)(V_{i}-V_{E})$
	where $G_{i}(t)$ is the input conductance defined as $G_{i}(t)=f\sum_{l}H(t-s_{il})+\sum_{j}A_{ij}S\sum_{l}H(t-\tau_{jl})$
	and $V_{E}$ is the reversal potential of excitation. Here, $s_{il}$ is the $l$th spike time of the external Poisson input with strength $f$ and rate $\nu$, $\textbf{A}=(A_{ij})$ is the
	adjacency matrix with $A_{ij}=1$ indicating a directed connection
	from neuron $j$ to neuron $i$ and $A_{ij}=0$ indicating no connection there.
	$S$ is the coupling strength, and $\tau_{jl}$ is the $l$th spike
	time of the $j$th neuron. The spike-induced conductance change $H(t)$ is defined
	as 
	$
	H(t)=\frac{\sigma_{d}\sigma_{r}}{\sigma_{d}-\sigma_{r}}\left[\exp\left(-\frac{t}{\sigma_{d}}\right)-\exp\left(-\frac{t}{\sigma_{r}}\right)\right]\Theta(t),
	$
	where $\sigma_{d}$ and $\sigma_{r}$ are the decay and
	rise time scale, respectively, and $\Theta(\cdot)$ is the Heaviside
	function.
	When the voltage $V_{j}$ reaches the firing
	threshold, $V^{\textrm{th}}$, the $j$th neuron generates a spike
	at this time, say $\tau_{jl}$, and it will induce the $i$th neuron's conductance change if $A_{ij}=1$.
	
	\subsection*{Current-based LIF model}
	The dynamics of the $i$th current-based leaky integrate-and-fire (LIF) neuron is governed by
	
	\begin{equation*}\label{eq: V of LIF}
	\begin{aligned}
	C\frac{dV_{i}}{dt}&=-G_{L}(V_{i}-V_{L})+I_{i}^{\textrm{input}}\\
	V_i(t) &= V_{\textrm{reset}}\quad \textrm{if} \quad V_i(t)\geq V^\textrm{th},\\
	\end{aligned}
	\end{equation*}
	where \textit{$C$} and $V_{i}$ are the membrane capacitance and membrane potential (voltage). $V_{L}$ and $G_{L}$ are the reversal potentials and conductance for leak currents. 
	Compared with HH model, current-based LIF model drops terms of nonlinear sodium and potassium current, and replaces the conductance-based input current with the current-based one, $I_{i}^{\textrm{input}} = f\sum_l\delta(t-s_{il})+\sum_jA_{ij}S\sum_l \delta\left(t-\tau_{jl}\right)$, where $s_{il}$ is the $l$th spike time of external Poisson input with strength $f$ and rate $\nu$, and $\tau_{jl}$ is the $l$th spike time of $j$th neuron with strength $S$. And $\mathbf{A} = A_{ij}$ is the adjacency matrix defined the same as those in the HH model. 
	When the voltage reaches the threshold $V^\textrm{th}$, the $i$th neuron will emit a spike to all its connected post-synaptic neurons, and then reset to $V_\textrm{reset}$ immediately. In numerical simulation, we use dimensionless quantities: $C=1$, $V_{R}=0$, $V^{\text{th}}=1$, and the leakage conductance is set to be $G_{L}=0.05 \textrm{ms}^{-1}$ corresponding to the membrane time constant of $20\textrm{ms}$. 
	
	\subsection*{Pulse-output Lorenz network}
	The dynamics of the $i$th node of a Lorenz network is governed by
	
	\begin{equation*}
	\begin{aligned}
	\frac{dx_{i}}{dt} &= \sigma(y_{i}-x_{i}) + \sum_{j}A_{ij}S\sum_{l} \delta(t-\tau_{jl})\\
	\frac{dy_{i}}{dt} & =\rho x_{i}-y_{i}-x_{i}z_{i}\\
	\frac{dz_{i}}{dt} & =-\beta z_{i}+x_{i}y_{i},\\
	\end{aligned}
	\end{equation*}
	where $\sigma=10,\beta=8/3,\rho=28$. $\tau_{jl}$ is the $l$-th output time of the $j$-th node determined by the following. When $x_{j}$ reaches a threshold $x^{\textrm{th}}=10$, it generates a pulse that induces a change in $x$ of all of its post nodes. 
	
	\subsection*{Neurophysiological data} The public spike train data is from Allen brain observatory \cite{allen2016allen}, accessed via the Allen Software Development Kit (AllenSDK) \cite{allen2015allen}. Specifically, the data labeled with sessions-ID 715093703 was analyzed in this work. 
	The 118-day-old male mouse passively received multiple visual stimuli from one of four categories, including drift gratings, static gratings, natural scenes and natural movies. The single neuronal activities, i.e. spike trains, were recorded from 13 distinct brain areas, including APN, CA1, CA3, DG, LGd, LP, PO, VISam, VISl, VISp, VISpm, VISrl, and grey, using 6 Neuropixel probes. For each category of stimulus, the recording lasts for more than 40 minutes. 
	884 sorted spike trains were recorded in this session and 131 of those were used for causality analysis with signal-to-noise ratio greater than 4.
%}

%\showmatmethods{} % Display the Materials and Methods section

%\acknow
%{
%	This work was supported by National Key R\&D Program of China 2019YFA0709503 and Shanghai Municipal Science and Technology Major Project 2021SHZDZX0102 (S.L., D.Z.,);
%	National Natural Science Foundation of China Grants 11901388, Shanghai Sailing Program 19YF1421400 (S.L.);
%	National Natural Science Foundation of China with Grant No. 12071287, 11671259, and 11722107, SJTU-UM Collaborative Research Program (D.Z.), and by the Student Innovation Center at Shanghai Jiao Tong University (Z.K.T, S.L., D.Z.) 
%}
%\showacknow{} % Display the acknowledgments section

% \pnasbreak splits and balances the columns before the references.
% Uncomment \pnasbreak to view the references in the PNAS-style
% If you see unexpected formatting errors, try commenting out \pnasbreak
% as it can run into problems with floats and footnotes on the final page.
%\pnasbreak

% Bibliography
%\bibliographystyle{unsrt}
% \bibliography{reference_PNAS}

\begin{small}	
\bibliographystyle{unsrt}
\bibliography{MyCollection}	
\end{small}

%%%%%%%%%%%%%%%%%%%%%%%%%%%%%%%%%%
\end{document}

% --- supplement: SI/TGIC_SI.tex ---

%% TITLE %%%%%%%%%%%%%%%%%%%%%%%%%%%%%%%%%%
\title{Quantitative relations among causality measures with
	applications to nonlinear pulse-output network reconstruction}

\author{Zhong-qi K. Tian, Kai Chen, Songting Li, David W. McLaughlin, and Douglas Zhou}

\date{}

\maketitle

\section{Mathematical derivation of relations among four causality measures}
We first derive the mathematical relations among time-delayed correlation coefficient (TDCC),
time-delayed mutual information (TDMI), Granger causality (GC), and
transfer entropy (TE)
for networks with pulse signals as measured output. Consider a pair
of nodes in a network, say nodes $X$ and $Y$ with pulse-output signals $w_{x}(t)=\sum_{l}\delta(t-\tau_{xl})$
and $w_{y}(t)=\sum_{l}\delta(t-\tau_{yl})$, where $\delta(\cdot)$
is the Dirac delta function. Due to the sampling limit,
the pulse-output signals are measured as 
binary time series $\{x_{n}\}$ and $\{y_{n}\}$ with a sampling 
resolution $\Delta t$, where $x_n=1$ ($y_n=1$) if there is a pulse signal of $X$ ($Y$) in the time window $[t_n,t_{n+1})$, and $x_n=0$ ($y_n=0$) otherwise, $i.e.$,
\begin{equation*}
x_n=\int_{t_n}^{t_{n+1}}w_x(t)dt \,\,\,\, \text{ and } \,\,\,\, y_n=\int_{t_n}^{t_{n+1}}w_y(t)dt,
\end{equation*}
and $t_n=n\Delta t$. Note that the value of $\Delta t$ is often chosen to make sure that there is at most one pulse signal in one time window. The responses $x_n$ and $y_n$ can be viewed as stochastic processes when the network is driven by stationary stochastic inputs. Accordingly, below we will describe the neuronal response using the tool of probability.

For ease of discussion, we give the notations:
\begin{equation*} 
r_{x}=\frac{1}{T}\int_{0}^{T}w_{x}(t)dt \,\,\,\, \text{ and } \,\,\,\, r_{y}=\frac{1}{T}\int_{0}^{T}w_{y}(t)dt
\end{equation*}
are the mean pulse rates of $X$ and $Y$, respectively;
\begin{equation*}
p_{x}=p(x_{n}=1)  \,\,\,\, \text{ and } \,\,\,\, p_{y}=p(y_{n}=1)
\end{equation*} 
are the probability of $x_n$ and $y_n$ being 1, respectively;  
\begin{equation*}
\Delta p_{m}=\frac{p(x_{n}=1,y_{n-m}=1)}{p(x_{n}=1)p(y_{n-m}=1)}-1
\end{equation*}
measures the dependence between $x_{n}$ and $y_{n-m}$. 
Then we have 
\begin{equation}
p_{x}=r_{x}\Delta t=O(\Delta t),\,\,\,  p_{y}=r_{y}\Delta t=O(\Delta t)
\label{eq:px py}
\end{equation}
and 
\begin{equation*}
\sigma_{x}^{2}=p_{x}-p_{x}^{2}=O(\Delta t),\,\,\,  \sigma_{y}^{2}=p_{y}-p_{y}^{2}=O(\Delta t),
\end{equation*}
where the symbol $``O"$ stands for the order, $\sigma_{x}$ and $\sigma_{y}$ are the standard
deviation of $\{x_{n}\}$ and $\{y_{n}\}$, respectively.

\subsection{Definition of TDCC, TDMI, GC, and TE}
Without loss of generality, we consider the causal interaction from $Y$ to $X$ with binary time series $\{x_{n}\}$ and $\{y_{n}\}$.
TDCC from $Y$ to $X$ is defined by   
\begin{equation}
C(X,Y;m)=\frac{\text{cov}(x_{n},y_{n-m})}{\sigma_{x}\sigma_{y}},\label{eq:TDCC}
\end{equation}
where $m$ is time delay.

TDMI from $Y$ to $X$ is defined by 
\begin{equation*}
I(X,Y;m)=\sum_{x_{n},y_{n-m}}p(x_{n},y_{n-m})\log\frac{p(x_{n},y_{n-m})}{p(x_{n})p(y_{n-m})},\label{eq:MI}
\end{equation*}
where $p(x_{n},y_{n-m})$ is the joint probability distribution of
$x_{n}$ and $y_{n-m}$, $p(x_{n})$ and $p(y_{n-m})$ are the corresponding
marginal probability distributions.

GC is established based on linear regression. The auto-regression for $X$ is
represented by 
\[x_{n+1}=a_{0}+\sum_{i=1}^{k}a_{i}x_{n+1-i}+\epsilon_{n+1},\]
where $\{a_{i}\}$ are the auto-regression coefficients and $\epsilon_{n+1}$
is the residual. By including the historical information of $Y$ with message
length $l$ and time delay $m$, the joint regression for $X$ is
represented by 
\[x_{n+1}=\tilde{a}_{0}+\sum_{i=1}^{k}\tilde{a}_{i}x_{n+1-i}+\sum_{j=1}^{l}b_{j}y_{n+2-m-j}+\eta_{n+1},\]
where $\{\tilde{a}_{i}\}$ and $\{b_{j}\}$ are the joint regression
coefficients, and $\eta_{n+1}$ is the corresponding residual. 
The GC value from $Y$ to $X$ is defined by 
\begin{equation*}
G_{Y\rightarrow X}(k,l;m)=\log\frac{\textrm{Var}(\epsilon_{n+1})}{\textrm{Var}(\eta_{n+1})}.\label{eq:GC}
\end{equation*}
By introducing the time-delay parameter $m$, the GC analysis defined above generalizes the conventional GC analysis, as the latter corresponds to the special case of $m=1$. 

The TE value from $Y$ to $X$ is defined by
\begin{equation}
T_{Y\rightarrow X}(k,l;m)  =\sum_{x_{n+1},x_{n}^{(k)},y_{n+1-m}^{(l)}}p(x_{n+1},x_{n}^{(k)},y_{n+1-m}^{(l)})\log\frac{p(x_{n+1}|x_{n}^{(k)},y_{n+1-m}^{(l)})}{p(x_{n+1}|x_{n}^{(k)})},
\label{eq:TE}\end{equation}
where the shorthand notation $x_{n}^{(k)}=(x_{n},x_{n-1},...,x_{n-k+1})$
and $y_{n+1-m}^{(l)}=(y_{n+1-m},y_{n-m},...,y_{n+2-m-l})$, $k,l$
indicate the length (order) of historical information of $X$ and $Y$,
respectively. Similar to GC, the time-delay parameter $m$ is introduced that generalizes the conventional TE, the latter of which corresponds to the case of $m=1$.

\subsection{Mathematical relation between TDMI and TDCC}
From the definition of TDCC in Eq. \ref{eq:TDCC}, we have
\begin{equation} \label{eq:CC dp}
\begin{aligned}C(X,Y;m) & =\frac{\text{cov}(x_{n},y_{n-m})}{\sigma_{x}\sigma_{y}}\\
& =\frac{p(x_{n}=1,y_{n-m}=1)-p_{x}p_{y}}{\sqrt{(p_{x}-p_{x}^{2})(p_{y}-p_{y}^{2})}}.
\end{aligned}
\end{equation}
The relation between TDMI and TDCC can be derived by Taylor expanding TDMI with respect to the difference $\frac{p(x_{n},y_{n-m})}{p(x_{n})p(y_{n-m})}-1$ as follows:
%\begin{equation*}
%\begin{alignedat}{1}I(X,Y;m) & =\sum_{x_{n},y_{n-m}}p(x_{n})p(y_{n-m})(1+\Delta p_{x_{n},y_{n-m},m})\log(1+\Delta p_{x_{n},y_{n-m},m})\\
%& =\sum_{x_{n},y_{n-m}}p(x_{n})p(y_{n-m})\left(\Delta p_{x_{n},y_{n-m},m}+\frac{\Delta p_{x_{n},y_{n-m},m}^{2}}{2}+O(\Delta p_{x_{n},y_{n-m},m}^{3})\right)\\
%& =\frac{[p(x_{n}=1,y_{n-m}=1)-p_{x}p_{y}]^{2}}{2(p_{x}-p_{x}^{2})(p_{y}-p_{y}^{2})}+O(\Delta t^{2}\Delta p_{m}^{3})\\
%& =\frac{C^{2}(X,Y;m)}{2}+O(\Delta t^{2}\Delta p_{m}^{3}).
%\end{alignedat}
%\label{eq:DMI}
%\end{equation*}

\begin{equation}
\begin{alignedat}{1}I(X,Y;m) =& \sum_{x_{n},y_{n-m}}p(x_{n})p(y_{n-m})\left[1+\left(\frac{p(x_{n},y_{n-m})}{p(x_{n})p(y_{n-m})}-1\right)\right]\log\left[1+\left(\frac{p(x_{n},y_{n-m})}{p(x_{n})p(y_{n-m})}-1\right)\right]\\
=& \sum_{x_{n},y_{n-m}}p(x_{n})p(y_{n-m})\left[\left(\frac{p(x_{n},y_{n-m})}{p(x_{n})p(y_{n-m})}-1\right)+\frac{1}{2}\left(\frac{p(x_{n},y_{n-m})}{p(x_{n})p(y_{n-m})}-1\right)^{2}\right.\\
& +\left.O\Biggl(\left(\frac{p(x_{n},y_{n-m})}{p(x_{n})p(y_{n-m})}-1\right)^{3}\Biggr)\right]\\
=& \frac{[p(x_{n}=1,y_{n-m}=1)-p_{x}p_{y}]^{2}}{2(p_{x}-p_{x}^{2})(p_{y}-p_{y}^{2})}+O(\Delta t^{2}\Delta p_{m}^{3})\\
=& \frac{C^{2}(X,Y;m)}{2}+O(\Delta t^{2}\Delta p_{m}^{3}).
\end{alignedat}
\label{eq:MI vs CC}
\end{equation}

\subsection{Mathematical relation between GC and TDCC}
From the definition, GC can be represented by the covariances of the signals \cite{Barnett2009}
as 
\begin{equation}
\begin{aligned}G_{Y\rightarrow X}(k,l;m) & =\log\frac{\Gamma(x_{n+1}|x_{n}^{(k)})}{\Gamma(x_{n+1}|x_{n}^{(k)}\oplus y_{n+1-m}^{(l)})},\end{aligned}
\label{eq:GC comp}
\end{equation}
where $\Gamma(\mathbf{x}|\mathbf{y})=\textrm{cov}(\mathbf{x})-\textrm{cov}(\mathbf{x},\mathbf{y})\textrm{cov}(\mathbf{y})^{-1}\textrm{cov}(\mathbf{x},\mathbf{y})^{T}$
for random vectors $\mathbf{x}$ and $\mathbf{y}$, $\textrm{cov}(\mathbf{x})$ and $\textrm{cov}(\mathbf{y})$
denote the covariance matrix of $\mathbf{x}$ and $\mathbf{y}$, respectively, and $\textrm{cov}(\mathbf{x},\mathbf{y})$
denotes the cross-covariance matrix between $\mathbf{x}$ and $\mathbf{y}$. The symbol $T$ is the transpose
operator and $\oplus$ denotes the concatenation of vectors. 

To derive the relation between GC and TDCC, we first consider the auto-correlation
function (ACF) of the binary time series $\{x_{n}\}$ and $\{y_n\}$. The ACF of  $\{x_{n}\}$ is defined as

\begin{equation*}
\textrm{ACF}(X;m)=\frac{\text{cov}(x_{n},x_{n-m})}{\sigma_{x}^{2}},
\end{equation*}
where $m$ is non-zero time delay. Let $g_{x}(t)$ be the probability
density function that node $X$ will generate a pulse at time $t$ given
that it produced a pulse at time $t=0$. Then we have 
\[p(x_{n}=1|x_{n-m}=1)=g_{x}(m\Delta t)\Delta t+O(\Delta t^2).\]
In general, the function $g_{x}(t)$ is continuous and bounded, thus
we have $p(x_n=1|x_{n-m}=1)=O(\Delta t)$. Together with Eq. \ref{eq:px py}, we 
can obtain
\begin{equation}
\textrm{ACF}(X;m)=\frac{p(x_{n}=1|x_{n-m}=1)-p_{x}}{1-p_{x}}=O(\Delta t).\label{eq:ACF vs dt}
\end{equation}
Similarly, we have
\begin{equation*}
\textrm{ACF}(Y;m)=O(\Delta t).\label{eq:ACF vs dt1}
\end{equation*}
Based on this, we derive the relation between GC and TDCC as follows:
from Eq. \ref{eq:ACF vs dt}, we can obtain 

\begin{equation*}
\begin{aligned}\textrm{cov}(x_{n}^{(k)}) & =\sigma_{x}^{2}(\mathbf{I}+\hat{\mathbf{A}}),\end{aligned}
\end{equation*}

\begin{equation*}
\textrm{cov}(x_{n}^{(k)})^{-1}=\frac{1}{\sigma_{x}^{2}}(\mathbf{I}-\hat{\mathbf{A}})+O(\Delta t\mathbf{1}_{k\times k}),
\end{equation*}
where $\hat{\mathbf{A}}=(\hat{a}_{ij}),\hat{a}_{ij}=O(\Delta t)$, $\mathbf{I}$
is the identity matrix, and $\mathbf{1}_{k\times k}$ is the all-one
matrix. Thus,

\begin{equation}
\begin{aligned}\Gamma(x_{n+1}|x_{n}^{(k)}) & =\sigma_{x}^{2}-\frac{1}{\sigma_{x}^{2}}\textrm{cov}(x_{n+1},x_{n}^{(k)})(\mathbf{I}-\hat{\mathbf{A}})\textrm{cov}(x_{n+1},x_{n}^{(k)})^{T}+O(\Delta t^{5}).\end{aligned}
\label{eq:derivation GC CC 1}
\end{equation}
In the same way, we have

\begin{equation*}
\text{cov}(x_{n}^{(k)}\oplus y_{n+1-m}^{(l)})=\left(\begin{array}{cc}
\sigma_{x}^{2}(\mathbf{I}+\hat{\mathbf{A}}) & \sigma_{x}\sigma_{y}\hat{\mathbf{C}}\\
\sigma_{x}\sigma_{y}\hat{\mathbf{C}}^{T} & \sigma_{y}^{2}(\mathbf{I}+\hat{\mathbf{B}})
\end{array}\right),
\end{equation*}

\begin{equation*}
\text{cov}(x_{n}^{(k)}\oplus y_{n+1-m}^{(l)})^{-1}=\left(\begin{array}{cc}
(\mathbf{I}-\hat{\mathbf{A}})/\sigma_{x}^{2} & -\hat{\mathbf{C}}/\sigma_{x}\sigma_{y}\\
-\hat{\mathbf{C}}^{T}/\sigma_{x}\sigma_{y} & (\mathbf{I}-\hat{\mathbf{B}})/\sigma_{y}^{2}
\end{array}\right)+O\left(\Delta t\mathbf{1}_{(k+l)\times(k+l)}\right),
\end{equation*}
where $\hat{\mathbf{B}}=(\hat{b}_{ij}),\hat{b}_{ij}=O(\Delta t), \hat{\mathbf{C}}=(\hat{c}_{ij}),\hat{c}_{ij}=O(\Delta t\Delta p_{m}).$
Thus,

\begin{equation}
\begin{aligned}\Gamma(x_{n+1}|x_{n}^{(k)}\oplus y_{n+1-m}^{(l)}) =&\sigma_{x}^{2}-\frac{1}{\sigma_{x}^{2}}\textrm{cov}(x_{n+1},x_{n}^{(k)})(\mathbf{I}-\hat{\mathbf{A}})\textrm{cov}(x_{n+1},x_{n}^{(k)})^{T}\\
& -\frac{1}{\sigma_{y}^{2}}\textrm{cov}(x_{n+1},y_{n+1-m}^{(l)})(\mathbf{I}-\hat{\mathbf{B}})\textrm{cov}(x_{n+1},y_{n+1-m}^{(l)})^{T}+O(\Delta t^{4}\Delta p_{m}^{2}).
\end{aligned}
\label{eq:derivation GC CC 2}
\end{equation}
Substituting Eqs. \ref{eq:derivation GC CC 1} and \ref{eq:derivation GC CC 2} into Eq. \ref{eq:GC comp} and Taylor expanding Eq. \ref{eq:GC comp} with respect to $\Delta t$, we can obtain                     
\begin{equation} \label{eq:GC vs CC}
\begin{aligned}G_{Y\rightarrow X}(k,l;m) & =\frac{\textrm{cov}(x_{n+1},y_{n+1-m}^{(l)})\textrm{cov}(x_{n+1},y_{n+1-m}^{(l)})^{T}}{\sigma_{x}^{2}\sigma_{y}^{2}}+O(\Delta t^{3}\Delta p_{m}^{2})\\
& =\sum_{i=m}^{m+l-1}C^{2}(X,Y;i)+O(\Delta t^{3}\Delta p_{m}^{2}).
\end{aligned}
\end{equation}
Note that we omit the higher order term $O(\Delta t^4)$ in the above derivation.

\subsection{Mathematical relation between TE and TDMI}
To rigorously establish the relation between TE and TDMI,
we require  that $\Bigl\Vert x_{n+1}^{(k+1)}\Bigr\Vert_{0}\leq1$ and $\Bigl\Vert y_{n+1-m}^{(l)}\Bigr\Vert_{0}\leq1$ in the definition of TE given in Eq. \ref{eq:TE},
where $\Bigl\Vert \cdot \Bigr\Vert_{0}$ denotes the $l_{0}$ norm of a vector, $i.e.$, the number
of nonzero elements in a vector. 
This assumption indicates that the length of historical
information used in the TE framework is shorter than the ``refractory period'', $i.e.$, the minimal  time interval between
two consecutive pulse-output  signals.

To simplify the notations, we use
$x^{-}$and $y^{-}$ to denote $x_{n}^{(k)}=(x_{n},x_{n-1},...,x_{n-k+1})$
and $y_{n+1-m}^{(l)}=(y_{n+1-m},y_{n-m},...,y_{n+2-m-l})$, respectively.
From the definition of TE, we have 

\begin{equation*}
\begin{aligned}T_{Y\rightarrow X}(k,l;m) & =\sum_{x_{n+1},x^{-},y^{-}}p(x_{n+1},x^{-},y^{-})\log\frac{p(x_{n+1}|x^{-},y^{-})}{p(x_{n+1}|x^{-})}\\
& =\sum_{x_{n+1},x^{-},y^{-}}p(x_{n+1},x^{-},y^{-})\left[\log\frac{p(x_{n+1}|y^{-})}{p(x_{n+1})}+\log\frac{p(x_{n+1}|x^{-},y^{-})}{p(x_{n+1}|y^{-})}\frac{p(x_{n+1})}{p(x_{n+1}|x^{-})}\right].
\end{aligned}
\end{equation*}
Because
\begin{equation*}
\begin{aligned}\sum_{x_{n+1},y^{-}}p(x_{n+1},y^{-})\log\frac{p(x_{n+1}|y^{-})}{p(x_{n+1})} & =\sum_{x_{n+1},y^{-}}p(x_{n+1},y^{-})\log\frac{p(y^{-}|x_{n+1})}{p(y^{-})}\\
& =\sum_{x_{n+1},y^{-}}p(x_{n+1},y^{-})\left[\log\frac{\prod_{j}p(y_{j}|x_{n+1})}{\prod_{j}p(y_{j})}+\log\frac{p(y^{-}|x_{n+1})}{\prod_{j}p(y_{j}|x_{n+1})}\frac{\prod_{j}p(y_{j})}{p(y^{-})}\right]\\
& =\sum_{i=m}^{m+l-1}I(X,Y;i)+\sum_{x_{n+1},y^{-}}p(x_{n+1},y^{-})\log\frac{p(y^{-}|x_{n+1})}{\prod_{j}p(y_{j}|x_{n+1})}\frac{\prod_{j}p(y_{j})}{p(y^{-})},
\end{aligned}
\end{equation*}
where $\prod_{j}$ represents $\prod_{j=n+2-m-l}^{n+1-m}$, we have

%\begin{equation}
%\begin{aligned}T_{Y\rightarrow X}(k,l;m) & =\sum_{i=m}^{m+l-1}I(X,Y;i) +\overset{\text{denoted by }\mathcal{A}}{\overbrace{\sum_{x_{n+1},y^{-}}p(x_{n+1},y^{-})\log\frac{p(y^{-}|x_{n+1})}{\prod_{j}p(y_{j}|x_{n+1})}\frac{\prod_{j}p(y_{j})}{p(y^{-})}},}\\
%& +\overset{\text{denoted by }\mathcal{B}}{\overbrace{\sum_{x_{n+1},x^{-},y^{-}}p(x_{n+1},x^{-},y^{-})\log\frac{p(x_{n+1}|x^{-},y^{-})}{p(x_{n+1}|y^{-})}\frac{p(x_{n+1})}{p(x_{n+1}|x^{-})}}}.
%\end{aligned}
%\end{equation}

\begin{equation}
T_{Y\rightarrow X}(k,l;m)=\sum_{i=m}^{m+l-1}I(X,Y;i)+\mathcal{A}+\mathcal{B}, 
\end{equation}
where
\[
\mathcal{A}=\sum_{x_{n+1},y^{-}}p(x_{n+1},y^{-})\log\frac{p(y^{-}|x_{n+1})}{\prod_{j}p(y_{j}|x_{n+1})}\frac{\prod_{j}p(y_{j})}{p(y^{-})}
\]
and 
\[
\mathcal{B}=\sum_{x_{n+1},x^{-},y^{-}}p(x_{n+1},x^{-},y^{-})\log\frac{p(x_{n+1}|x^{-},y^{-})}{p(x_{n+1}|y^{-})}\frac{p(x_{n+1})}{p(x_{n+1}|x^{-})}.
\]

Under the assumption that $\Bigl\Vert x_{n+1}^{(k+1)}\Bigr\Vert_{0}\leq1$ and $\Bigl\Vert y_{n+1-m}^{(l)}\Bigr\Vert_{0}\leq1$,
the number of nonzero components is at most one in $x_{n+1}^{(k+1)}$
and $y_{n+1-m}^{(l)}$.
We use $1_{x_{s}}$ to denote the event
that only the state $x_{s}$ is one in $x^{-}$, $n-k+1\leq s\leq n$
and use $0_{x}^{-}$ to denote the event that all the components
in $x^{-}$ are zero. Similarly, we use $1_{y_{t}}$ to denote
the event that only the state $y_{t}$ is one in $y^{-}$, $n+2-m-l\leq t\leq n+1-m$
and use $0_{y}^{-}$ to denote the event that all the components
in $y^{-}$ are zero. Then we can derive the  leading order of  each term in $\mathcal{A}$
and $\mathcal{B}$ by Taylor expanding them with respect to $\Delta t$ and $\Delta p_m$.
For example, in the case of 
\begin{equation} \label{eq: example in A in TE}
p(x_{n+1},y^{-})\log\frac{p(y^{-}|x_{n+1})}{\prod_{j}p(y_{j}|x_{n+1})}\frac{\prod_{j}p(y_{j})}{p(y^{-})}\biggl|_{x_{n+1}=1,y^{-}=1_{y_{t}}} 
\end{equation}
in $\mathcal{A}$, we have 

\begin{equation} \label{eq:TE derivation 1}
p(x_{n+1}=1,y^{-}=1_{y_{t}})=p(x_{n+1}=1,y_{t}=1)=p_{x}p_{y}(1+\Delta p_{n+1-t}),
\end{equation}
\begin{equation} \label{eq:TE derivation 2}
p(x_{n+1}=1,y_{j}=0)=p_{x}-p(x_{n+1}=1,y_{j}=1)=p_{x}-p_{x}p_{y}(1+\Delta p_{n+1-j}),
\end{equation}
\begin{equation} \label{eq:TE derivation 3}
p(y_{j}=0|x_{n+1}=1)=1-p_{y}(1+\Delta p_{n+1-j}).
\end{equation}
Substituting Eqs. \ref{eq:TE derivation 1}-\ref{eq:TE derivation 3} into Eq. \ref{eq: example in A in TE} yields

\[
\begin{aligned} & p(x_{n+1},y^{-})\log\frac{p(y^{-}|x_{n+1})}{\prod_{j}p(y_{j}|x_{n+1})}\frac{\prod_{j}p(y_{j})}{p(y^{-})}\biggl|_{x_{n+1}=1,y^{-}=1_{y_{t}}}\\
& =p_{x}p_{y}(1+\Delta p_{n+1-t})\log\frac{(1-p_{y})^{l-1}}{\prod_{j\neq t}(1-p_{y}-p_{y}\Delta p_{n+1-j})}\\
& =p_{x}p_{y}(1+\Delta p_{n+1-t})\sum_{j\neq t}\log\frac{1-p_{y}}{1-p_{y}-p_{y}\Delta p_{n+1-j}}\\
& =p_{x}p_{y}^{2}\sum_{j\neq t}\Delta p_{n+1-j}+O(\Delta t^{3}\Delta p_{m}^{2}),
\end{aligned}
\]
Note that we take $\Delta p_{n+1-j}=O(\Delta p_{m})$ in the above derivation.
%For $\mathcal{A}$, we have 
%%when $x_{n+1}=1$ and $y^{-}=1_{y_{t}}$, we have
%\begin{equation*}
%\begin{aligned} & p(x_{n+1},y^{-})\log\frac{p(y^{-}|x_{n+1})}{\prod_{j}p(y_{j}|x_{n+1})}\frac{\prod_{j}p(y_{j})}{p(y^{-})}\biggl|_{x_{n+1}=1,y^{-}=1_{y_{t}}}\\
%& =p_{x}p_{y}(1+\Delta p_{n+1-t})\log\frac{(1-p_{y})^{l-1}}{\prod_{j\neq t}(1-p_{y}-p_{y}\Delta p_{n+1-j})}\\
%& =p_{x}p_{y}^{2}\sum_{j\neq t}\Delta p_{n+1-j}+O(\Delta t^{3}\Delta p_{m}^{2}),
%\end{aligned}
%\end{equation*}
%%When $x_{n+1}=1$ and $y^{-}=0_{y}^{-}$, we have
Other terms in $\mathcal{A}$ can be obtained similarly as follows:

\begin{equation*}
\begin{aligned} & p(x_{n+1},y^{-})\log\frac{p(y^{-}|x_{n+1})}{\prod_{j}p(y_{j}|x_{n+1})}\frac{\prod_{j}p(y_{j})}{p(y^{-})}\biggl|_{x_{n+1}=1,y^{-}=0_{y}^{-}}\\
& =\left(p_{x}-p_{x}p_{y}(l+\sum_{t}\Delta p_{n+1-t})\right)\log\frac{1-p_{y}(l+\sum_{t}\Delta p_{n+1-t})}{\prod_{t}(1-p_{y}-p_{y}\Delta p_{n+1-t})}\frac{(1-p_{y})^{l}}{1-lp_{y}}\\
& =(1-l)p_{x}p_{y}^{2}\sum_{t}\Delta p_{n+1-t}+O(\Delta t^{3}\Delta p_{m}^{2}),
\end{aligned}
\end{equation*}
%When $x_{n+1}=0$ and $y^{-}=1_{y_{t}}$, we have

\begin{equation*}
\begin{aligned} & p(x_{n+1},y^{-})\log\frac{p(y^{-}|x_{n+1})}{\prod_{j}p(y_{j}|x_{n+1})}\frac{\prod_{j}p(y_{j})}{p(y^{-})}\biggl|_{x_{n+1}=0,y^{-}=1_{y_{t}}}\\
& =\left(p_{y}-p_{x}p_{y}(1+\Delta p_{n+1-t})\right)\log\frac{(1-p_{y})^{l-1}(1-p_{x})^{l-1}}{\prod_{j\neq t}\left(1-p_{x}-p_{y}+p_{x}p_{y}(1+\Delta p_{n+1-j})\right)}\\
& =-p_{x}p_{y}^{2}\sum_{j\neq t}\Delta p_{n+1-j}+O(\Delta t^{3}\Delta p_{m}^{2}),
\end{aligned}
\end{equation*}
%When $x_{n+1}=0$ and $y^{-}=0_{y}^{-}$, we have

\begin{equation*}
\begin{aligned} & p(x_{n+1},y^{-})\log\frac{p(y^{-}|x_{n+1})}{\prod_{j}p(y_{j}|x_{n+1})}\frac{\prod_{j}p(y_{j})}{p(y^{-})}\biggl|_{x_{n+1}=0,y^{-}=0_{y}^{-}}\\
& =\left(1-p_{x}-lp_{y}+p_{x}p_{y}(l+\sum_{t}\Delta p_{n+1-t})\right)\log\frac{1-p_{x}-lp_{y}+p_{x}p_{y}(l+\sum_{t}\Delta p_{n+1-t})}{\prod_{t}\left(1-p_{x}-p_{y}+p_{x}p_{y}(1+\Delta p_{n+1-t})\right)}\frac{(1-p_{y})^{l}(1-p_{x})^{l}}{(1-lp_{y})(1-p_{x})}\\
& =(l-1)p_{x}p_{y}^{2}\sum_{t}\Delta p_{n+1-t}+O(\Delta t^{3}\Delta p_{m}^{2}).
\end{aligned}
\end{equation*}
Therefore, $\mathcal{A}=O(\Delta t^{3}\Delta p_{m}^{2}).$

For $\mathcal{B},$ we have 

\begin{equation*}
\begin{aligned} & \sum_{x_{n+1},x^{-}}p(x_{n+1},x^{-})\log\frac{p(x_{n+1})}{p(x_{n+1}|x^{-})}\\
& =p_{x}\log(1-kp_{x})+kp_{x}\log(1-p_{x})+(1-(k+1)p_{x})\log\frac{(1-p_{x})(1-kp_{x})}{1-(k+1)p_{x}}\\
& =-kp_{x}^{2}-\frac{k(k+1)}{2}p_{x}^{3}+O(\Delta t^{4}),
\end{aligned}
\end{equation*}
%when $x_{n+1}=1,x^{-}=0_{x}^{-}$ and $y^{-}=1_{y_{t}}$, we have

\begin{equation*}
\begin{aligned} & p(x_{n+1},x^{-},y^{-})\log\frac{p(x_{n+1}|x^{-},y^{-})}{p(x_{n+1}|y^{-})}\biggl|_{x_{n+1}=1,x^{-}=0_{x}^{-},y^{-}=1_{y_{t}}}\\
& =-p_{x}p_{y}(1+\Delta p_{n+1-t})\log\left(1-p_{x}(k+\sum_{s}\Delta p_{s-t})\right)\\
& =p_{x}^{2}p_{y}(k+k\Delta p_{n+1-t}+\sum_{s}\Delta p_{s-t})+O(\Delta t^{3}\Delta p_{m}^{2}),
\end{aligned}
\end{equation*}
%when $x_{n+1}=1,x^{-}=0_{x}^{-}$ and $y^{-}=0_{y}^{-}$, we have

\begin{equation*}
\begin{aligned} & p(x_{n+1},x^{-},y^{-})\log\frac{p(x_{n+1}|x^{-},y^{-})}{p(x_{n+1}|y^{-})}\biggl|_{x_{n+1}=1,x^{-}=0_{x}^{-},y^{-}=0_{y}^{-}}\\
& =\left(p_{x}-p_{x}p_{y}(l+\sum_{t}\Delta p_{n+1-t})\right)\log\frac{(1-lp_{y})}{1-kp_{x}-lp_{y}+p_{x}p_{y}(kl+\sum_{s,t}\Delta p_{s-t})}\\
& =kp_{x}^{2}+\frac{k^{2}p_{x}^{3}}{2}-p_{x}^{2}p_{y}(kl+\sum_{s,t}\Delta p_{s-t}+k\sum_{t}\Delta p_{n+1-t})+O(\Delta t^{4}),
\end{aligned}
\end{equation*}
%when $x_{n+1}=0,x^{-}=1_{x_{s}}$ and $y^{-}=1_{y_{t}}$, we have

\begin{equation*}
\begin{aligned} & p(x_{n+1},x^{-},y^{-})\log\frac{p(x_{n+1}|x^{-},y^{-})}{p(x_{n+1}|y^{-})}\biggl|_{x_{n+1}=0,x^{-}=1_{x_{s}},y^{-}=1_{y_{t}}}\\
& =-p_{x}p_{y}(1+\Delta p_{s-t})\log\left(1-p_{x}(1+\Delta p_{n+1-t})\right)\\
& =p_{x}^{2}p_{y}(1+\Delta p_{s-t}+\Delta p_{n+1-t})+O(\Delta t^{3}\Delta p_{m}^{2}),
\end{aligned}
\end{equation*}
%when $x_{n+1}=0,x^{-}=1_{x_{s}}$ and $y^{-}=0_{y}^{-}$, we have

\begin{equation*}
\begin{aligned} & p(x_{n+1},x^{-},y^{-})\log\frac{p(x_{n+1}|x^{-},y^{-})}{p(x_{n+1}|y^{-})}\biggl|_{x_{n+1}=0,x^{-}=1_{x_{s}},y^{-}=0_{y}^{-}}\\
& =\left(p_{x}-p_{x}p_{y}(l+\sum_{t}\Delta p_{s-t})\right)\log\frac{1-lp_{y}}{1-p_{x}-lp_{y}+p_{x}p_{y}(l+\sum_{t}\Delta p_{n+1-t})}\\
& =p_{x}^{2}+\frac{p_{x}^{3}}{2}-p_{x}^{2}p_{y}(l+\sum_{t}\Delta p_{n+1-t}+\sum_{t}\Delta p_{s-t})+O(\Delta t^{4}),
\end{aligned}
\end{equation*}
%when $x_{n+1}=0,x^{-}=0_{x}^{-}$ and $y^{-}=1_{y_{t}}$, we have

\begin{equation*}
\begin{aligned} & p(x_{n+1},x^{-},y^{-})\log\frac{p(x_{n+1}|x^{-},y^{-})}{p(x_{n+1}|y^{-})}\biggl|_{x_{n+1}=0,x^{-}=0_{x}^{-},y^{-}=1_{y_{t}}}\\
& =\left(p_{y}-p_{x}p_{y}(k+1+\Delta p_{n+1-t}+\sum_{s}\Delta p_{s-t})\right)\log\frac{1-p_{x}(k+1+\Delta p_{n+1-t}+\sum_{s}\Delta p_{s-t})}{\left(1-p_{x}(k+\sum_{s}\Delta p_{s-t})\right)\left(1-p_{x}(1+\Delta p_{n+1-t})\right)}\\
& =-p_{x}^{2}p_{y}(k+\sum_{s}\Delta p_{s-t}+k\Delta p_{n+1-t})+O(\Delta t^{3}\Delta p_{m}^{2}),
\end{aligned}
\end{equation*}
%when $x_{n+1}=0,x^{-}=0_{x}^{-}$ and $y^{-}=0_{y}^{-}$, we have

\begin{equation*}
\begin{aligned} & p(x_{n+1},x^{-},y^{-})\log\frac{p(x_{n+1}|x^{-},y^{-})}{p(x_{n+1}|y^{-})}\biggl|_{x_{n+1}=0,x^{-}=0_{x}^{-},y^{-}=0_{y}^{-}}\\
& =\left(1-(k+1)p_{x}-lp_{y}+p_{x}p_{y}(kl+l+\sum_{s,t}\Delta p_{s-t}+\sum_{t}\Delta p_{n+1-t})\right)\\
& \cdot\log\left[\frac{1-(k+1)p_{x}-lp_{y}+p_{x}p_{y}(kl+l+\sum_{s,t}\Delta p_{s-t}+\sum_{t}\Delta p_{n+1-t})}{1-kp_{x}-lp_{y}+p_{x}p_{y}(kl+\sum_{s,t}\Delta p_{s-t})}\right.\\
& \left.\cdot\frac{1-lp_{y}}{1-p_{x}-lp_{y}+p_{x}p_{y}(l+\sum_{t}\Delta p_{n+1-t})}\right]\\
& =-kp_{x}^{2}+p_{x}^{2}p_{y}(kl+k\sum_{t}\Delta p_{n+1-t}+\sum_{s,t}\Delta p_{s-t})+O(\Delta t^{4}).
\end{aligned}
\end{equation*}
Therefore, $\mathcal{B}=O(\Delta t^{3}\Delta p_{m}^{2})$, and
thus we can obtain
\begin{equation} \label{eq:TE vs MI}
T_{Y\rightarrow X}(k,l;m)=\sum_{i=m}^{m+l-1}I(X,Y;i)+O(\Delta t^{3}\Delta p_{m}^{2}).
\end{equation}
Note that we omit the higher order term $O(\Delta t^4)$ in the above derivation.

\subsection{Mathematical relation between GC and TE}
From Eqs. \ref{eq:MI vs CC}, \ref{eq:GC vs CC}, and \ref{eq:TE vs MI}, we can straightforwardly prove the relation between GC and TE
\begin{equation*}
G_{Y\rightarrow X}(k,l;m)=2T_{Y\rightarrow X}(k,l;m)+O(\Delta t^{2}\Delta p_{m}^{3}),\label{eq:TE vs GC}
\end{equation*}
where $T_{Y\rightarrow X}$ is defined in Eq. \ref{eq:TE} with the assumption that $\Bigl\Vert x_{n+1}^{(k+1)}\Bigr\Vert_{0}\leq1$ and $\Bigl\Vert y_{n+1-m}^{(l)}\Bigr\Vert_{0}\leq1$. Note that we omit the higher order term $O(\Delta t^3)$ in the above derivation.

\section{Mechanism underlying successful network reconstruction using pairwise causal inference }
Here we demonstrate the validity of pairwise inference on pulse-output  signals in the reconstruction of network structural connectivity. It has been noticed that pairwise causal inference may potentially fail to distinguish the direct interactions from the indirect ones in a network. For example, in the case that $Y\rightarrow W\rightarrow X$ where ``$\rightarrow$'' denotes a directed connection,
the indirect interaction from Y to X may possibly be  mis-inferred as a direct interaction
via pairwise causality measures especially when the activity signals are continuous-valued. However, this type of mistake does not happen in our case of pulse-output  signals as explained
below. Here we take TDCC as an example to explain the underlying reason of successful reconstruction.

Denote \[\delta{p_{Y\rightarrow X}}=p(x_{n}=1|y_{n-m}=1)-p(x_{n}=1|y_{n-m}=0)\] as the increment of probability  of generating a pulse output by node $X$ at time step $n$ induced by a pulse-output  signal of node $Y$ at an earlier time step $n-m$.
From Eq. \ref{eq:CC dp}, we have 
\[C(X,Y;m)=\delta p_{Y\rightarrow X}\sqrt{\frac{p_{y}-p_{y}^{2}}{p_{x}-p_{x}^{2}}}.\]

Denote $S_1$ and $S_2$ as the coupling
strength from node $Y$ to node $W$ and from node $W$ to node $X$,
respectively. 
Then the increment $\delta p_{Y\rightarrow X}$ is a function of $S_{1}$
and $S_{2}$ and the Taylor expansion of $\delta p_{Y\rightarrow X}$
with respect to $S_{1}$ and $S_{2}$ has the following form
\begin{equation} \label{eq:Taylor dp}
\delta p_{Y\rightarrow X}=\alpha_{0}+\alpha_{1}S_{1}+\alpha_{2}S_{2}+\alpha_{3}S_{1}^{2}+\alpha_{4}S_{1}S_{2}+\alpha_{5}S_{2}^{2}+o(S_{1}S_{2}),
\end{equation}
where the symbol ``o'' stands for higher order terms. If $S_{1}=0$
or $S_{2}=0$, then the nodes $X$ and $Y$ are independent from the
connection structure, $i.e.,$ 
\[
\delta p_{Y\rightarrow X}\Big|_{S_{1}=0}=0 \,\,\,\, \text{ and } \,\,\,\, \delta p_{Y\rightarrow X}\Big|_{S_{2}=0}=0.
\]
Therefore, we have $\alpha_{0}=\alpha_{1}=\alpha_{2}=\alpha_{3}=\alpha_{5}=0$ in Eq. \ref{eq:Taylor dp}
and $\delta p_{Y\rightarrow X}=\alpha_{4}S_{1}S_{2}+o(S_{1}S_{2})$.
Similarly, the Taylor expansion of $\delta p_{Y\rightarrow W}$ and
$\delta p_{W\rightarrow X}$ with respect to $S_{1}$ and $S_{2}$
have the form 
\[
\delta p_{Y\rightarrow W}=\beta_{1}S_{1}+O(S_{1}^{2}) \,\,\,\, \text{ and } \,\,\,\, \delta p_{W\rightarrow X}=\beta_{2}S_{2}+O(S_{2}^{2}),
\]
respectively. Thus, we have
\begin{equation*}
\delta p_{Y\rightarrow X}=O(\delta p_{Y\rightarrow W}\cdot\delta p_{W\rightarrow X}).
\end{equation*}
Because the influence of a single input pulse signal is often small ($e.g.$, in the HH neural network with physiologically realistic coupling strengths with corresponding excitatory postsynaptic potential in the range of $[0,1]$ mV, we measure from simulation $|\delta{p}|<0.01$), the causal value $C(X,Y;m)$ from indirect interaction will be significantly smaller than $C(W,Y;m)$ or $C(X,W;m)$ from the direct interactions. Therefore, the direct connection and the indirect connection can be distinguished when performing pairwise inference on pulse-output  signals. 

\section{Detailed HH model}
\subsection{Hodgkin-Huxley (HH) neural network model of only excitatory population}

The dynamics of the $i$th neuron of an HH network with $N$ excitatory
neurons is governed by 
\begin{equation} \label{eq:HH V}
C\frac{dV_{i}}{dt}=-G_{Na}m_{i}^{3}h_{i}(V_{i}-V_{Na})-G_{K}n_{i}^{4}(V_{i}-V_{K})-G_{L}(V_{i}-V_{L})+I_{i}^{\textrm{input}},
\end{equation}

\begin{equation} \label{eq:HH mhn}
\frac{dz_{i}}{dt}=(1-z_{i})\alpha_{z}(V_{i})-z_{i}\beta_{z}(V_{i}),\,\,\,\text{ for }z=m,h,n,
\end{equation}
where $C$ is the cell membrane capacitance; $V_{i}$ is the membrane
potential (voltage); $m_{i}$, $h_{i}$, and $n_{i}$ are gating variables;
$V_{Na},V_{K}$, and $V_{L}$ are the reversal potentials for the
sodium, potassium, and leak currents, respectively; and $G_{Na},G_{K}$,
and $G_{L}$ are the corresponding maximum conductances. The rate
variables $\alpha_{z}$ and $\beta_{z}$ are defined as \cite{dayan2001theoretical}

%\begin{align*}
%\alpha_{m}(V) & =(0.1V+4)/[1-\exp(-0.1V-4)],\ \hspace{0.8cm}\beta_{m}(V)=4\exp(-(V+65)/18)\\
%\alpha_{h}(V) & =0.07\exp(-(V+65)/20),\ \hspace{2.1cm}\beta_{h}(V)=1/[1+\exp(-3.5-0.1V)],\\
%\alpha_{n}(V) & =(0.01V+0.55)/[1-\exp(-0.1V-5.5)],\ \hspace{0.1cm}\beta_{n}(V)=0.125\exp(-(V+65)/80).
%\end{align*}

\[
\begin{aligned}\alpha_{m}(V) & =\frac{0.1V+4}{1-\exp(-0.1V-4)}, &  & \beta_{m}(V)=4\exp\left(\frac{-(V+65)}{18}\right),\\
\alpha_{h}(V) & =0.07\exp\left(\frac{-(V+65)}{20}\right), &  & \beta_{h}(V)=\frac{1}{1+\exp(-3.5-0.1V)},\\
\alpha_{n}(V) & =\frac{0.01V+0.55}{1-\exp(-0.1V-5.5)}, &  & \beta_{n}(V)=0.125\exp\left(\frac{-(V+65)}{80}\right).
\end{aligned}
\]

The input current $I_{i}^{\textrm{input}}=-G_{i}(t)(V_{i}-V_{E})$,
where the conductance is defined as 
\begin{equation*}
G_{i}(t)=f\sum_{l}H(t-s_{il})+\sum_{j}A_{ij}S\sum_{l}H(t-\tau_{jl}),\label{eq:G}
\end{equation*}
with $V_{E}$ being the excitatory reversal potential and $s_{il}$ being
the $l$th spike time of the external Poisson input with strength $f$
and rate $\nu$. The spike-induced conductance change $H(t)$ is defined
by \cite{dayan2001theoretical}
\begin{equation}
H(t)=\frac{\sigma_{d}\sigma_{r}}{\sigma_{d}-\sigma_{r}}\left[\exp\left(-\frac{t}{\sigma_{d}}\right)-\exp\left(-\frac{t}{\sigma_{r}}\right)\right]\Theta(t),\label{eq:alpha}
\end{equation}
where $\sigma_{d}$ and $\sigma_{r}$ are the decay and
rise time scale, respectively, and $\Theta(\cdot)$ is the Heaviside
function. $\mathbf{A}=(A_{ij})$ is the adjacency matrix with $A_{ij}=1$
indicating a directed connection from neuron $j$ to neuron $i$ and
$A_{ij}=0$ indicating no connection from neuron $j$ to neuron $i$, $S$ is the coupling strength,
and $\tau_{jl}$ is the $l$th spike time of the $j$th neuron. 

We take the parameters as in Ref. \cite{dayan2001theoretical} that $C=1\mu\textrm{F\ensuremath{\cdot}cm}^{-2}$,
$V_{Na}=50$ mV, $V_{K}=-77$ mV, $V_{L}=-54.387$ mV, $G_{Na}=120\textrm{ mS\ensuremath{\cdot}cm}^{-2}$,
$G_{K}=36\textrm{ mS\ensuremath{\cdot}cm}^{-2}$, $G_{L}=0.3\textrm{ mS\ensuremath{\cdot}cm}^{-2}$,
and $V_{E}=0$ mV. We set synaptic time constants as $\sigma_{r}=0.5$ ms
and $\sigma_{d}=3.0$ ms. For simplicity,
we set the Poisson input parameters as $f=0.1 \textrm{ mS\ensuremath{\cdot}cm}^{-2}$
and $\nu=100$ Hz, unless indicated otherwise. However, the conclusions
shown in this work hold for a wide range of parameters corresponding to different dynamical regimes. 

When the voltage $V_{i}$ reaches the firing threshold, $V^{\textrm{th}}=-50$
mV, we say the $i$th neuron generates a spike at this time. Instantaneously,
all of its postsynaptic neurons receive this spike and the affected
change of conductance follows Eq. $\ref{eq:alpha}$. 

\subsection{HH neural network model of both excitatory and inhibitory populations}
For the HH network of both excitatory and inhibitory neurons, the dynamics of the $i$th HH neuron is also governed by Eqs. \ref{eq:HH V} and \ref{eq:HH mhn}.  But the input current $I_{i}^{\text{input}}$
is given by 
\[I_{i}^{\textrm{input}}=-G_{i}^{E}(t)(V_{i}-V_E)-G_{i}^{I}(t)(V_{i}-V_I),\]
where $G_{i}^{E}$ and $G_{i}^{I}$ are excitatory and inhibitory
conductances, respectively, $V_E$ and $V_I$ are the
corresponding reversal potentials. The conductances are defined
as 
\begin{equation*}
\begin{aligned}G_{i}^{E}(t) & =f\sum_{l}H(t-s_{il};\sigma_{d}^{E},\sigma_{r}^{E})+\sum_{j}A_{ij}S^{E}\sum_{l}H(t-\tau_{jl};\sigma_{d}^{E},\sigma_{r}^{E}),\\
G_{i}^{I}(t) & =\sum_{j}A_{ij}S^{I}\sum_{l}H(t-\tau_{jl};\sigma_{d}^{I},\sigma_{r}^{I}),
\end{aligned}
\end{equation*}
where $H$ is given in Eq. \ref{eq:alpha} and $S^E$ and $S^I$ are the excitatory and inhibitory coupling strengths, respectively.   The parameters
are set as $V_E=0$ mV, $V_I=-80$ mV, $\sigma_{r}^{E}=0.5$
ms, $\sigma_{d}^{E}=3.0$ ms, $\sigma_{r}^{I}=0.5$ ms, $\sigma_{d}^{I}=7.0$
ms. The HH neural network and the previous one of only excitatory population are efficiently simulated by an adaptive  method introduced in Ref. \cite{tian2020exponential}.

% \subsection*{Current-based integrate-and-fire (I\&F) neural network  model}
% The dynamics of a current-based I\&F network is governed by

% \begin{equation}
% C\frac{dV_{i}(t)}{dt}=-g_{L}(V_{i}-V_{R})+I_{i}(t),\label{eqn:model_a}
% \end{equation}
% where $C$ is the membrane capacitance, $V_{i}$ is the membrane potential (voltage) of the $i$th excitatory
% neuron, $g_{L}$ is the leakage conductance, $V_{R}$ is the resting
% potential, and $I_{i}(t)$ is the input current. The input current
% $I_{i}(t)$ is given by 
% \begin{equation*}
% I_{i}(t)=f\sum_{l}\delta(t-s_{il})+\sum_{j}A_{ij}S\sum_{l}\delta(t-\tau_{jl}),\label{eqn:model_b}
% \end{equation*}
% where $s_{il}$ is the $l$th spike time of the external Poisson input
% with strength $f$ and rate $\nu$, $\mathbf{A}=(A_{ij})$ is the
% adjacency matrix, $S$ is the coupling strength, and $\tau_{jl}$
% is the $l$th spike time of the $j$th neuron the same as those in the HH model.
% When the voltage $V_{i}$ reaches the firing threshold $V^{\text{th}}$, we say the
% $i$th neuron generates a spike and its voltage is reset to the value
% $V_{R}$. After the reset, $V_{i}$ is governed by Eq. \ref{eqn:model_a}
% again. In numerical simulation, we use non-dimensional values:
% $C=1, V_{R}=0$, $V^{\text{th}}=1$, and the leakage conductance is set to be $g_{L}=0.05$
% $\textrm{ms}^{-1}$ that corresponds to the membrane time constant of 20 ms. 
% \subsection*{Pulse-output Lorenz network model}
% The dynamics of the  $i$th node of a Lorenz network is governed by 
% \begin{equation*}
% \begin{alignedat}{1}\frac{dx_{i}}{dt} & =\sigma(y_{i}-x_{i})+\sum_{j}A_{ij}S\sum_{l}\delta(t-\tau_{jl})\\
% \frac{dy_{i}}{dt} & =\rho x_{i}-y_{i}-x_{i}z_{i}\\
% \frac{dz_{i}}{dt} & =-\beta z_{i}+x_{i}y_{i},
% \end{alignedat}
% \end{equation*}
% where $\sigma=10,\beta=8/3,\rho=28$. When $x_{i}$ reaches a 
% threshold $x^{\textrm{th}}=10$, it generates a pulse that induces a change in $x$ of
% all of its post nodes. 

\newpage
\section*{Properties of pulse-like signals}
\begin{figure}[H]
	\begin{centering}
		%		\includegraphics[width=0.8\textwidth]{figs/SI_ACF}
		\includegraphics[width=0.8\textwidth]{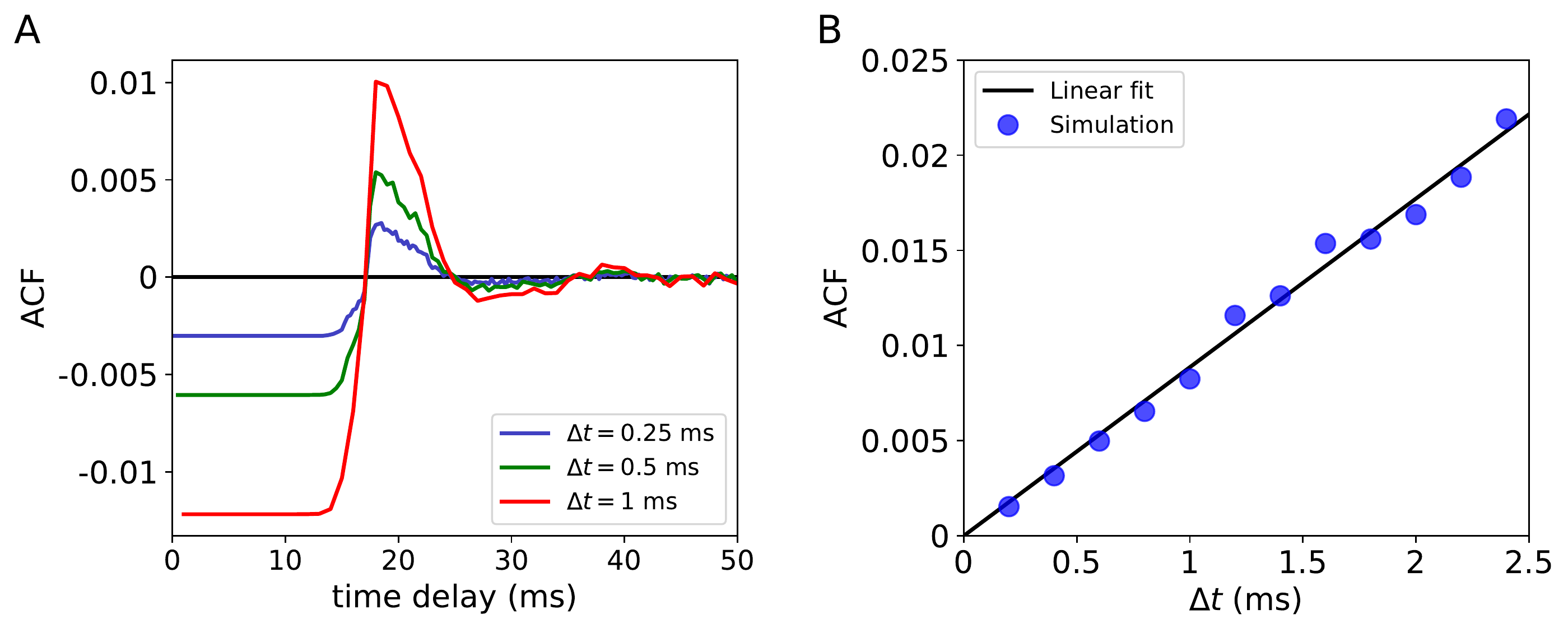}
		\par\end{centering}
	\centering{}\caption{
		Relation between ACF and sampling time bin $\Delta t$. 
		(\textit{A}) ACF curves as a function of time delay with $\Delta t=0.25$ (blue), $0.5$ (green), and 1 ms (red), respectively.
		(\textit{B}), ACF values at a fixed time delay 20 ms plotted as a function of $\Delta t$.
		The black line is a linear fit with $R^2=0.985$ which is consistent the derivation in Eq. \ref{eq:ACF vs dt}. When $\Delta{t}$ is sufficiently small, the magnitude of auto-correlation of binary time series is also small, indicating that the binary time series become almost whitened. 
		This is the results of a sample neuron driven with Poisson input $f=0.1 \textrm{ mS\ensuremath{\cdot}cm}^{-2}$
		and $\nu=100$ Hz.  
		\label{fig:SI ACF}}
\end{figure}

\begin{figure}[H]
	\begin{centering}
		%		\includegraphics[width=0.8\textwidth]{figs/SI_order_kl}
		\includegraphics[width=0.8\textwidth]{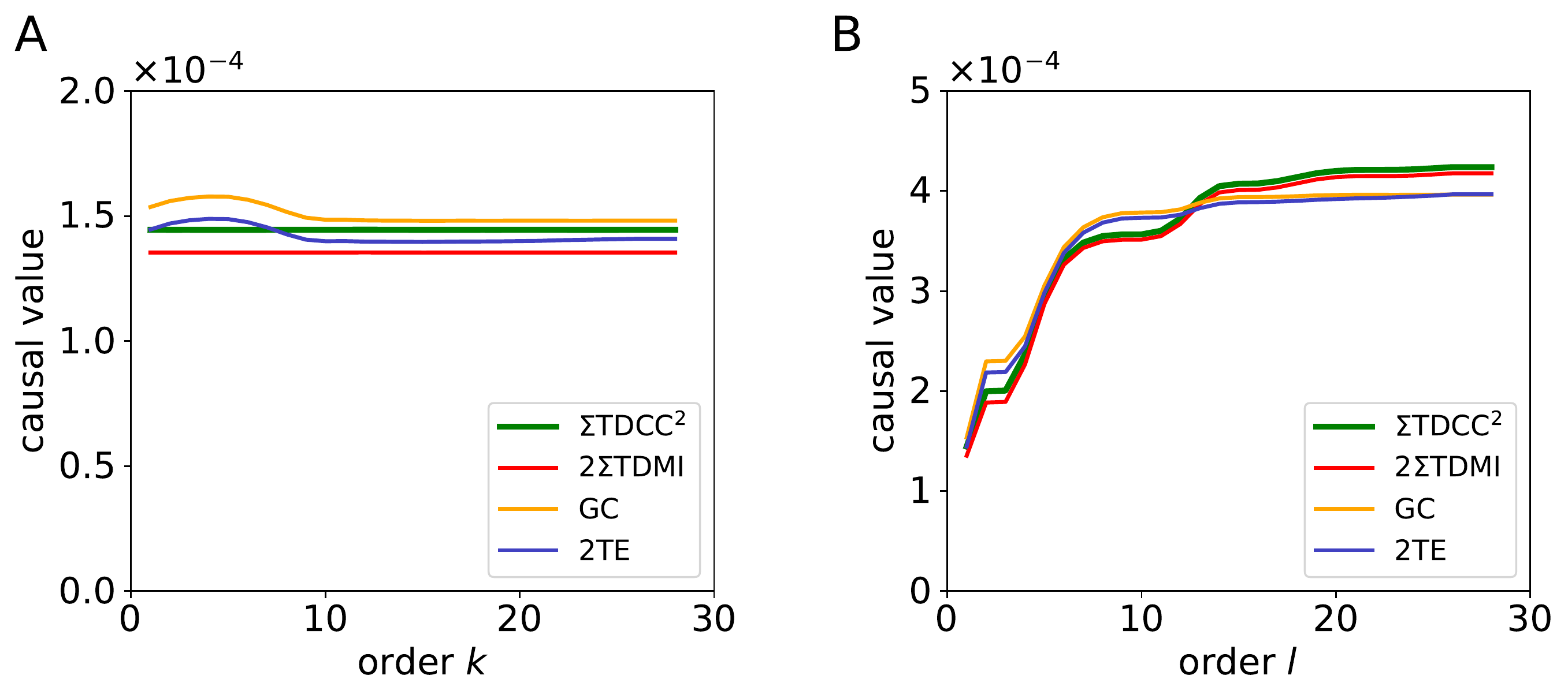}
		\par\end{centering}
	\caption{Causal values as a function of (\textit{A}) order $k$ and (\textit{B})
		order $l$ with a large $\Delta t$ obtained from causality measures from a  pair of neurons denoted by $X$ and $Y$ with unidirectional connection from $Y$ to $X$ in an HH network of 10 excitatory neurons. The HH network is randomly connected with probability 0.25. The causal values are computed from neuron $Y$ to
		neuron $X$.
		The parameters are set as $m=2$ (3 ms), $f=0.1$ $\textrm{mS\ensuremath{\cdot}cm}^{-2}$,
		$\nu=300$ Hz, $S=0.02$ $\textrm{mS\ensuremath{\cdot}cm}^{-2}$,
		$\Delta t=1.5$ ms, and $l=1$ in (\textit{A}) and $k=1$ in (\textit{B}) (all significantly greater than those of randomly surrogate time series, $p<0.05$). %The gray dashed curve is the threshold level of causal values for two independent neurons in the HH network with the same dynamical regime. 
		The relation among the four causality measures revealed by theorems \textcolor{blue}{1}-\textcolor{blue}{4} in the main text  still holds when choosing the orders of $k=28$ in (\textit{A}) or $l=28$ in (\textit{B}), in both cases the event $\Bigl\Vert x_{n+1}^{(k+1)}\Bigr\Vert_{0}\geq2$ or $\Bigl\Vert y_{n+1-m}^{(l)}\Bigr\Vert_{0}\geq2$ happens with a frequency more than 44\%. This result indicates that the assumption of $\Bigl\Vert x_{n+1}^{(k+1)}\Bigr\Vert_{0}\leq1$ and $\Bigl\Vert y_{n+1-m}^{(l)}\Bigr\Vert_{0}\leq1$ is a sufficient but not necessary condition in the derivation of the quantitative relation
		between TE and TDMI.
		\label{fig:SI order kl} }
\end{figure}

%\newpage
%\section*{Dependence of $\Delta p_m$ with respect to $\Delta t$}

\begin{figure}[H]
	\begin{centering}
		%		\includegraphics[width=0.5\textwidth]{figs/SI_conv_dp_dt}
		\includegraphics[width=0.5\textwidth]{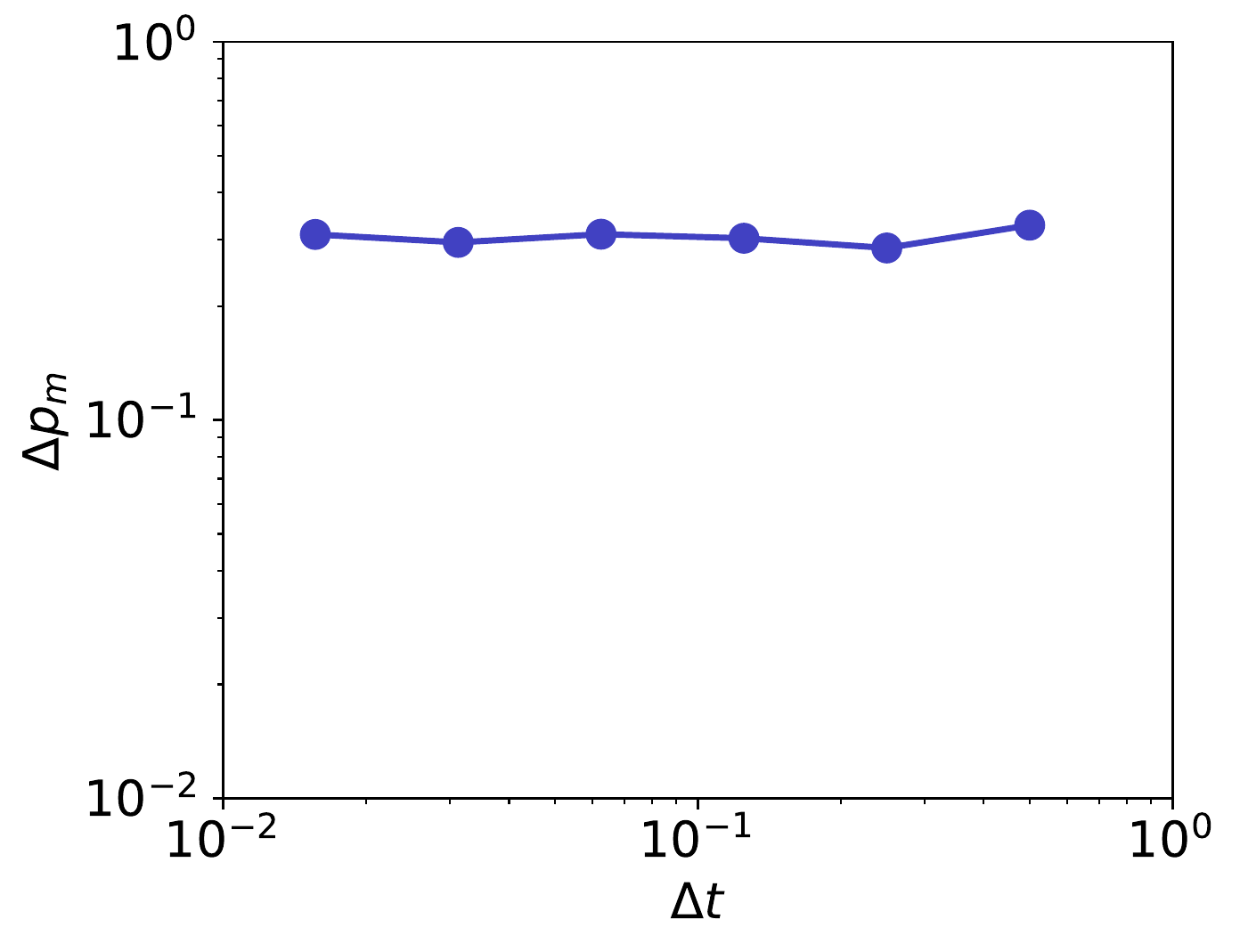}
		\par\end{centering}
	\caption{$\Delta p_{m}$ is insensitive to
		sampling resolution $\Delta t$ obtained from neuron $Y$ to
		neuron $X$ in the HH network in Fig. \ref{fig:SI order kl}. The parameters are set as $f=0.1 \textrm{ mS\ensuremath{\cdot}cm}^{-2}$, $\nu=100$ Hz, a fixed time delay 3 ms and
		$S=0.02$ $\textrm{mS\ensuremath{\cdot}cm}^{-2}$.}
\end{figure}

\newpage
\section*{Consistency among causality measures across different dynamical regimes}

\begin{figure}[H]
	\begin{centering}
		%	
		\includegraphics[width=0.55\textwidth]{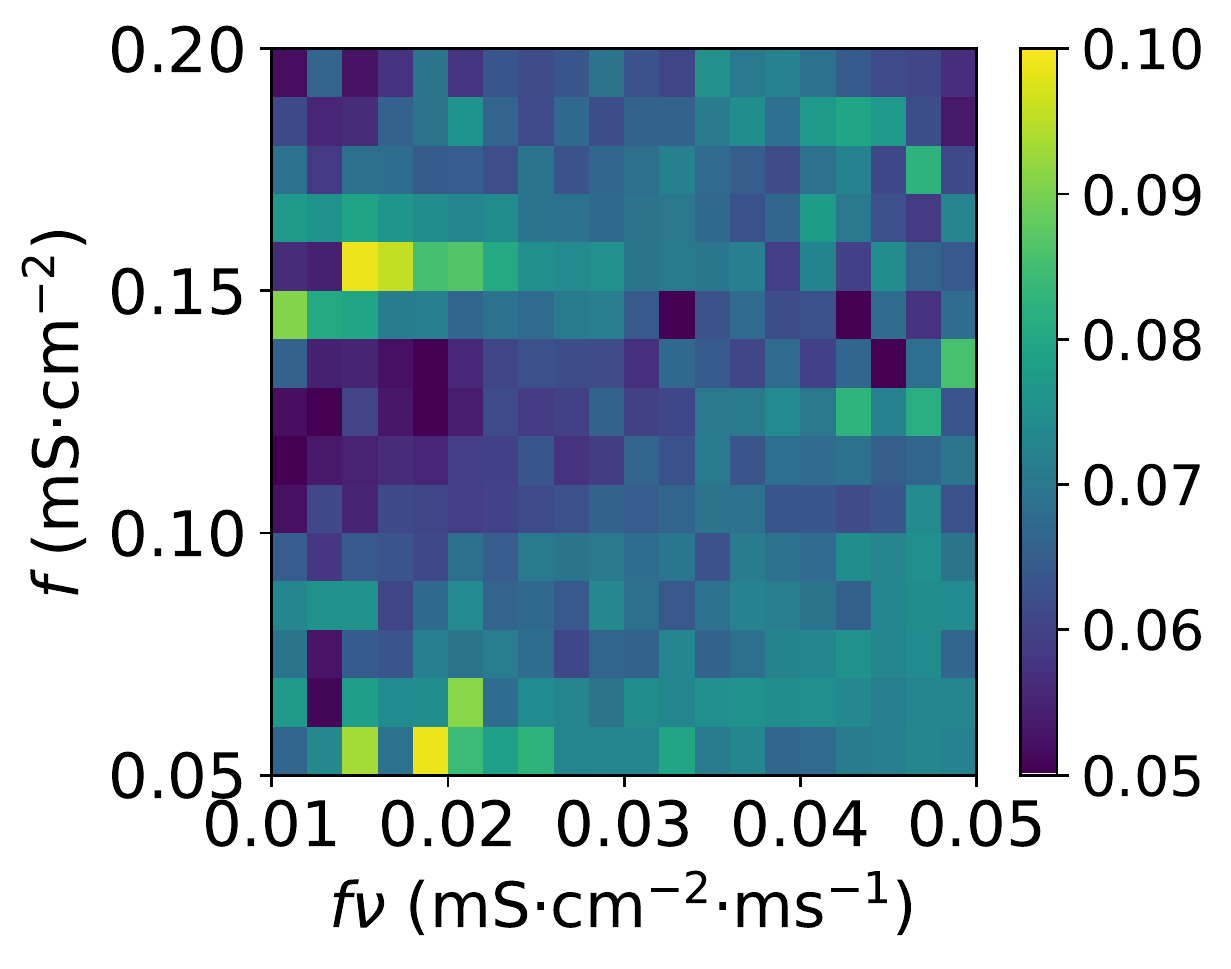}
		\par\end{centering}
	\caption{Relative error of causal values with different external Poisson input parameters $f$ and $\nu$ obtained from neuron $Y$ to
		neuron $X$ in the HH network in Fig. \ref{fig:SI order kl}. 
		The relative error is computed by 
		$\frac{\max\{\Sigma\text{TDCC}^{2},2\Sigma\text{TDMI},\text{\text{GC},\ensuremath{\text{TE}}}\}-\min\{\Sigma\text{TDCC}^{2},2\Sigma\text{TDMI},\text{\text{GC},\ensuremath{\text{TE}}}\}}{\max\{\Sigma\text{TDCC}^{2},2\Sigma\text{TDMI},\text{\text{GC},\ensuremath{\text{TE}}}\}}$
		and indicates that the mathematical relations revealed by theorems \textcolor{blue}{1}-\textcolor{blue}{4} in the main text hold for a wide range of Poisson input parameters. 
		Other parameters are set as $\Delta t=0.5$ ms, $k=l=1$, $S=0.01$ $\textrm{mS\ensuremath{\cdot}cm}^{-2}$, 
		and $m=6$ (3 ms).
	}
\end{figure}

\newpage
\section*{Reconstruction of structure connectivity for asynchronous state}
\begin{figure}[H]
	\begin{centering}
		%		\includegraphics[width=1\textwidth]{figs/SI_ROC}
		\includegraphics[width=1\textwidth]{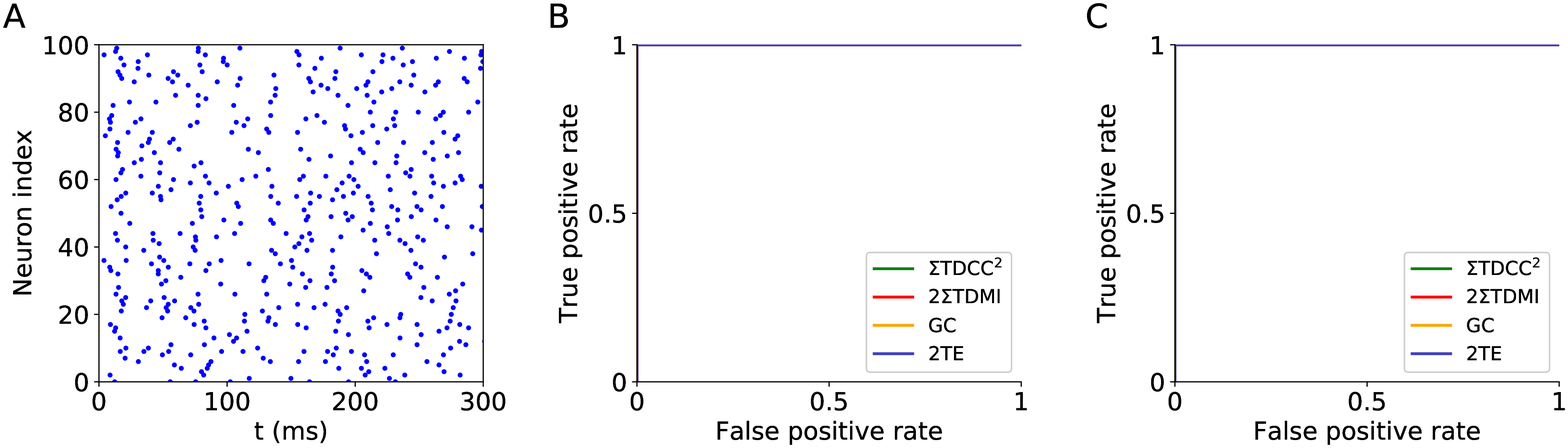}
		\par\end{centering}
	\caption{Performance of the causality measures in an HH network in the asynchronous state. The network is composed of 100 excitatory neurons randomly connected
		with probability 0.25. (\textit{A}) Raster plot of neuronal firing indicating that the network is in an asynchronous state.
		(\textit{B}) ROC curves of the full
		HH network with $\text{AUC}=1$. (\textit{C}) ROC curves of an HH subnetwork of 20 neurons with $\text{AUC}=1$.
		The green curve is for $C^{2}(X,Y;m)$, the red
		curve is for $2I(X,Y;m)$, the orange curve is for $G_{Y\rightarrow X}(k,l;m)$,
		and the  blue curve is for $2T_{Y\rightarrow X}(k,l;m)$.
		The ROC curves for TDCC, TDMI, GC, and TE overlap with each other.
		The parameters are set as $f=0.1 \textrm{ mS\ensuremath{\cdot}cm}^{-2}$, $\nu=100$ Hz, $\Delta t=0.5$ ms, $k=l=1$, $S=0.02$
		$\textrm{mS\ensuremath{\cdot}cm}^{-2}$, and $m=6$ (3
		ms).\label{fig:SI ROC}}
\end{figure}

\newpage
\section*{Dependence of $\delta p$ on $S$}
\begin{figure}[H]
	\begin{centering}
		%		\includegraphics[width=1\textwidth]{figs/SI_ROC}
		\includegraphics[width=0.5\textwidth]{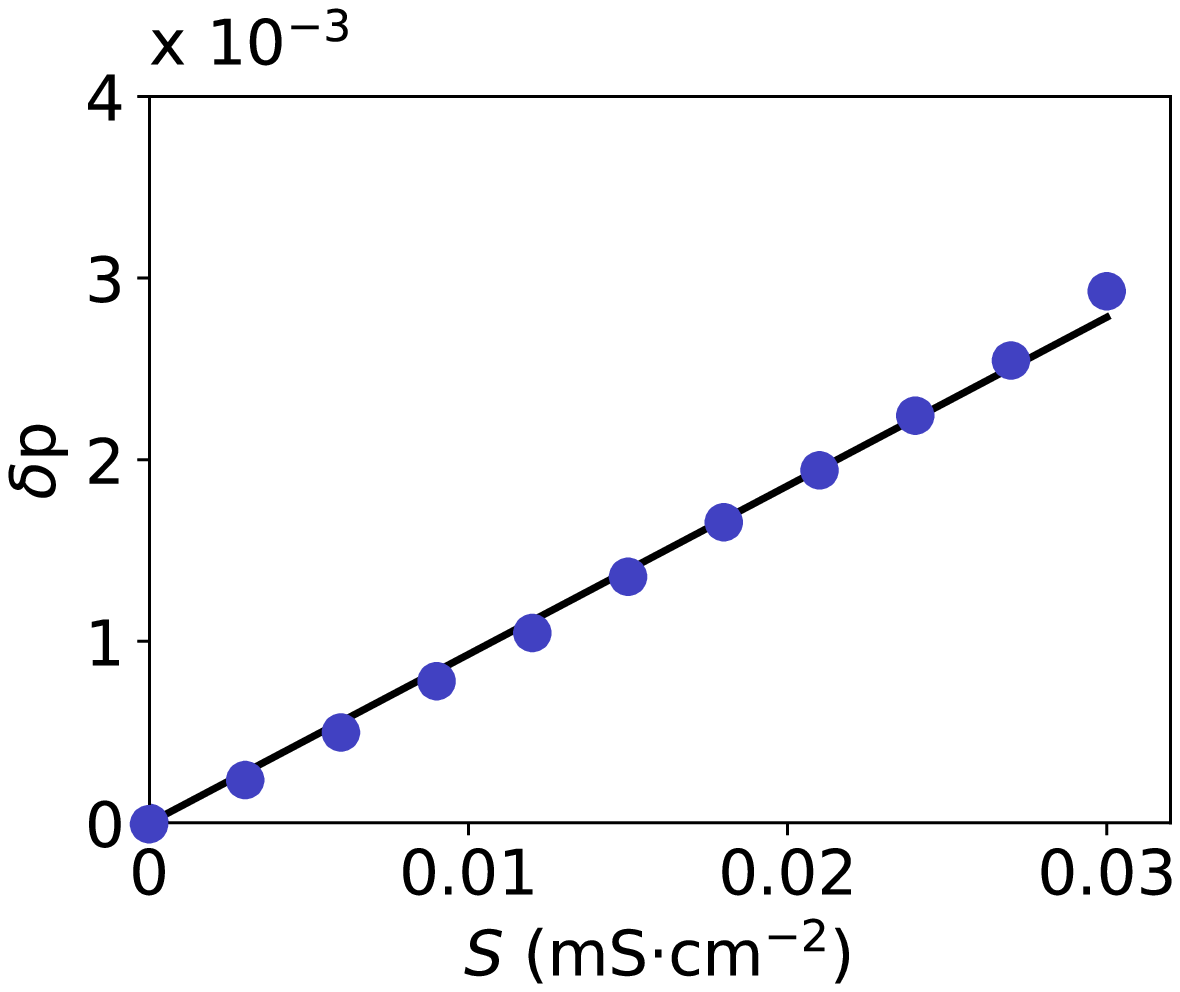}
		\par\end{centering}
	\caption{$\delta{p_{Y\rightarrow X}}=p(x_{n}=1|y_{n-m}=1)-p(x_{n}=1|y_{n-m}=0)$ as a function of the coupling strength $S$ from neuron $Y$ to neuron $X$ in an HH network of two excitatory neurons with unidirectional connection from $Y$ to $X$, which measures the increment of probability  of generating a pulse output by neuron $X$ at time step $n$ induced by a pulse-output  signal of neuron $Y$ at an earlier time step $n-m$. The black line is a linear fit with $R^2=0.996$. The parameters are set as $f=0.1 \textrm{ mS\ensuremath{\cdot}cm}^{-2}$, $\nu=100$ Hz, $\Delta t=0.5$ ms, and $m=6$ (3 ms).}
\end{figure}

\newpage
\section*{Reconstruction of structure connectivity with experimental data}
\begin{figure}[H]
	\centering{}\includegraphics[width=1\textwidth]{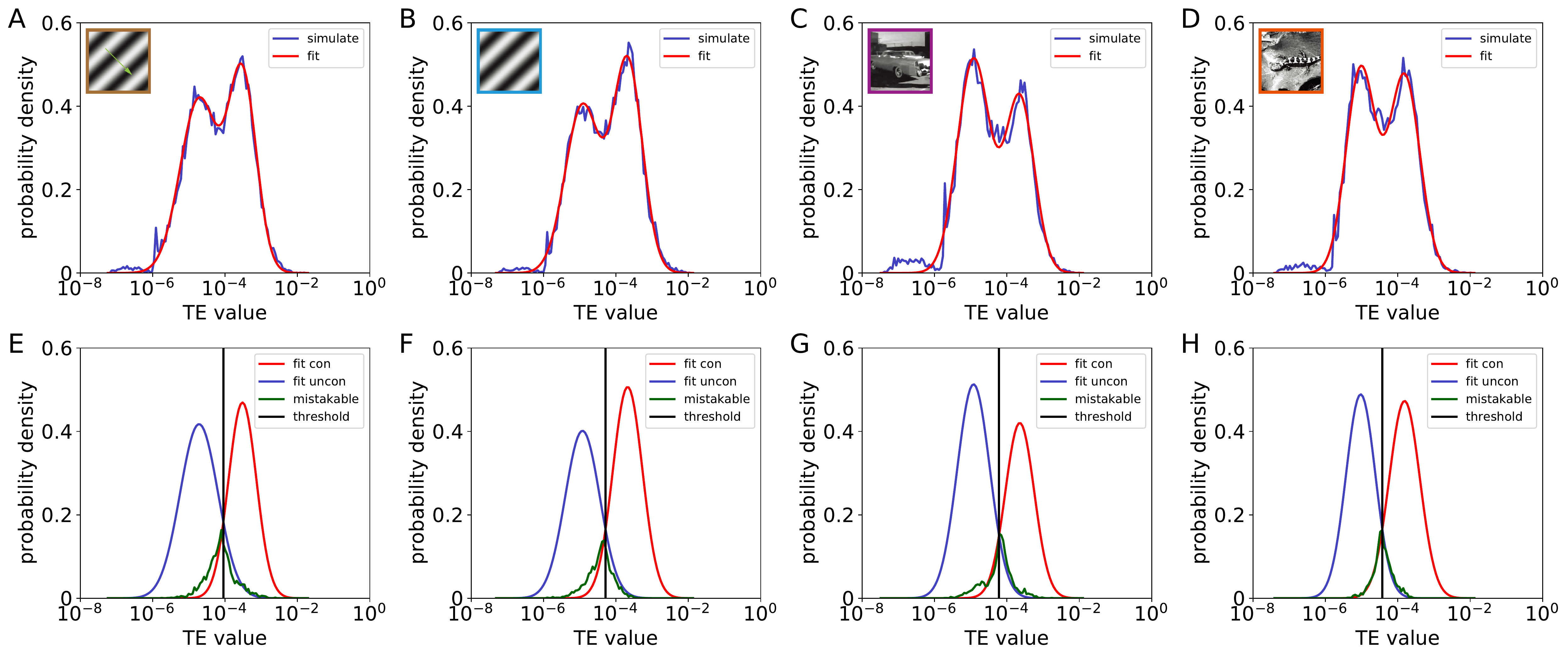}
	\caption{Reconstruction of structural connectivity by the assumption of log-normal
		distribution of causal values for experimental spike data. (Top panel): Distribution of TE values in the network composed by the observed neurons in experiment under visual stimuli of (\textit{A}) drifting gratings, (\textit{B}) natural movie, (\textit{C}) static gratings, and (\textit{D}) natural scenes. The blue and red curves are the computed and fitted distributions, respectively. (Bottom panel): Distributions of fitted TE values from connected (red) and unconnected (blue) pairs which are obtained from the fitting in top panel. The black vertical line is the optimal inference threshold and the green curve is the mistakable causal values. We use the experimental spike data (sections id 715093703 at \url{https://allensdk.readthedocs.io/}) with signal-to-noise ratio greater than 4 and firing rate greater than 0.05 Hz. The parameters are set as  $k=1,l=10$, $\Delta t=1$ ms, and $m=1$. \label{fig:allen}
	}
\end{figure}

\newpage
\section*{Verify the log-normal distributed assumption for causality measures}
\begin{figure}[H]
	\centering{}\includegraphics[width=0.8\textwidth]{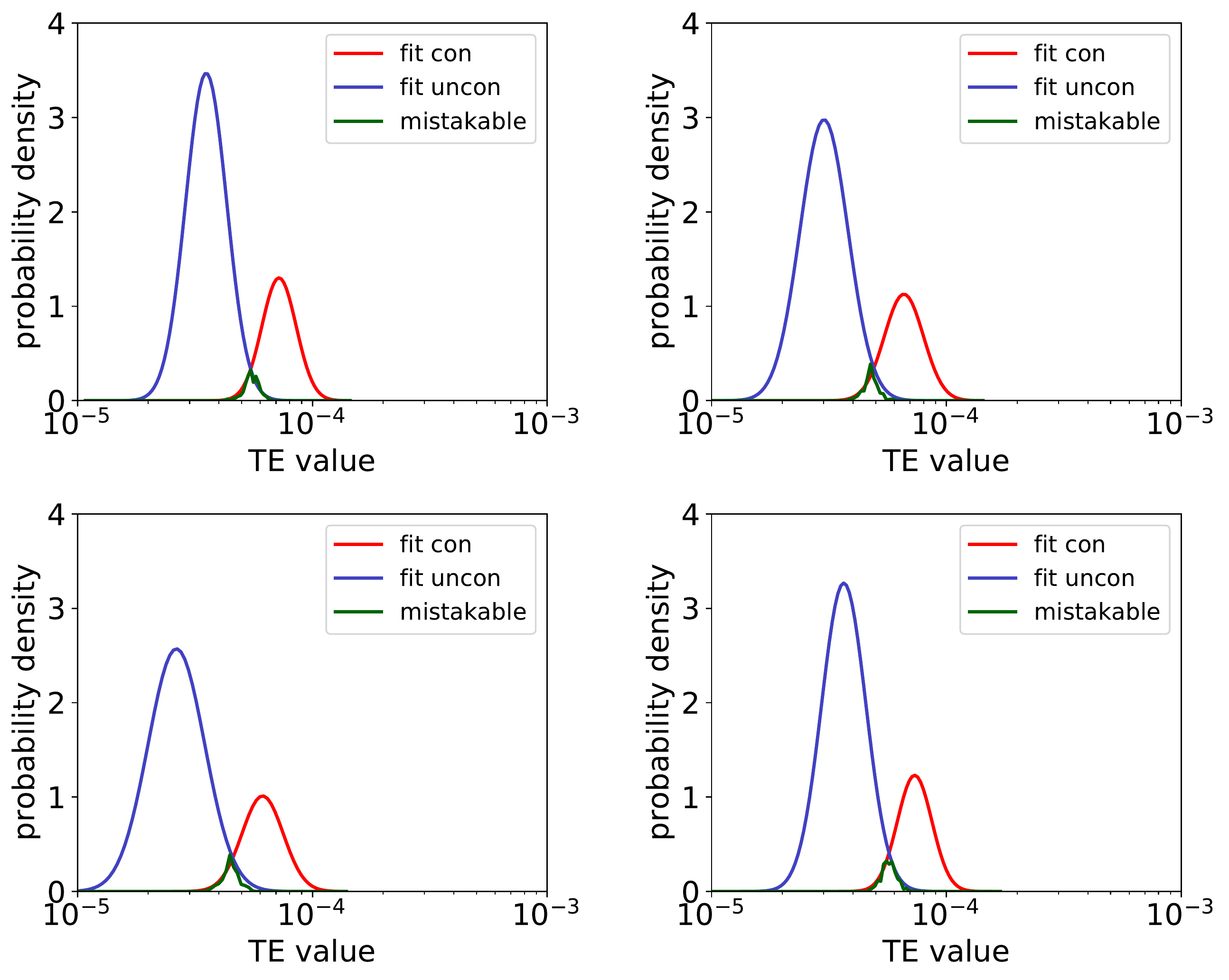}
	\caption{Reconstruction of structural connectivity by the assumption of log-normal
		distribution of causal values for an HH network of 100
		excitatory neurons receiving correlated external Poisson input. The correlation coefficient of the Poisson input to each neuron is generated to be 0.3. 
		The parameters are the same as those in Fig. \ref{fig:SI ROC} except the Poisson input rate which is (\textit{A}) $\nu=90$ Hz,
		(\textit{B}) $\nu=100$ Hz, (\textit{C}) $\nu=110$ Hz, and (\textit{D}) $\nu=120$ Hz. The colors are the same as those in Fig. \ref{fig:allen}.
	}
\end{figure}

\newpage
\section*{Continuous-valued signals breaks the mathematical relations among four causality measures}
\begin{figure}[H]
	\begin{centering}
		%		\includegraphics[width=0.8\textwidth]{figs/SI_ICfail}
		\includegraphics[width=0.8\textwidth]{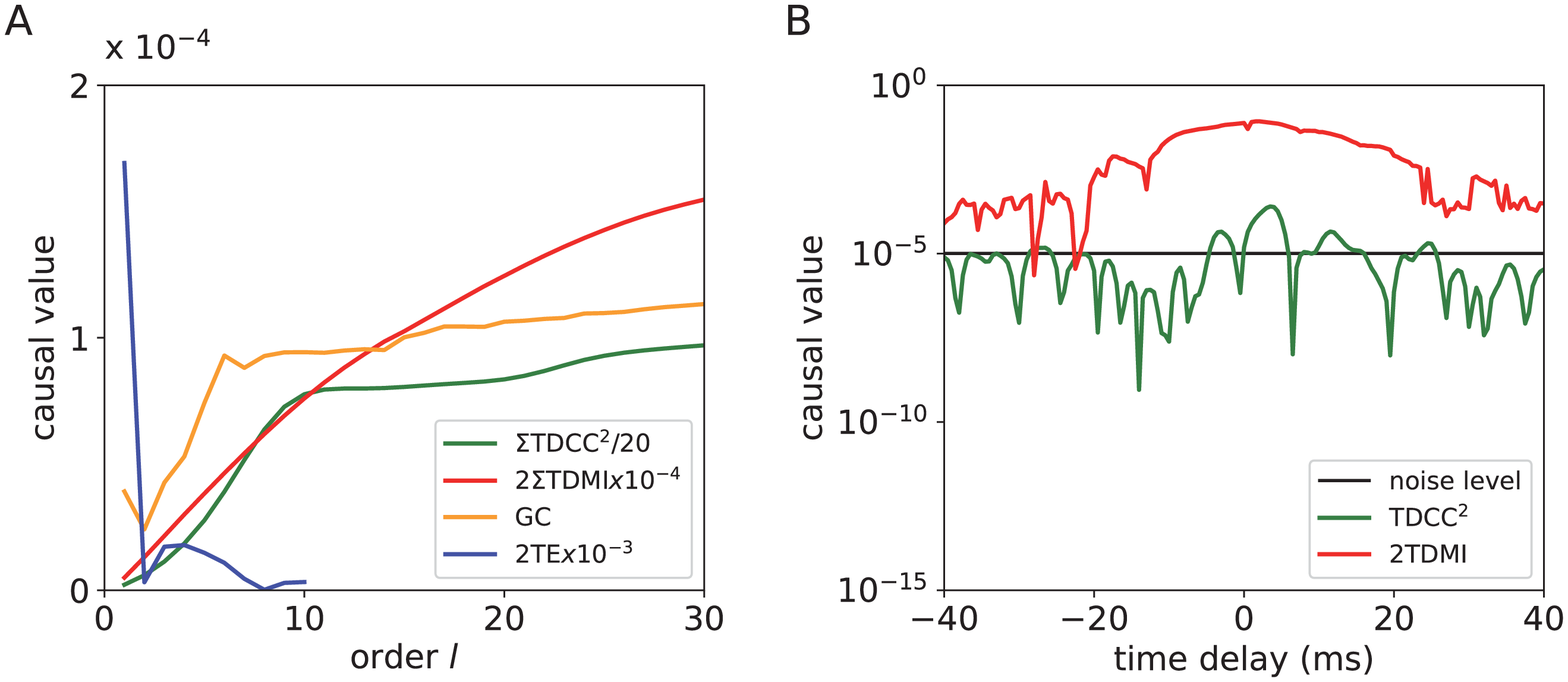}
		\par\end{centering}
	\caption{The break of the mathematical relations among the causality measures for continuous-valued voltage time series obtained from neuron $Y$ to neuron $X$ in the HH network in Fig. \ref{fig:SI order kl}.
		(\textit{A}) TDCC, TDMI, GC, and TE  as a function
		of order $l$ computed from continuous-valued voltage time series. The order $l$
		for TE is cut off at 10 due to the exponential increase of computational cost. (\textit{B})
		TDCC and TDMI as a function of time delay with positive (negative)
		delay corresponding to the calculation of causal values from $Y$ to $X$ (from $X$ to
		$Y$). The black line represents the noise level, which is obtained as the
		largest value of TDCC (TDMI) after shuffling the time series and computing TDCC (TDMI) between the shuffled signals for 100 times. A bidirectional connection between $X$ and $Y$ will be incorrectly inferred by TDMI due to the strong self-correlation of the continuous-valued voltage time series. 
		The parameters are set as $f=0.1 \textrm{ mS\ensuremath{\cdot}cm}^{-2}$, $\nu=100$ Hz, order $k=l$ and $m=1$ in (\textit{A}), and  $S=0.02$ $\textrm{mS\ensuremath{\cdot}cm}^{-2}$, $\Delta t=0.5$ ms in (\textit{A}) and (\textit{B}).
		\label{fig:SI conti TGIC}}
\end{figure}

\newpage
\section*{Reconstruction of structure connectivity in more general situations}

\begin{figure}[H]
	\begin{centering}
		%		\includegraphics[width=1\textwidth]{figs/SI_synchro}
		\includegraphics[width=1\textwidth]{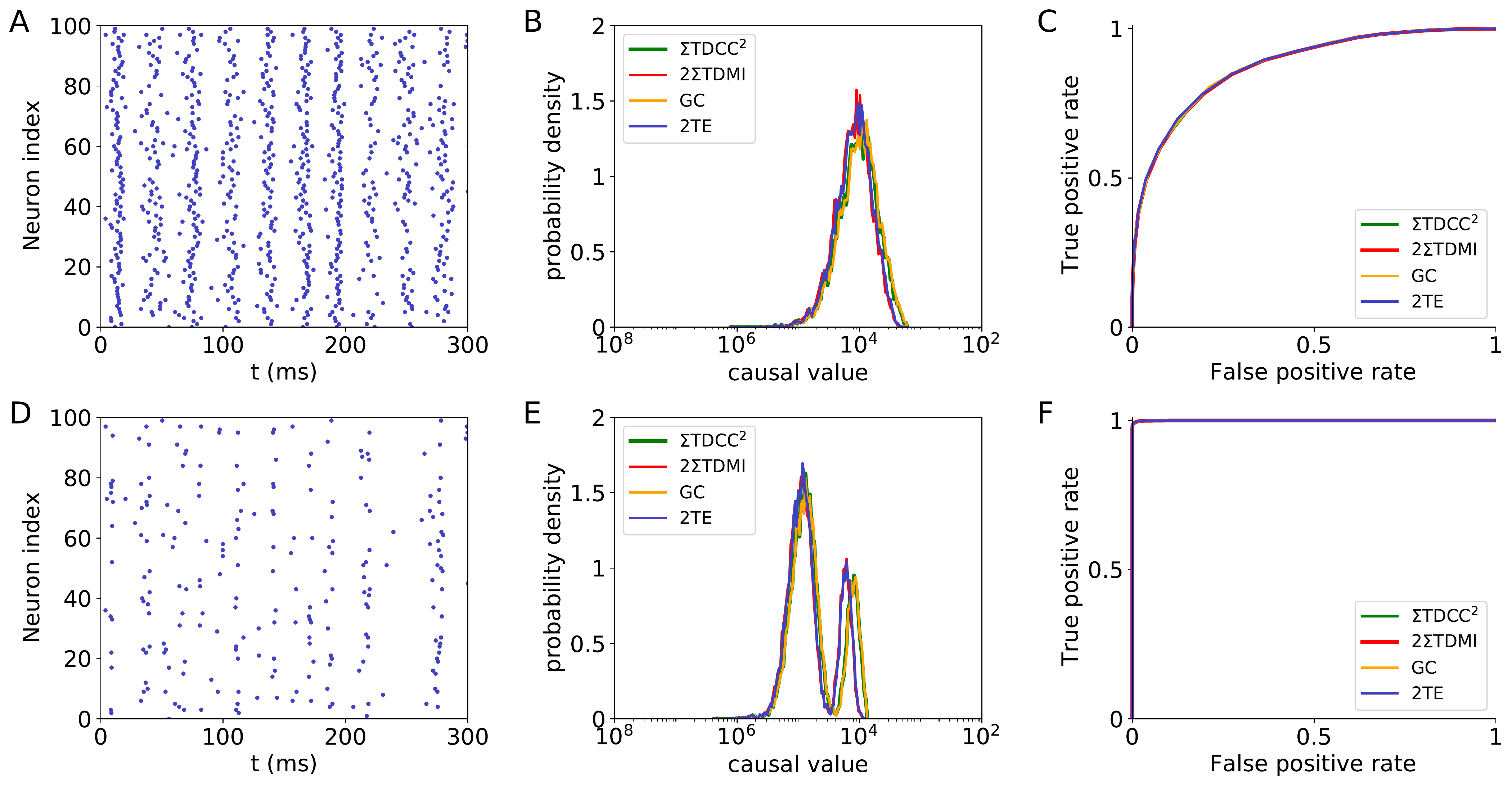}
		\par\end{centering}
	\caption{Performance of the causality measures in an HH network of 100 excitatory neurons in the nearly
		synchronous state. (Top panel): Results using the original spike train.
		(Bottom panel): Results using the spike train from desynchronized sampling that only samples the pulse-output 
		signals in the asynchronous time interval. (\textit{A,
			D}): Raster plot of the neuronal firing. (\textit{B, E}): The distribution of causal values
		of each pair of neurons in the whole network.
		(\textit{C, F}): ROC curves of the HH network with $\text{AUC}=0.88$ (upper) and $\text{AUC}=0.99$ (lower).
		The ROC curves for TDCC, TDMI, GC, and TE nearly overlap.
		The colors and parameters are the same as those in Fig. \ref{fig:SI ROC},
		except that the coupling strength $S=0.028$ $\textrm{mS\ensuremath{\cdot}cm}^{-2}$.
		\label{fig:SI synchronize}}
\end{figure}

\begin{figure}[H]
	\begin{centering}
		\includegraphics[width=0.5\textwidth]{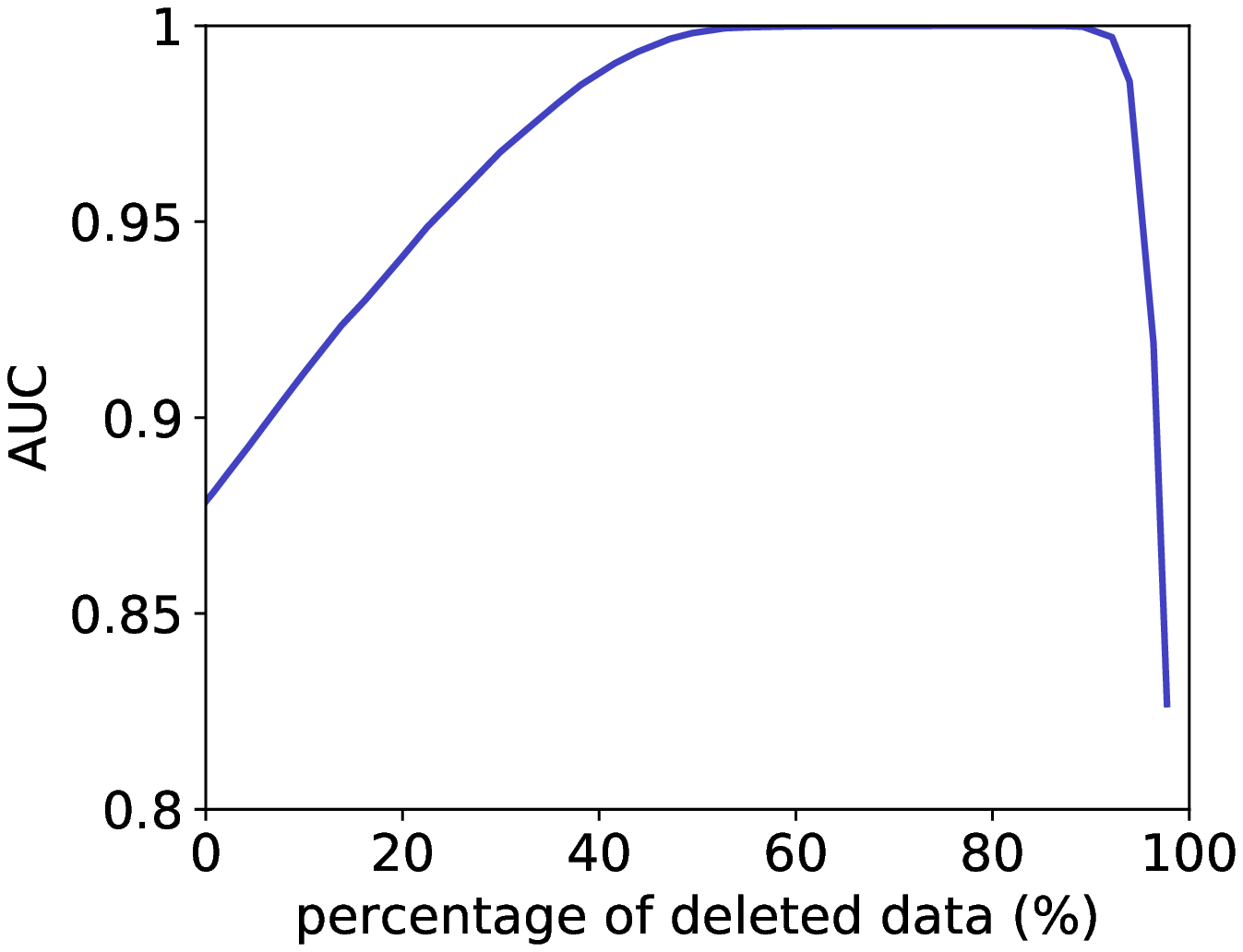}
		\par\end{centering}
	\caption{AUC as a function of percentage of deleted data in the spike train of the HH network in Fig. \ref{fig:SI synchronize}A.
		78 $\%$ of the spike data (in the more synchronous time interval) are deleted in  Fig. \ref{fig:SI synchronize}D. }
\end{figure}

\newpage
\begin{figure}[H]
	\begin{centering}
		%		\includegraphics[width=1\textwidth]{figs/SI_commonP}
		\includegraphics[width=1\textwidth]{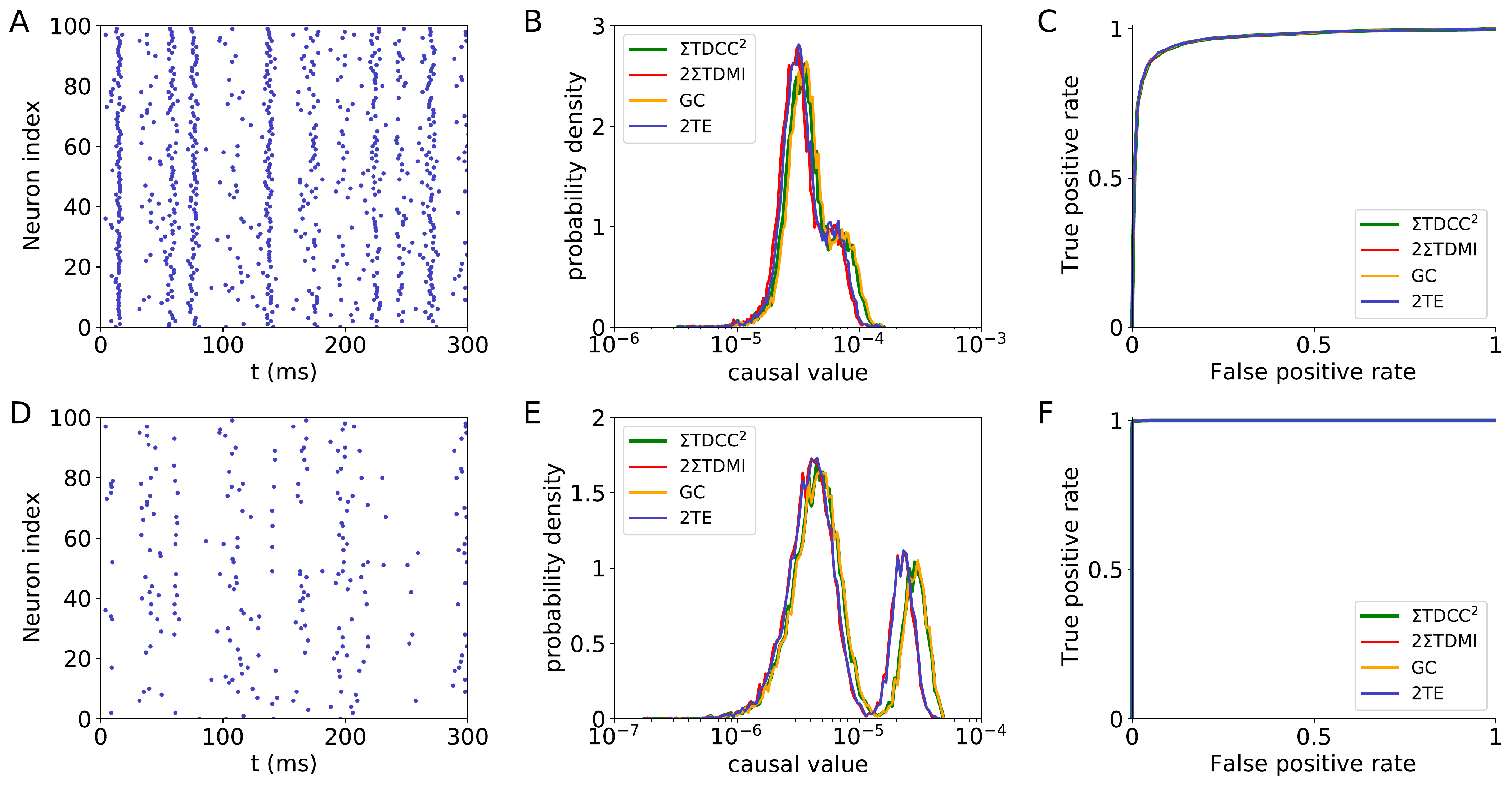}
		\par\end{centering}
	\caption{Performance of the causality measures in an HH network of 100 excitatory neurons receiving correlated
		external Poisson inputs. The correlation coefficient of the Poisson input to each neuron is generated to be 0.35.
		(Top panel): Results using the original spike
		train. (Bottom panel): Results using the spike train from desynchronized sampling.
		(\textit{A, D}): Raster plot of the neuronal firing. (\textit{B, E}):
		The distribution of causal values
		of each pair of neurons in the whole network. (\textit{C, F}): ROC curves of the
		HH network with $\text{AUC}=0.96$ (upper) and $\text{AUC}=0.99$ (lower). The ROC curves for TDCC, TDMI, GC, and TE nearly
		overlap. The colors and parameters are the same as those in Fig.
		\ref{fig:SI ROC}. \label{fig:SI common Poisson input}}
\end{figure}

\begin{figure}[H]
	\begin{centering}
		%		\includegraphics[width=1\textwidth]{figs/SI_EI}
		\includegraphics[width=1\textwidth]{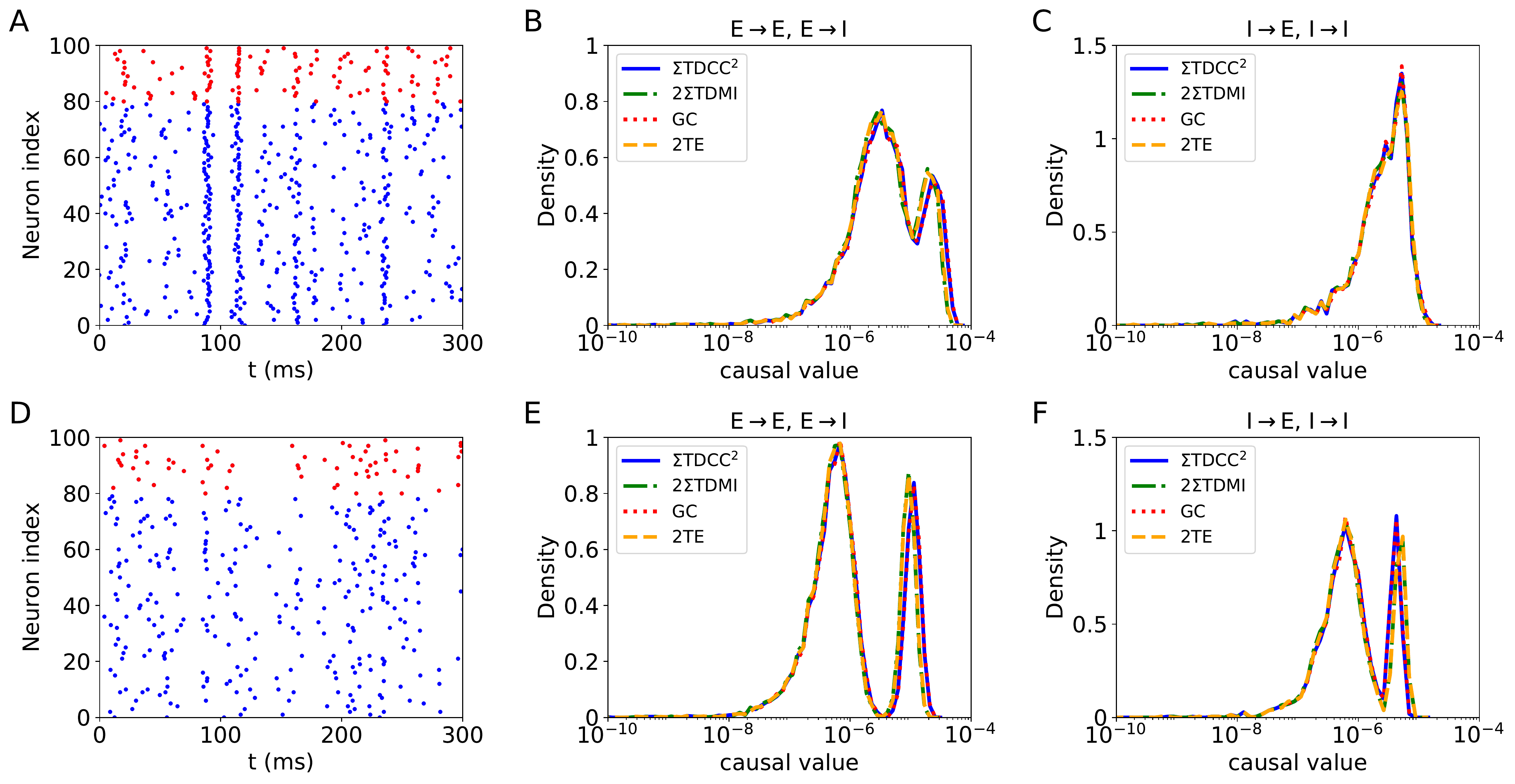}
		\par\end{centering}
	\caption{Performance of the causality measures in an HH network of 80 excitatory and 20 inhibitory neurons randomly connected with probability 0.25.  
		(Top panel): Results using the original spike train.
		(Bottom panel): Results using the spike train from desynchronized sampling.
		(\textit{A, D}) Raster plot of the neuronal firing. The blue and red dots indicate the excitatory and inhibitory neurons, respectively. (\textit{B, E})
		The distribution of causal values
		of each pair of neurons with the presynaptic neuron being excitatory. (\textit{C, F}) The distribution of causal values
		of each pair of neurons with the presynaptic neuron being inhibitory. The colors and parameters are the same as those in Fig.
		\ref{fig:SI ROC}.  The AUC values for (B, C, E, F) are 0.96, 0.71, 1, and 0.99, respectively. 
		The coupling strength is $S^E=0.02$ $\textrm{mS\ensuremath{\cdot}cm}^{-2}$ and $S^I=0.08$ $\textrm{mS\ensuremath{\cdot}cm}^{-2}$. The correlation coefficient of the Poisson input to each neuron
		is generated to be 0.15. \label{fig:SI EI network}}
\end{figure}

\begin{figure}[H]
	%	\centering{}\includegraphics[width=1\textwidth]{figs/SI_N=100_IF_L}
	\centering{}\includegraphics[width=1\textwidth]{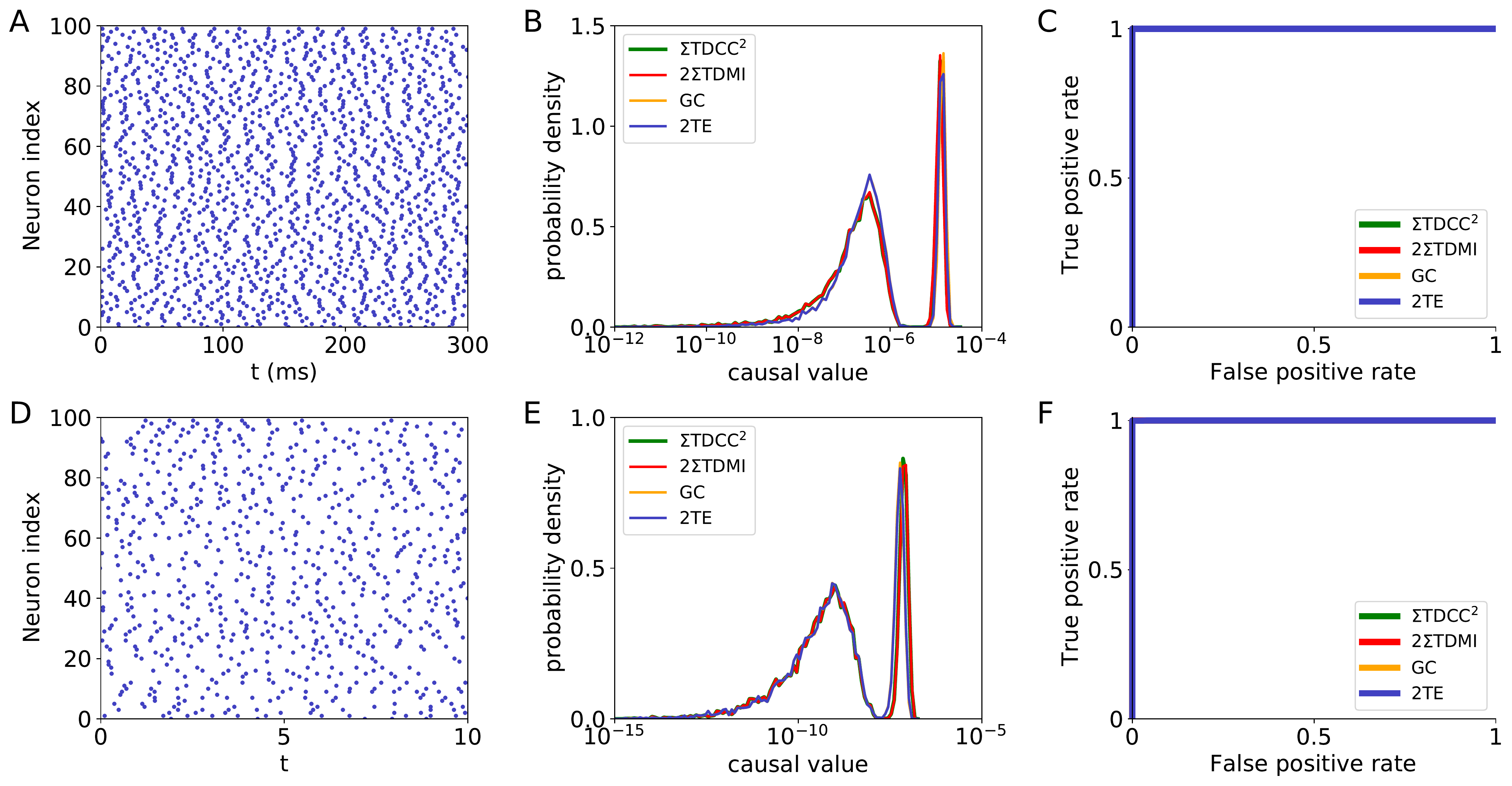}
	\caption{(Top panel): Performance of the causality measures in an I\&F network of 100 excitotary neurons randomly
		connected with probability 0.25. (\textit{A}) Raster plot of the neuronal firing. (\textit{B}) The distribution of causal values
		of each pair of neurons in the whole network. (\textit{C}) ROC curves of the LIF network in (\textit{A}) with $\text{AUC}=1$.
		The parameters are set as $f=0.05,\nu=1.5$ KHz, $S=0.01$, $\Delta t=0.5$
		ms, $m=1$, and orders $k=l=1$. (Bottom panel): Performance of the causality measures in a Lorenz network of 100 nodes randomly connected as in (\textit{A}).
		(\textit{D}) Raster plot of the pulse firing.
		(\textit{E}) The distribution of causal values
		of each pair of nodes in the whole network. (\textit{F}) ROC curves
		of the Lorenz network in (\textit{D}) with $\text{AUC}=1$. The parameters are set as $S=0.25$,
		$\Delta t=0.01$, $m=1$, and orders $k=l=1$. The ROC
		curves for TDCC, TDMI, GC, and TE in (\textit{C}) and (\textit{F})
		overlap with each other. The colors are the same as those in Fig.
		\ref{fig:SI ROC}. \label{fig:SI N=00003D100}}
\end{figure}

\begin{small}	
\bibliographystyle{unsrt}
\bibliography{reference_PNAS}	
\end{small}

%%%%%%%%%%%%%%%%%%%%%%%%%%%%%%%%%%